\documentclass[11pt]{article}
\usepackage{latexsym}
\usepackage{tabularx,booktabs,multirow,delarray,array}
\usepackage{graphicx,amssymb,amsmath,amssymb}

\aboverulesep=0pt
\belowrulesep=0pt

\oddsidemargin=0.0in \evensidemargin=0.0in \headheight=0.0in
\topmargin=-0.40in %0.35
\textheight=9.0in %9.1in
\textwidth=6.5in   %6.55in

%\usepackage{listings}

%\lstloadlanguages{C, csh, make} \lstset{
%    language=C,tabsize=4,
%   keepspaces=true,
%   mathescape=true,
%   breakindent=22pt,
%   numbers=left,stepnumber=1,numberstyle=\footnotesize,
%   basicstyle=\normalsize,
%   showspaces=false,
%   flexiblecolumns=true,
%   breaklines=true,
%   breakautoindent=true,breakindent=1em,
%   escapeinside={/*@}{@*/}
%}

%\newcommand{\bcls}{\mbox{$B\!C\!L\!S$}}

\newcommand{\bcls}{\mbox{BCLS}}

%\def\error{{\em min}-$\epsilon$}
%\def\num{{\em min}-\#}
%\def\ssf{{\bf SF}}

%=======================my definitions============================
\def\calL{\mathcal{L}}
\def\calI{\mathcal{I}}

%=======================End of my definitions======================

\begin{document}

\baselineskip=14.0pt

\title{
\vspace*{-0.55in} Algorithms on Minimizing the Maximum Sensor Movement
for Barrier Coverage of a Linear
Domain}

\author{
Danny Z. Chen\thanks{Department of Computer Science and Engineering,
University of Notre Dame, Notre Dame, IN 46556, USA.
E-mail: {\tt dchen@nd.edu}. The research of Chen was supported in part by NSF under Grants CCF-0916606 and CCF-1217906.
%Fax: {\tt 1-574-631-9260}.
}
\hspace*{0.2in} Yan Gu\thanks{Department of Computer Science and Technology, Tsinghua University, Beijing 100084, China. E-mail: {\tt henryy321@gmail.com}.}
\hspace*{0.2in} Jian Li\thanks{Institute for Interdisciplinary Information Sciences (IIIS), Tsinghua University, Beijing 100084, China. E-mail: {\tt lijian83@mail.tsinghua.edu.cn}.}
\hspace*{0.2in} Haitao Wang\thanks{Corresponding author. Department of Computer Science, Utah
State University, Logan, UT 84322, USA. E-mail: {\tt
haitao.wang@usu.edu}.
This work was partially done while the
author was visiting IIIS at Tsinghua University.}
}

\date{}

\maketitle

\thispagestyle{empty}

\newtheorem{lemma}{Lemma}
\newtheorem{theorem}{Theorem}
\newtheorem{corollary}{Corollary}
\newtheorem{fact}{Fact}
\newtheorem{definition}{Definition}
\newtheorem{observation}{Observation}
\newtheorem{condition}{Condition}
\newtheorem{property}{Property}
\newtheorem{claim}{Claim}
\newenvironment{proof}{\noindent {\textbf{Proof:}}\rm}{\hfill $\Box$
\rm}
\newtheorem{Prob}{Problem}
\newtheorem{Theo}{Theorem}
\newtheorem{Lem}{Lemma}
\newtheorem{Coro}{Corollary}
\newtheorem{Def}{Definition}
\newtheorem{Obs}{Observation}

\pagestyle{plain}
\pagenumbering{arabic}
\setcounter{page}{1}

\vspace*{-0.3in}
\begin{abstract}
In this paper, we study the problem of moving $n$ sensors on a line
to form a barrier coverage of a specified segment of the line such that the
maximum moving distance of the sensors is minimized.
Previously, it was an open question whether this problem on
sensors with arbitrary sensing ranges is solvable in
polynomial time. We settle this open question positively by giving an
$O(n^2\log n)$ time algorithm.
For the special case when all sensors have the same-size sensing range,
the previously best solution takes $O(n^2)$ time. We present an
$O(n\log n)$ time algorithm for this case; further, if all sensors are
initially located on the coverage segment, our algorithm takes
$O(n)$ time. Also, we extend
our techniques to the cycle version of the problem where the barrier coverage
is for a simple cycle and the sensors are allowed to move only along
the cycle. For sensors with the same-size
sensing range, we solve the cycle version in $O(n)$ time, improving
the previously best $O(n^2)$ time solution.
\end{abstract}

\section{Introduction}

A Wireless Sensor Network (WSN) uses a large number of sensors to
monitor some surrounding environmental phenomena
\cite{ref:AkyildizWi02}.
Each sensor is equipped with a sensing device with limited battery-supplied energy.
The sensors process data obtained and forward the data to a base station.
Intrusion detection and border surveillance constitute a
major application category for WSNs. A
main goal of these applications is to detect intruders as they
cross the boundary of a region or domain. For example, research efforts were made to
extend the scalability of WSNs to the monitoring of international
borders \cite{ref:Hu08,ref:KumarBa07}.
Unlike the traditional {\it full coverage}
\cite{ref:LiLo08,ref:YangSc07,ref:ZouA05} which requires an entire target
region to be covered by the sensors,
the {\it barrier coverage}
\cite{ref:BhattacharyaOp09,ref:ChenDe07,ref:CzyzowiczOn09,ref:CzyzowiczOn10,ref:KumarBa07}
only seeks to cover the perimeter of the region to ensure that any
intruders are detected as they cross the region border.
Since barrier coverage requires fewer sensors, it is often preferable to
full coverage. Because sensors have limited
battery-supplied energy, it is desired to minimize their movements.
In this paper, we study a one-dimensional barrier coverage problem
where the barrier is for a (finite) line segment and the sensors are
initially located on the line containing the barrier segment
and allowed to move on the line. As discussed in the previous work
\cite{ref:CzyzowiczOn09,ref:CzyzowiczOn10,ref:MehrandishMi11}
and shown in this paper, barrier coverage even for 1-D domains
poses some challenging algorithmic issues. Also, our 1-D solutions
may be used in solving more general problems. For example, if
the barrier is sought for a simple polygon, then we may consider
each of its edges separately and apply our algorithms to each edge.

In our problem, each sensor has a {\em sensing range} (or
{\em range} for short) and we want to move the sensors to form a
coverage for the barrier such that the maximum sensor movement is
minimized. We present efficient algorithms for this problem,
improving the previous work and settling an open question.
Also, we extend our techniques to the cycle version where the
barrier is for a simple cycle and the sensors are allowed to move only along
the cycle.

\subsection{Problem Definitions, Previous Work, and Our Results}

Denote by $B=[0,L]$ the barrier that is a line segment from $x=0$ to
$x=L>0$ on the $x$-axis. A set $S=\{s_1,s_2,\ldots,s_n\}$
of $n$ mobile sensors is initially located on the $x$-axis. Each sensor
$s_i\in S$ has a range $r_i>0$ and is located at the coordinate $x_i$ of
the $x$-axis.
We assume $x_1\leq x_2\leq \cdots\leq x_n$. If a sensor $s_i$
is at the position $x'$, then we say $s_i$ {\em covers}
the interval $[x'-r_i,x'+r_i]$, called the
{\em covering interval} of $s_i$. Our problem is to find a set of
{\em destinations} on the $x$-axis, $\{y_1,y_2,\ldots,y_n\}$, for the sensors
(i.e., for each $s_i\in S$, move $s_i$ from $x_i$ to $y_i$)
such that each point on the barrier $B$ is covered by at least one sensor
%, i.e., $B\subseteq \cup_{i=1}^n[y_i-r_i,y_i+r_i]$,
and the maximum moving distance of
the sensors (i.e., $\max_{1\leq i\leq n}\{|x_i-y_i|\}$) is minimized.
We call this problem the {\em barrier coverage on a line segment}, denoted
by $\bcls$.  We assume $2\cdot \sum_{i=1}^nr_i\geq L$
(otherwise, a barrier coverage for $B$ is not possible).

The {\em decision version} of \bcls\ is defined as follows.
Given a value $\lambda\geq 0$, determine whether there is a
{\em feasible solution} in which each point of $B$ is covered by at least one sensor and the moving distance of each sensor is at
most $\lambda$. The decision version characterizes a problem model
in which the sensors have a limited energy and we want to know whether
their energy is sufficient to move and form a barrier coverage.

If the ranges of all sensors are the same (i.e., the $r_i$'s are all equal),
then we call it the {\em uniform case} of \bcls. When the sensors have
arbitrary ranges, we call it the {\em general case}.

The \bcls\ problem has been studied before. The uniform case has been
solved in $O(n^2)$ time \cite{ref:CzyzowiczOn09}. An $O(n)$ time algorithm is also given in \cite{ref:CzyzowiczOn09} for the decision version of the uniform case.
However, it has been open whether the general case is solvable in
polynomial time \cite{ref:CzyzowiczOn09}.

In this paper, we settle the open problem on the general \bcls,
presenting an $O(n^2\log n)$ time algorithm for it.
We also solve the decision version of the general \bcls\
in $O(n\log n)$ time. Since this is a basic
problem on sensors and intervals and our algorithm is the
first-known polynomial time solution for it, we expect our results and
techniques to be useful for other related problems. Further, for the uniform
case, we derive an $O(n\log n)$ time algorithm, improving the
previous $O(n^2)$ time solution \cite{ref:CzyzowiczOn09}; if all
sensors are initially on $B$, our algorithm runs in $O(n)$ time.

In addition, we consider the {\em simple cycle barrier coverage}
where the barrier is represented as a simple cycle and the $n$
sensors are initially on the cycle and are allowed to move only along
the cycle. The goal is to move the sensors
to form a barrier coverage and minimize the maximum sensor movement.
If all sensors have the same range,
Mehrandish \cite{ref:MehrandishOn11} gave an $O(n^2)$ time algorithm,
and we present an $O(n)$ time solution in this paper.

\subsection{Related Work}

Besides the results mentioned above, an $O(n)$ time $2$-approximation
algorithm for the uniform \bcls\ was also given in \cite{ref:CzyzowiczOn09} and a
variation of the decision version of the general \bcls, where one sensor is required to be in a given position in the final solution, is shown to be NP-hard \cite{ref:CzyzowiczOn09}; however, it is not known whether the general \bcls\ is NP-hard. Further, in the special case where the {\em order preserving} property holds, i.e., $y_1\leq y_2\cdots\leq y_n$ holds in an optimal solution, polynomial time algorithms exist for the general \bcls\  \cite{ref:CzyzowiczOn09}.
Additional results were also given in \cite{ref:CzyzowiczOn09} for
the case $2\cdot\sum_{i=1}^nr_i<L$
(although in this case $B$ cannot be entirely covered).

Mehrandish {\em et al.}~\cite{ref:MehrandishOn11,ref:MehrandishMi11}
also considered the line segment barrier, but unlike the \bcls\ problem,
they intended to use the minimum number of
sensors to form a barrier coverage, which they proved to be
NP-hard. But, if all sensors have the same range, polynomial time
algorithms were possible \cite{ref:MehrandishOn11,ref:MehrandishMi11}.
Another study of the line segment barrier \cite{ref:CzyzowiczOn10}
aimed to minimize the sum of the moving distances of all
sensors; this problem is NP-hard \cite{ref:CzyzowiczOn10}, but
is solvable in polynomial time when all sensors have the same
range \cite{ref:CzyzowiczOn10}. In addition, Li {\em et
al.}~\cite{ref:LiMi11} considers the linear coverage problem which
aims to set an energy for each sensor to form a coverage such that the
cost of all sensors is minimized. There \cite{ref:LiMi11}, the sensors
are not allowed to move, and the more energy a sensor has, the larger
the covering range of the sensor and the larger the cost of the
sensor.

Bhattacharya {\em et al.}~\cite{ref:BhattacharyaOp09} studied a
2-D barrier coverage problem in which the barrier is a
circle and the sensors, initially located inside the circle, are
moved to the circle to form a coverage such that
the maximum sensor movement is minimized; the ranges of the
sensors are not explicitly specified but the destinations of the sensors are
required to form a regular $n$-gon on the circle. Subsequent improvements
of the results in \cite{ref:BhattacharyaOp09} have been
made \cite{ref:ChenOP11,ref:TanNe10}.
%Bhattacharya {\em et al.} \cite{ref:BhattacharyaOp09} also gave algorithms for the
%corresponding problem versions where the barrier is a line segment and the destinations of the sensors on the segment are
%The corresponding simple polygon version in which the protected region is
%represented by a simple polygon is also considered in
%\cite{ref:BhattacharyaOp09}.
In addition, Bhattacharya {\em et al.}~\cite{ref:BhattacharyaOp09}
presented some results on the corresponding
min-sum problem version (minimizing the sum of the
moving distances of all sensors); further improvement was
also given in \cite{ref:ChenOP11,ref:TanNe10}.

Some other barrier coverage problems have been studied.
For example, Kumar {\em et al.}~\cite{ref:KumarBa07} proposed
algorithms for determining whether a region is barrier covered after
the sensors are deployed. They considered both the deterministic
version (the sensors are deployed deterministically) and the
randomized version (the sensors are deployed randomly), and aimed
to determine a barrier coverage with high probability.
Chen {\it et al.}~\cite{ref:ChenDe07} introduced a local barrier coverage
problem in which individual sensors determine the barrier coverage locally.

\subsection{An Overview of Our Approaches}

For any problem we consider, let $\lambda^*$ denote the maximum sensor
movement in an optimal solution.

For the uniform \bcls, as shown in \cite{ref:CzyzowiczOn09},
a useful property is that there always exists an
{\em order preserving} optimal solution, i.e., the order of the
sensors in the optimal solution is the same as that in the input.
Based on this property, the previous $O(n^2)$ time algorithm
\cite{ref:CzyzowiczOn09} covers $B$ from left to right;
in each step, it picks the next sensor and re-balances the current maximum
sensor movement. In this paper, we take a very different approach.
With the order preserving
property, we determine a set $\Lambda$ of candidate values for
$\lambda^*$ with $\lambda^*\in \Lambda$. Consequently, by using
the decision algorithm, we can find $\lambda^*$ in $\Lambda$. But,
this approach may be inefficient since $|\Lambda|=\Theta(n^2)$.
To reduce the running time, our
strategy is not to compute the set $\Lambda$ explicitly.
Instead, we compute an element in $\Lambda$ whenever we need it. A
possible attempt would be to first find a sorted order for
the elements of $\Lambda$ or (implicitly) sort the elements of
$\Lambda$, and then obtain $\lambda^*$ by binary search.
However, it seems not easy to (implicitly) sort the elements of $\Lambda$.
Instead, based on several new observations, we manage to find a way to
partition the elements of $\Lambda$ into $n$ sorted lists, each list containing
$O(n)$ elements. Next, by using a technique
called {\em binary search on sorted arrays} \cite{ref:ChenRe11}, we are able to
find $\lambda^*$ in $\Lambda$ in $O(n\log n)$ time.
For the special case when all sensors are initially located on $B$, a
key observation we make is that $\lambda^*$ is precisely the maximum
value of the candidate set $\Lambda$. Although $\Lambda=\Theta(n^2)$,
based on new observations, we show that its maximum value can be computed in $O(n)$ time.
%Note that the above key observation does not hold if some sensors are not
%initially located on $B$.

For the general \bcls, as indicated in \cite{ref:CzyzowiczOn09}, the
order preserving property no longer holds. Consequently, our
approach for the uniform case does not work. The main difficulty of this
case is that we do not know the order of the sensors
appeared in an optimal solution. Due to this difficulty, no polynomial
time algorithm was known before for the general \bcls. To solve this problem,
we first develop a greedy algorithm for the decision version of the general \bcls.
After $O(n\log n)$ time preprocessing, our decision algorithm takes
$O(n)$ time for any value $\lambda$. If $\lambda\geq \lambda^*$,
implying that there exists a feasible solution, then our decision
algorithm can determine the order of sensors in a feasible solution for
covering $B$. For the general \bcls, we seek to simulate the
behavior of the decision algorithm on $\lambda=\lambda^*$. Although we do
not know the value $\lambda^*$, our algorithm determines the same sensor
order as it would be obtained by the decision algorithm on the
value $\lambda=\lambda^*$. To this end, each step of the algorithm uses
our decision algorithm as a decision procedure. The idea is somewhat
similar to parametric search \cite{ref:ColeSl87,ref:MegiddoAp83}, and here
we ``parameterize" our decision algorithm. However, we should point out a few
differences. First, unlike the typical parametric search \cite{ref:ColeSl87,ref:MegiddoAp83},
our approach does not involve any parallel scheme and is practical.
Second, normally, if a problem can be solved by parametric search, then
there also exist other (simpler) polynomial time algorithms for the problem
although they might be less efficient than the parametric search solution
(e.g., the slope selection problem \cite{ref:ColeAn89}). In contrast, for
our general \bcls\ problem, so far we have not found any other (even straightforward)
polynomial time algorithm.

In addition, our $O(n)$ time algorithm for the simple cycle barrier coverage
is a generalization of our approach for the special case of the uniform \bcls\
when all sensors are initially located on $B$.

For ease of exposition, we assume that initially no two sensors are located
at the same position (i.e., $x_i\neq x_j$ for any $i\neq j$), and the covering
intervals of any two different sensors do not share a common endpoint.
Our algorithms can be easily generalized to the general situation.

The rest of the paper is organized as follows. In Section
\ref{sec:general}, we describe our algorithms for the general \bcls. In
Section \ref{sec:uniform}, we present our algorithms for the uniform \bcls.
Our results for the simple cycle barrier coverage are discussed in
Section \ref{sec:circle}. Section \ref{sec:conclusions} finally concludes the paper.

\section{The General Case of \bcls}
\label{sec:general}

In this section, we present our algorithms for the general \bcls\
problem. Previously, it was an open problem whether the general \bcls\ can be
solved in polynomial time. The main difficulty is that we do not know the
order of the sensors in an optimal solution. Our main effort is for resolving
this difficulty, and we derive an $O(n^2\log n)$ time algorithm for
the general \bcls.

We first give our algorithm for the decision version (in Section
\ref{sec:decision}), which is crucial for solving the general \bcls\
(in Section \ref{sec:algogeneral}) that we refer to as the {\em
optimization version} of the problem.

%In fact, our optimization algorithm in Section \ref{sec:algogeneral}
%has a similar spirit to the parametric
%search \cite{ref:ColeSl87,ref:MegiddoAp83} by simulating the decision
%algorithm in Section \ref{sec:decision},
%but it does not involve any parallel scheme (and is practical).

For each sensor $s_i\in S$, we call the right (resp., left) endpoint
of the covering interval of $s_i$ the {\em right}
(resp., {\em left}) {\it extension} of $s_i$. Each of the right and
left extensions of $s_i$ is an {\em extension} of $s_i$.
Denote by $p(x')$ the point on the $x$-axis whose coordinate is $x'$,
and denote by $p^+(x')$ (resp., $p^-(x')$) a point to the right
(resp., left) of $p(x')$ and infinitely close to $p(x')$. The concept
of $p^+(x')$ and $p^-(x')$ is only used to explain the algorithms, and
we never need to find such a point explicitly in the algorithm.
Let $\lambda^*$ denote the maximum sensor moving distance in an optimal
solution for the optimization version of the general \bcls\ problem. Note that
we can easily determine whether $\lambda^*=0$,
say, in $O(n\log n)$ time. Henceforth, we assume $\lambda^*> 0$.

\subsection{The Decision Version of the General \bcls}
\label{sec:decision}

Given any value $\lambda$, the decision version is to determine whether
there is a feasible solution in which the maximum sensor movement is at
most $\lambda$. Clearly, there is a feasible solution if and only
if $\lambda\geq \lambda^*$. We show that after $O(n\log n)$
time preprocessing, for any $\lambda$, we can determine whether
$\lambda\geq \lambda^*$ in $O(n)$ time.
We explore some properties of a feasible solution in Section
\ref{subsec:preli}, describe our decision algorithm in Section
\ref{subsec:desc}, argue its correctness in Section
\ref{subsec:correct}, and discuss its implementation in
Section \ref{subsec:imple}. In Section \ref{subsec:more}, we show that
by extending the algorithm, we can also determine whether
$\lambda>\lambda^*$ in the same time bound; this is particularly
useful to our optimization algorithm in Section \ref{sec:algogeneral}.

\subsubsection{Preliminaries}
\label{subsec:preli}

By a sensor {\em configuration}, we refer to a specification of where
each sensor $s_i\in S$ is located. By this definition, the input is a
configuration in which each sensor $s_i$ is located at $x_i$.
%For clarity in the later discussion,
The {\em displacement} of a sensor in a configuration $C$ is the distance
between the position of the sensor in $C$ and its original position in the
input.  A configuration $C$ is a {\em feasible solution} for the distance
$\lambda$ if the sensors in $C$ form a barrier coverage of $B$ (i.e.,
the union of the covering intervals of the sensors in $C$ contains $B$)
and the displacement of each sensor is at most $\lambda$.
In a feasible solution, a subset $S'\subseteq S$ is called a {\em
solution set} if the sensors in $S'$ form a barrier coverage;
of course, $S$ itself is also a solution
set. A feasible solution may have multiple solution sets. A sensor $s_i$
in a solution set $S'$ is said to be {\em critical} with respect to
$S'$ if $s_i$ covers a
point on $B$ that is not covered by any other sensor in $S'$.
If every sensor in $S'$ is critical, then $S'$ is called a {\em
critical set}.

Given any value $\lambda$, if $\lambda\geq\lambda^*$, our decision
algorithm will find a critical set and determine the order in which the sensors
of the critical set will appear in a feasible solution for $\lambda$.
For the purpose of giving some intuition and
later showing the correctness of our algorithm, we first
explore some properties of a critical set.
%These properties will also be useful later in Section \ref{sec:algogeneral}.

Consider a critical set $S^c$. For each sensor $s\in S^c$, we call the
set of points on $B$ that are covered by $s$ but not covered by any
other sensor in $S^c$ the {\em exclusive coverage} of $s$.
%We have the following observation.

\begin{observation}\label{obser:uniquecov}
The exclusive coverage of each sensor in a critical set $S^c$ is a continuous portion of
the barrier $B$.
\end{observation}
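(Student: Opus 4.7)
The plan is to argue by contradiction: assume that for some sensor $s \in S^c$ its exclusive coverage $E$ is disconnected on $B$, and deduce that another sensor in $S^c$ cannot really be critical, contradicting the definition of a critical set.

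First, I would observe that $E$ is obtained from the single interval (the intersection of $s$'s covering interval with $B$) by deleting the finitely many covering intervals of the other sensors in $S^c$. If $E$ has at least two components, then from two adjacent components I can extract witnesses $v_1 < u_2$, both in $E$, such that the open interval $(v_1, u_2)$ lies inside $s$'s covering interval (and inside $B$) but is disjoint from $E$. Since $S^c$ is a solution set, it covers $B$, so every point of $(v_1, u_2)$ must be covered by some sensor $s^* \in S^c \setminus \{s\}$.

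Next I would pick any such $s^*$ whose covering interval $I^*$ meets the gap $(v_1, u_2)$, and claim that $I^*$ is contained in $s$'s covering interval. This is the crux, and it relies only on $I^*$ being a connected interval: if $I^*$ extended to the right past the right endpoint of $s$'s covering interval, then, containing both a point in $(v_1, u_2)$ (hence strictly less than $u_2$) and a point beyond all of $s$'s coverage, it would by connectedness have to contain $u_2$ itself; this contradicts $u_2 \in E$, which forbids $u_2$ from being covered by any sensor in $S^c$ other than $s$. A symmetric argument rules out $I^*$ extending past the left endpoint of $s$'s covering interval. Hence $I^*$ is contained in $s$'s covering interval, so every point on $B$ covered by $s^*$ is already covered by $s$; this makes $s^*$'s exclusive coverage empty and contradicts the criticality of $s^*$.

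The step I expect to require the most care is extracting the witnesses $v_1, u_2 \in E$ cleanly from the mere assumption of disconnectedness: one must verify that the boundary points of components of $E$ actually lie in $E$, which follows from the standing assumption in the paper that no two covering intervals share a common endpoint, so no other covering interval can be tangent at $v_1$ or $u_2$. Once this is in hand, everything else is a short connectedness argument.
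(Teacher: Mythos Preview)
Your approach is the same as the paper's: assume the exclusive coverage of $s$ is disconnected, locate a sensor $s^*$ whose covering interval sits in a gap between two components, argue that $I^*$ is trapped inside $I_s$, and conclude $s^*$ is not critical. The paper's proof is simply a two-sentence version of this.

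There is, however, a slip in exactly the step you flagged as delicate. The boundary points $v_1,u_2$ of adjacent components of $E$ are \emph{not} in $E$. Since every other covering interval is closed, $E = (I_s\cap B)\setminus\bigcup_{s'\neq s} I_{s'}$ is relatively open in $I_s\cap B$; the point $v_1=\sup C_1$ is the left endpoint of some other sensor's closed interval $I_{s'}$, hence $v_1\in I_{s'}$ and $v_1\notin E$ (and symmetrically for $u_2$). The distinct-endpoints assumption in the paper only tells you that $v_1$ is an endpoint of a \emph{unique} other covering interval, not of none, so it does not rescue the claim. Consequently you cannot simultaneously have $v_1,u_2\in E$ and $(v_1,u_2)\cap E=\emptyset$.

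The repair is immediate: take $v_1$ and $u_2$ strictly inside the two adjacent components (so that $v_1,u_2\in E$ genuinely holds), and pick $s^*$ whose interval meets the nonempty gap between the components (which is still contained in $(v_1,u_2)$ and disjoint from $E$). Your connectedness argument then goes through verbatim: if $I^*$ reached past the right endpoint of $I_s$ it would contain $u_2\in E$, a contradiction.
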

\begin{proof}
Assume to the contrary the exclusive
coverage of a sensor $s\in S^c$ is not a continuous portion of $B$. Then there
must be at least one sensor $s' \in S^c$ whose covering interval is
between two consecutive continuous portions of the exclusive coverage of $s$.
But that would mean $s'$ is not critical since the covering interval of
$s'$ is contained in that of $s$. Hence, the observation holds.
\end{proof}

For a critical set $S^c$ in a feasible solution $SOL$,
we define the {\em cover order} of the
sensors in $S^c$ as the order of these sensors in $SOL$ such that
their exclusive coverages are from left to right.

\begin{observation}\label{obser:standardorder}
The cover order of the sensors of a critical set $S^c$ in a feasible solution
$SOL$ is consistent with the left-to-right order of the positions of these
sensors in $SOL$. Further, the cover order is also consistent with
the order of the {\em right} (resp., left) extensions of these sensors
in $SOL$.
\end{observation}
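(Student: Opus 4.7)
The plan is to prove all three orderings (positions, right extensions, left extensions) simultaneously by a single contradiction argument between two sensors at a time; the three-way consistency extends from pairs by transitivity along the cover order. Let $s$ and $s'$ be two sensors of $S^c$ with $s$ preceding $s'$ in the cover order defined by $SOL$, and let their positions in $SOL$ be $y$ and $y'$, with ranges $r$ and $r'$ respectively.

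The first step is to invoke Observation \ref{obser:uniquecov} so that the exclusive coverages of $s$ and $s'$ are continuous subintervals $I$ and $I'$ of $B$; by the definition of the cover order, $I$ lies entirely to the left of $I'$. Pick any point $p \in I$ and any point $q \in I'$, so that $p < q$. By the definition of exclusive coverage, $p$ lies in the covering interval $[y-r,y+r]$ of $s$ but not in $[y'-r',y'+r']$ of $s'$, and symmetrically $q$ lies in $[y'-r',y'+r']$ but not in $[y-r,y+r]$.

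The key step is to convert these in/out-of-interval facts, together with $p<q$, into the three claimed inequalities. Since $p<q\le y'+r'$, the possibility $p>y'+r'$ is ruled out, so $p<y'-r'$; combined with $y-r\le p$, this gives $y-r < y'-r'$, i.e., the left extension of $s$ is strictly to the left of the left extension of $s'$. By the mirror argument, using $y-r\le p<q$ and $q\notin [y-r,y+r]$, we get $q>y+r$; combined with $q\le y'+r'$, this yields $y+r<y'+r'$, giving the right-extension ordering. Adding the two inequalities produces $2y<2y'$, hence $y<y'$, establishing the position ordering.

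The step I expect to be the mildest obstacle is the very first deduction in the previous paragraph, because the $\notin$-conditions are disjunctions; the argument needs the auxiliary assumption, stated in the paper, that two distinct sensors do not share a covering-interval endpoint, which makes the inequalities strict and eliminates the unwanted disjuncts cleanly. Once the pairwise statement is established, applying it to each pair of sensors in $S^c$ gives that the three orderings (positions, right extensions, left extensions) all coincide with the cover order, which is exactly the observation.
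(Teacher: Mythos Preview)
Your proof is correct and follows essentially the same approach as the paper: pick points in the exclusive coverages, use the in/out-of-interval conditions to rule out one disjunct in each case, and deduce the strict inequalities on the left and right extensions (from which the position inequality follows by addition). The only cosmetic difference is that you select one point from each exclusive coverage and argue both extension inequalities directly, whereas the paper selects a single point from the right sensor's exclusive coverage, derives one inequality, and then appeals to a symmetric argument for the other; your concern about needing the distinct-endpoints assumption is unnecessary here, since the $\notin$-conditions already give strict inequalities.
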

\begin{proof}
Consider any two sensors $s_i$ and $s_j$ in $S^c$ with ranges $r_i$
and $r_j$, respectively. Without loss of
generality, assume $s_i$ is to the left of $s_j$ in the cover order,
i.e., the exclusive coverage of $s_i$ is to the left of that of $s_j$ in
$SOL$.
Let $y_i$ and $y_j$ be the positions of $s_i$ and $s_j$ in $SOL$,
respectively. To prove the observation, it suffices to show
$y_i<y_j$, $y_i+r_i<y_j+r_j$, and $y_i-r_i<y_j-r_j$.

Let $p$ be a point in the exclusive coverage of $s_j$. We also use $p$ to
denote its coordinate on the $x$-axis.  Then $p$ is not
covered by $s_i$, implying either $p>y_i+r_i$ or $p<y_i-r_i$. But,
the latter case cannot hold (otherwise, the exclusive coverage of $s_i$
would be to the right of that of $s_j$). Since $p$ is covered by $s_j$,
we have $p\leq y_j+r_j$. Therefore, $y_i+r_i<p\leq y_j+r_j$. By using
a symmetric argument, we can also prove $y_i-r_i<y_j-r_j$
(we omit the details). Clearly, the two inequalities $y_i+r_i<y_j+r_j$
and $y_i-r_i<y_j-r_j$ imply $y_i<y_j$.
%since each sensor is at the middle of its covering interval.
The observation thus holds.
\end{proof}

An interval $I$ of $B$ is called a {\em left-aligned interval} if the left
endpoint of $I$ is at $0$ (i.e., $I$ is of the form $[0,x']$ or
$[0,x')$). A set of sensors is said to be in {\em attached positions}
%in a configuration
if the union of their covering intervals is a
continuous interval of the $x$-axis whose length is equal to the sum
of the lengths of these covering intervals. Two intervals of the
$x$-axis {\em overlap} if they intersect each other (even at only one
point).
%share at least one common point.

\subsubsection{The Algorithm Description}
\label{subsec:desc}

Initially, we move all sensors of $S$ to the right by the distance
$\lambda$, i.e., for each $1\leq i\leq n$, we move $s_i$ to the position
$x_i'=x_i+\lambda$. Let $C_0$ denote the resulting configuration.
Clearly, there is a feasible solution for $\lambda$ if and only if
we can move the sensors in $C_0$ to the left by at most
$2\lambda$ to form a coverage of $B$. Thus, henceforth we only
need to consider moving the sensors to the left. Recall that we have
assumed that the extensions of any two distinct sensors are different; hence in
$C_0$, the extensions of all sensors are also different.

Our algorithm takes a greedy approach. It seeks
to find sensors to cover $B$ from left to right, in
at most $n$ steps. If $\lambda\geq\lambda^*$, the
algorithm will end up with a critical set $S^c$ of sensors along with the
destinations for all these sensors. In theory, the other sensors in $S\setminus
S^c$ can be anywhere such that their displacements are at most
$\lambda$; but in the solution found by our algorithm, they are
at the same positions as in $C_0$. If a sensor is at the same
position as in $C_0$, we say it {\em stands still}.

In step $i$ (initially, $i=1$), using the configuration $C_{i-1}$
produced in step $i-1$ and based on certain criteria, we find
a sensor $s_{g(i)}$ and determine its destination $y_{g(i)}$, where
$g(i)$ is the index of the sensor in $S$ and
$y_{g(i)}\in [x'_{g(i)}-2\lambda,x'_{g(i)}]$.
We then move the sensor $s_{g(i)}$ to $y_{g(i)}$ to obtain a new configuration
$C_i$ from $C_{i-1}$ (if $y_{g(i)}=x'_{g(i)}$, then we need not move
$s_{g(i)}$, and $C_i$ is the same as $C_{i-1}$).
Let $R_i=y_{g(i)}+r_{g(i)}$ (i.e., the right extension of $s_{g(i)}$ in $C_i$).
Assume $R_0=0$.
Let $S_i=S_{i-1}\cup\{s_{g(i)}\}$ ($S_0=\emptyset$ initially).
We will show that the sensors in $S_i$ together cover the left-aligned
interval $[0,R_i]$.
If $R_i\geq L$, we have found a feasible solution with a critical
set $S^c=S_i$, and terminate the algorithm. Otherwise, we proceed to
step $i+1$.  Further, it is possible that a desired sensor
$s_{g(i)}$ cannot be found, in which case we
terminate the algorithm and report $\lambda<\lambda^*$.
Below we give the details, and in particular, discuss
how to determine the sensor $s_{g(i)}$ in each
step.

Before discussing the first step, we provide some intuition.
Let $S_l$ consist of the sensors whose right extensions are at most
$0$ in $C_0$. We claim that since $L>0$, no sensor in
$S_l$ can be in a critical set of a feasible solution if
$\lambda^*\leq \lambda$.
Indeed, because all sensors have been moved to their rightmost possible
positions in $C_0$, if no sensor in $S_l$ has a right
extension at $0$ in $C_0$, then the claim trivially holds;
otherwise, suppose $s_t$ is such a sensor. Assume to the contrary that
$s_t$ is in a critical set $S^c$. Then $p(0)$ is the only point on $B$ that
can be covered by $s_t$. Since $L>0$, there must be another sensor in $S^c$
that also covers $p(0)$ (otherwise, no sensor in $S^c$ would cover
the point $p^+(0)$). Hence, $s_t$ is not critical with respect to $S^c$,
a contradiction. The claim thus follows. Therefore, we
need not consider the sensors in $S_l$ since they do not
help in forming a feasible solution.

%the sensors in $S_l$. If the sensors in $S\setminus S_l$ cannot form a
%barrier coverage, then due to $L>0$, there must be an interval of
%$B$ that is uncovered and that interval contains infinitely many
%points, and thus using sensors in $S_l$ to cover the only point $p(0)$
%does not help. Therefore, we do not need to the consider
%the sensors in $S_l$.

In step 1, we determine the sensor $s_{g(1)}$, as follows.
Define $S_{11}=\{s_j\ |\ x_j'-r_j\leq 0 < x_j'+r_j\}$
(see Fig.~\ref{fig:case1}), i.e., $S_{11}$ consists of all sensors
covering the point $p(0)$ in $C_0$ except any sensor whose right extension is
$0$ (but if the left extension of a sensor is $0$, the sensor is included in
$S_{11}$). In other words, $S_{11}$ consists of all sensors covering the point
$p^+(0)$ in $C_0$.
%We remark that this definition of $S_{11}$ would not
%affect the correctness of the algorithm since $L$ is strictly larger than $0$.
%More explanations on this will be given later.
If $S_{11}\neq \emptyset$, then we choose the sensor in $S_{11}$ whose right
extension is the largest as $s_{g(1)}$ (e.g., $s_i$ in Fig.~\ref{fig:case1}),
and let $y_{g(1)}=x'_{g(1)}$. Note that since
the extensions of all sensors in $C_0$ are different, the sensor
$s_{g(1)}$ is unique.
If $S_{11}= \emptyset$, then define $S_{12}$ as the set of sensors whose
left extensions are larger than $0$ and at most $2\lambda$ (e.g., see
Fig.~\ref{fig:case2}).
If $S_{12}=\emptyset$, then we terminate the algorithm and
report $\lambda<\lambda^*$. Otherwise, we choose the sensor in
$S_{12}$ whose right extension is the smallest as $s_{g(1)}$ (e.g., $s_i$
in Fig.~\ref{fig:case2}), and let $y_{g(1)}=r_{g(1)}$
(i.e., the left extension of $s_{g(1)}$ is at $0$ after it is moved to the
destination $y_{g(1)}$).

\begin{figure}[t]
\begin{minipage}[t]{0.49\linewidth}
\begin{center}
\includegraphics[totalheight=0.8in]{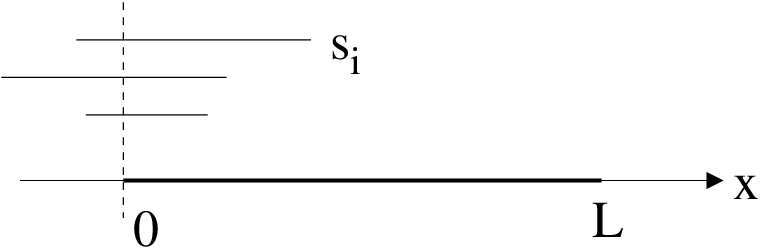}
\caption{\footnotesize The set $S_{11}$ consists of the three sensors
whose covering intervals are shown, and $s_{g(1)}$ is $s_i$.}
\label{fig:case1}
\end{center}
\end{minipage}
\hspace{0.02in}
\begin{minipage}[t]{0.49\linewidth}
\begin{center}
\includegraphics[totalheight=0.8in]{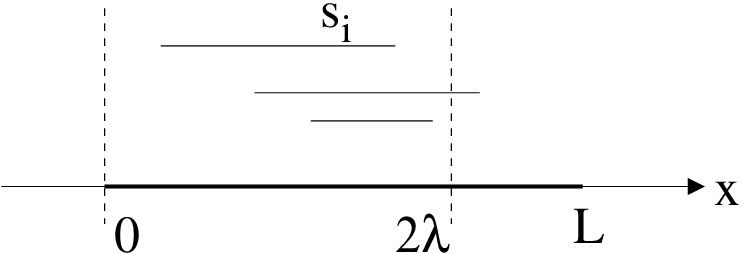}
\caption{\footnotesize The set $S_{12}$ consists of the three sensors
whose covering intervals are shown, and
$s_{g(1)}$ is $s_i$ if $S_{11}=\emptyset$.}
\label{fig:case2}
\end{center}
\end{minipage}
\vspace*{-0.15in}
\end{figure}

If the algorithm is not terminated, then we move $s_{g(1)}$
to $y_{g(1)}$, yielding a new configuration $C_1$. Let
$S_1=\{s_{g(1)}\}$, and $R_1$ be the
right extension of $s_{g(1)}$ in $C_1$. If $R_1\geq L$, we have found
a feasible solution $C_1$ with the critical set $S_1$, and terminate
the algorithm. Otherwise, we proceed to step 2.

%Assume there is a sensor $s_t$ whose right extension is $0$ in $C_0$. We explain
%briefly on why excluding $s_t$ from $S_{11}$ does not affect the
%correctness of the algorithm. The sensor $s_t$ would only be useful if no
%other sensor is able to cover the point $p(0)$. Under the current
%definition of $S_{11}$, if $S_{11}\neq\emptyset$, then $s_{g(1)}$ is
%from $S_{11}$ and $s_{g(1)}$ already covers $p(0)$. Otherwise, if
%$S_{12}\neq \emptyset$, then $s_{g(1)}$ is from $S_{12}$ and
%$s_{g(1)}$ also covers $p(0)$ after it is moved to $y_{g(1)}$. In
%the above two cases, $s_t$ is ``useless". If both $S_{11}=\emptyset$ and
%$S_{12}=\emptyset$, the algorithm terminates with the conclusion that there is no
%feasible solution. In this case, although $s_t$ can cover $p(0)$,
%there is still no sensor that can cover the point $p^+(0)$ and thus it
%is true that there is no feasible solution. Therefore, the current
%definition for $S_{11}$ works well. The reason we exclude $s_t$ from
%$S_{11}$ is to make our discussion simpler.

The general step is very similar to step 1. Consider step $i$ for $i>1$.
We determine the sensor $s_{g(i)}$, as follows.
Let $S_{i1}$ be the set of sensors covering the point
$p^+(R_{i-1})$ in the configuration $C_{i-1}$.
If $S_{i1}\neq \emptyset$, we choose the sensor in $S_{i1}$ with the
largest right extension as $s_{g(i)}$ and let $y_{g(i)}=x'_{g(i)}$.
Otherwise, let $S_{i2}$ be the set of sensors whose
left extensions are larger than $R_{i-1}$ and at most $R_{i-1}+2\lambda$.
If $S_{i2}=\emptyset$, we terminate the algorithm and
report $\lambda<\lambda^*$. Otherwise, we choose the sensor in
$S_{i2}$ with the smallest right extension as $s_{g(i)}$ and let
$y_{g(i)}=R_{i-1}+r_{g(i)}$.
If the algorithm is not terminated, we move $s_{g(i)}$
to $y_{g(i)}$ and obtain a new configuration $C_i$. Let
$S_i=S_{i-1}\cup\{s_{g(i)}\}$. Let $R_i$ be the
right extension of $s_{g(i)}$ in $C_i$. If $R_i\geq L$, we have found
a feasible solution $C_i$ with the critical set $S_i$ and terminate
the algorithm. Otherwise, we proceed to step $i+1$.
If the sensor $s_{g(i)}$ is from $S_{i1}$ (resp., $S_{i2}$), then we call
it the {\em type I} (resp., {\em type II}) sensor.

Since there are $n$ sensors in $S$, the algorithm is terminated
in at most $n$ steps. This finishes the description of our algorithm.

\subsubsection{The Correctness of the Algorithm}
\label{subsec:correct}

Based on the description of our algorithm, we have the following lemma.

\begin{lemma}\label{lem:correct10}
At the end of step $i$, suppose the algorithm produces the set $S_i$ and
the configuration $C_i$; then $S_i$ and $C_i$ have the following properties.
\begin{description}
\item[(a)]
$S_i$ consists of sensors that are type I or type II.
\item[(b)]
For each sensor $s_{g(j)}\in S_i$ with $1\leq j\leq i$, if $s_{g(j)}$ is
of type I, then it stands still (i.e., its position in $C_i$ is the same as
that in $C_0$); otherwise, its left extension is at $R_{j-1}$,
and $s_{g(j)}$ and $s_{g(j-1)}$ are in attached positions if $j>1$.
\item[(c)]
The interval on $B$ covered by the
sensors in $S_i$ is $[0,R_i]$.
\item[(d)]
For each $1<j\leq i$, the right extension of $s_{g(j)}$ is larger than
that of $s_{g(j-1)}$.
\item[(e)]
For each $1\leq j\leq i$, $s_{g(j)}$ is the only sensor in $S_i$ that
covers the point $p^+(R_{j-1})$ (with $R_0=0$).
%\item [(f)]
%For each $1<j\leq i$, $y_{g(j-1)}<y_{g(j)}$.
\end{description}
\end{lemma}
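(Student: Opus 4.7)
The plan is to prove all five parts simultaneously by induction on the step index $i$, since the properties are tightly intertwined: the ``attached positions'' clause in (b) and the exclusivity in (e) both depend on the interval-coverage statement in (c) and on the monotonicity of the $R_j$'s in (d), so a separate induction for each part would force me to restate the full inductive hypothesis anyway.

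Base case $i=1$. Part (a) is immediate since $s_{g(1)}$ is selected from $S_{11}$ or $S_{12}$. Part (b) is a direct reading of the destination formulas: $y_{g(1)} = x'_{g(1)}$ in the type I case, and $y_{g(1)} = r_{g(1)}$ (placing the left extension at $R_0 = 0$) in the type II case. For (c), I would check that the covering interval of $s_{g(1)}$ contains $[0,R_1]$ using the defining condition of $S_{11}$ (left extension $\leq 0$, right extension $> 0$) or of $S_{12}$ together with the chosen $y_{g(1)}$. Parts (d) and (e) are vacuous or trivial for $i=1$.

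Inductive step. Assuming all five parts for $S_{i-1},C_{i-1}$, I would derive them for $S_i,C_i$ as follows. Part (a) is immediate from the selection rule. Part (b) for $s_{g(i)}$ again reads off the formula for $y_{g(i)}$; the attached-positions clause uses the inductive (c), which gives that the right extension of $s_{g(i-1)}$ equals $R_{i-1}$, matching the newly placed left extension of the type II sensor $s_{g(i)}$. For (c), I combine the inductive statement that $S_{i-1}$ covers $[0,R_{i-1}]$ with the fact that $s_{g(i)}$ either covers $p^+(R_{i-1})$ (type I, by definition of $S_{i1}$) or covers exactly $[R_{i-1},R_i]$ (type II, by the destination formula). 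Part (d) for the new step: in the type I case, $s_{g(i)} \in S_{i1}$ forces the right extension to exceed $R_{i-1}$, and in the type II case $R_i = R_{i-1}+2r_{g(i)} > R_{i-1}$.

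The hard part will be (e). The subcase $j=i$ is painless: the inductive (c) says $S_{i-1}$ covers only $[0,R_{i-1}]$, so $p^+(R_{i-1})$ can be covered in $S_i$ only by the freshly added $s_{g(i)}$. The delicate subcase is $j<i$, where I must rule out that $s_{g(i)}$ itself covers $p^+(R_{j-1})$. For type II $s_{g(i)}$ this is easy: its left extension sits at $R_{i-1}$, which is strictly greater than $R_{j-1}$ by the strict monotonicity of the $R_k$'s from (d). The type I subcase is the crux, and I would argue by contradiction: by (b), a type I sensor stands still, so if $s_{g(i)}$ covers $p^+(R_{j-1})$ in $C_{i-1}$ it already covered $p^+(R_{j-1})$ in $C_{j-1}$, hence $s_{g(i)} \in S_{j1}$; but by the inductive (d) applied up through step $i$, the right extension of $s_{g(i)}$ strictly exceeds that of $s_{g(j)}$, so the greedy rule (which picks the element of $S_{j1}$ with the maximum right extension) would have selected $s_{g(i)}$ rather than $s_{g(j)}$ at step $j$, a contradiction.
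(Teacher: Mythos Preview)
Your proof is correct and follows essentially the same approach as the paper. The paper dispatches (a)--(c) as ``trivially true from the algorithm description'' and then argues (d) and (e) directly (considering, for a fixed $j$, any other index $t<j$ or $t>j$), while you organize everything as an explicit induction on $i$; but the substance---in particular the contradiction argument for (e) when the later sensor is type~I, via membership in $S_{j1}$ and the greedy maximum-right-extension rule---is identical.
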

\begin{proof}
The first three properties are trivially true according to the algorithm description.

For property (d), note that the right extension of $s_{g(j)}$ (resp.,
$s_{g(j-1)}$) is $R_j$ (resp., $R_{j-1}$). According to our algorithm,
the sensor $s_{g(j)}$ covers the point $p^+(R_{j-1})$, implying that
$R_j>R_{j-1}$. Hence, property (d) holds.

For property (e), note that the sensor $s_{g(j)}$ (which is determined
in step $j$) always covers $p^+(R_{j-1})$.
Consider any other sensor $s_{g(t)}\in S_i$. If $t<j$, then the
right extension of $s_{g(t)}$ is at most $R_{j-1}$, and thus $s_{g(t)}$
cannot cover $p^+(R_{j-1})$. If $t>j$, then depending on whether
$s_{g(t)}\in S_{t1}$ or $s_{g(t)}\in S_{t2}$, there are two cases. If
$s_{g(t)}\in S_{t2}$, then the left extension of $s_{g(t)}$ is
$R_{t-1}$, which is larger than $R_{j-1}$, and thus $s_{g(t)}$ cannot
cover $p^+(R_{j-1})$ in $C_i$. Otherwise (i.e., $s_{g(t)}\in S_{t1}$),
$s_{g(t)}$ stands still.
Assume to the contrary that $s_{g(t)}$ covers $p^+(R_{j-1})$ in $C_i$.
Then $s_{g(t)}$ must have been in $S_{j1}$ in step $j$ within the
configuration $C_{j-1}$. This implies $S_{j1}\neq \emptyset$,
$s_{g(j)}\in S_{j1}$, and $s_{g(j)}$ stands still.
Since $R_t$ is the right extension of $s_{g(t)}$ and $R_j$ is the right
extension of $s_{g(j)}$, by property (d), for $t>j$, we have $R_t>R_j$.
%Since $s_{g(t)}$ stands still in the $t$-th step, it also
%stands still in the $j$-th step (due to $j<t$). This implies that the
%right extension of $s_{g(t)}$ is the value $R_t$ and the
%right extension of $s_{g(j)}$ is the value $R_{j}$ in $C_{j-1}$.
Since $R_t>R_j$ (i.e., the right extension of $s_{g(j)}$ is smaller than that
of $s_{g(t)}$), the algorithm cannot choose $s_{g(j)}$ from $S_{j1}$ in
step $j$, which is a contradiction. Therefore, $s_{g(t)}$ cannot
cover the point $p^+(R_{j-1})$. Property (e) thus holds.
\end{proof}

At its termination, our algorithm either reports
$\lambda\geq\lambda^*$ or $\lambda<\lambda^*$.
To argue the correctness of the algorithm, below we will show that
if the algorithm reports $\lambda\geq\lambda^*$, then indeed there is
a feasible solution for $\lambda$ and our algorithm finds one;
otherwise, there is no feasible solution for $\lambda$.

Suppose in step $i$, our algorithm reports $\lambda\geq
\lambda^*$. Then according to the algorithm, it must be $R_i\geq L$. By Lemma
\ref{lem:correct10}(c) and
\ref{lem:correct10}(e), $C_i$ is a feasible solution and
$S_i$ is a critical set. Further, by Lemma \ref{lem:correct10}(d) and
Observation \ref{obser:standardorder}, the cover order of the
sensors in $S_i$ is $s_{g(1)},s_{g(2)},\ldots,s_{g(i)}$.

Next, we show that if the algorithm reports $\lambda<\lambda^*$, then
certainly there is no feasible solution for $\lambda$. This is almost an
immediate consequence of the following lemma.

\begin{lemma}\label{lem:correct20}
Suppose $S_i'$ is the set of sensors in the configuration $C_i$
whose right extensions are at most $R_i$.
Then the interval $[0,R_i]$ is the largest possible left-aligned interval
that can be covered by the sensors of $S'_i$ such that the
displacement of each sensor in $S'_i$ is at most $\lambda$.
\end{lemma}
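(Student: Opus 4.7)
The plan is to prove the lemma by induction on $i$. The lower bound that $[0,R_i]$ is coverable by $S'_i$ with each displacement at most $\lambda$ is immediate from Lemma~\ref{lem:correct10}(c): the algorithm's own configuration $C_i$ already realizes the coverage, since $S_i\subseteq S'_i$ and every sensor of $S_i$ is displaced by at most $\lambda$ in $C_i$ by construction. The work is therefore in the upper bound. The base case $i=0$ is routine because $R_0=0$ and every sensor in $S'_0$ has right extension at most $0$ in $C_0$, which is also the maximum right extension it can attain under any displacement at most $\lambda$.

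For the inductive step, I would assume the statement for all smaller indices and suppose, toward a contradiction, that some configuration $SOL^*$ of $S'_i$ with all displacements at most $\lambda$ covers a left-aligned interval $[0,R^*]$ with $R^*>R_i$. Extract from $SOL^*$ a critical subset $T^*=\{\sigma_1,\ldots,\sigma_k\}$ in cover order; by Observation~\ref{obser:standardorder} the right extensions $\pi_1<\pi_2<\cdots<\pi_k$ in $SOL^*$ are strictly increasing with $\pi_k\geq R^*>R_i$. The pivotal observation is that any sensor $s\in S'_i$ that is either outside $S_i$ or a type~I sensor of $S_i$ has maximum possible right extension, over all displacements at most $\lambda$, equal to $x_s+\lambda+r_s$; by definition of $S'_i$ this equals the right extension of $s$ in $C_i$ and so is at most $R_i$. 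Hence whenever $\pi_t>R_i$ the sensor $\sigma_t$ must be a type~II sensor of $S_i$, of the form $s_{g(j_t)}$ for some $j_t\leq i$. Applying this to $\sigma_k$, I set $\sigma_k=s_{g(j_k)}$ and trace back to step $j_k$ of the algorithm: since $s_{g(j_k)}$ was chosen from $S_{j_k2}$, the set $S_{j_k1}$ is empty in $C_{j_k-1}$, and $\pi_k>R_i\geq R_{j_k}$ forces $\sigma_k$'s left extension in $SOL^*$ to exceed $R_{j_k-1}$, so some other $\sigma\in T^*$ must cover $p^+(R_{j_k-1})$ in $SOL^*$.

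The proof is then completed by case analysis on this $\sigma$ relative to $C_{j_k-1}$. If $\sigma\notin S_{j_k-1}$ and its $C_0$-left-extension is at most $R_{j_k-1}$, then $\sigma\in S_{j_k1}$, contradicting $S_{j_k1}=\emptyset$. If $\sigma\notin S_{j_k-1}$, its $C_0$-left-extension lies in $(R_{j_k-1},R_{j_k-1}+2\lambda]$, and $\sigma$ is not a type~II sensor of $S_i$, then $\sigma\in S_{j_k2}$ and the algorithm's min-right-extension rule, combined with the bound $x_\sigma+\lambda+r_\sigma\leq R_i$ inherited from $\sigma\in S'_i$, yields $x_{g(j_k)}+\lambda+r_{g(j_k)}\leq R_i$ and hence $\pi_k\leq R_i$, a contradiction. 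Finally, if $\sigma$ is itself a type~II sensor $s_{g(j')}$ of $S_i$ (either with $j'<j_k$ so that $\sigma\in S_{j_k-1}$, or with $j'>j_k$), the argument recurses to step $j'$ with a new alternate sensor in $T^*$. The main obstacle will be this recursive case: one needs to show that the chain of algorithm step indices produced along the recursion eventually reaches a step whose non-recursive sub-case yields the desired contradiction, which will rely on Lemma~\ref{lem:correct10}(d)--(e) (monotone right extensions and exclusive coverage of the critical sensors), together with the range inequalities arising from the definitions of $S_{j1},S_{j2}$, to guarantee termination.
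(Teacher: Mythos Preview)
Your contradiction-based approach is genuinely different from the paper's. The paper never argues by contradiction on a hypothetical better covering; instead it proves a constructive \emph{normalization} claim: given any configuration of $S'_i$ (with displacements at most $\lambda$) covering a left-aligned interval $[0,x']$, one can successively relocate $s_{g(1)},s_{g(2)},\ldots,s_{g(i)}$ to their algorithm destinations $y_{g(1)},\ldots,y_{g(i)}$ while preserving coverage of $[0,x']$. Each relocation is handled either trivially (when $s_{g(k)}$ is non-critical, or is type~I and only shifts rightward) or by an explicit sequence of pairwise ``switch operations'' that slide $s_{g(k)}$ past the intervening critical sensors. Once every $s_{g(j)}$ sits at $y_{g(j)}$, all sensors of $S'_i$ have right extension at most $R_i$ in that normalized configuration, so $x'\le R_i$. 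Note also that the paper's induction is on $k=1,\ldots,i$ for a \emph{fixed} $i$, not on $i$ as you set up (and in fact your argument never invokes the hypothesis for smaller $i$).

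There is a genuine gap in your recursive Case~3, and it is not merely a termination issue. Termination is fine: the new sensor $\sigma$ covering $p^+(R_{j-1})$ has strictly smaller left extension in $SOL^*$ than the current type-II sensor, hence (by Observation~\ref{obser:standardorder}) strictly smaller cover-order index in $T^*$, so the chain is finite. The problem is the \emph{invariant} you carry through the recursion. At the start you have $\pi_k>R_i$, and your Case~2 correctly derives $\pi_k\le R_i$ from the min-right-extension rule in $S_{j_k2}$ together with the bound $x_\sigma+\lambda+r_\sigma\le R_i$ (valid because a non-type-II $\sigma\in S'_i$ sits at its $C_0$ position in $C_i$). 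But after one recursive step to $\sigma_{t'}=s_{g(j')}$, all you inherit is $\pi_{t'}>R_{j-1}$; in the branch $j'<j$ this yields only $\pi_{t'}>R_{j'}$. Running your Case~2 at that depth then gives $\pi_{t'}\le R_i$, and since $R_{j'}\le R_i$ this is no contradiction. The alternative invariant ``the $C_0$-right-extension of the current type-II sensor exceeds $R_i$'' is preserved in the branch $j'>j$ (via the min-right-extension rule, since then $\sigma\in S_{j2}$) but fails in the branch $j'<j$. Thus neither of the two natural invariants survives both branches, and your sketch does not supply one that does; Lemma~\ref{lem:correct10}(d)--(e) alone will not close this. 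The paper's exchange argument sidesteps the difficulty entirely by never needing to track such an invariant across steps of the greedy.
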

\begin{proof}
In this proof, when we say an interval is covered by the sensors of
$S_i'$, we mean (without explicitly stating) that the displacement of
each sensor in $S_i'$ is at most $\lambda$.

We first prove a key claim: If $C$ is a configuration for
the sensors of $S'_i$ such that a left-aligned interval $[0,x']$ is
covered by the sensors of $S'_i$, then there always exists a
configuration $C^*$ for $S'_i$ in which the interval $[0,x']$ is still
covered by the sensors of $S'_i$ and for each $1\leq j\leq i$, the
position of the sensor $s_{g(j)}$ in $C^*$ is $y_{g(j)}$, where
$g(j)$ and $y_{g(j)}$ are the values computed by our algorithm.

As similar to our discussion in Section \ref{subsec:preli}, the
configuration $C$ for $S'_i$ always has a critical set for covering the
interval $[0,x']$. Let $S_C$ be such a critical set of $C$.

We prove the claim by induction. We first show the base case:
Suppose there is a configuration $C$ for the sensors of $S'_i$ in
which a left-aligned interval $[0,x']$ is covered by the sensors of
$S'_i$; then there is a configuration $C_1'$ for the sensors of
$S'_i$ in which the interval $[0,x']$ is still
covered by the sensors of $S'_i$ and the
position of the sensor $s_{g(1)}$ in $C_1'$ is $y_{g(1)}$.

Let $t=g(1)$. If the position of $s_t$ in $C$ is $y_t$, then we are
done (with $C_1'= C$). Otherwise, let $y_t'$ be the
position of $s_t$ in $C$, with $y_t'\neq y_t$.
Depending on $s_t\in S_{11}$ or $s_t\in S_{12}$, there are two cases.

%\begin{itemize}
%\item
If $s_t\in S_{11}$, then $y_t=x'_t$. Since $y_t$ is the rightmost position
to which the sensor $s_t$ is allowed to move and $y_t'\neq y_t$, we have
$y_t'<y_t$. Depending on whether $s_t$ is in the critical
set $S_C$, there further are two subcases.

If $s_t\not\in S_C$, then by the definition of a critical
set, the sensors in $S_C$ form a coverage of $[0,x']$
regardless of where $s_t$ is. If we move $s_t$ to
$y_t$ (and other sensors keep the same positions as in $C$) to
obtain a new configuration $C_1'$, then the sensors of $S_i'$ still
form a coverage of $[0,x']$.

If $s_t\in S_C$, then because $y_t>y_t'$, if we move $s_t$ from $y_t'$ to
$y_t$, $s_t$ is moved to the right. Since $s_t\in S_{11}$, when $s_t$
is at $y_t$, $s_t$ still covers the point $p(0)$. Thus, moving
$s_t$ from $y_t'$ to $y_t$ does not cause $s_t$ to cover a smaller
sub-interval of $[0,x']$.
%Intuitively, moving $s_t$ to $y_t$ does not lose anything in covering $[0,x']$.
Hence, by moving $s_t$ to $y_t$, we obtain a new configuration $C_1'$ in
which the sensors of $S_i'$ still form a coverage of $[0,x']$.

This completes the proof for the case $s_t\in S_{11}$.

%\item
If $s_t\in S_{12}$, then according to our algorithm, $S_{11}=\emptyset$
in this case, and $s_t$ is the
sensor in $S_{12}$ with the smallest right extension in $C_0$. If $s_t\not\in
S_C$, then by the same argument as above, we can obtain a
configuration $C_1'$ in which the interval $[0,x']$ is still
covered by the sensors of $S'_i$ and the
position of the sensor $s_t$ in $C_1'$ is $y_t$. Below, we discuss the
case when $s_t\in S_C$.

In $S_C$, some sensors must cover the point
$p(0)$ in $C$. Let $S'$ be the set of sensors in $S_C$ that cover
$p(0)$ in $C$. If $s_t\in S'$, then it is easy to see that $y_t'<y_t$
since $y_t$ is the rightmost position for $s_t$ to cover $p(0)$. In this
case, again, by the same argument as above, we can always move $s_t$ to the
right from $y_t'$ to $y_t$ to obtain a
configuration $C_1'$ in which the interval $[0,x']$ is still
covered by the sensors of $S'_i$. Otherwise (i.e., $s_t\not\in S'$),
we show below that we can always move $s_t$ to $y_t$ by switching the
relative positions of $s_t$ and some other sensors in $S_C$.

An easy observation is that each sensor in $S'$ must be in $S_{12}$.
Consider an arbitrary sensor $s_h\in S'$.  Since $s_t$ is the
sensor in $S_{12}$ with the smallest right extension in $C_0$, the
right extension of $s_h$ is larger than that of $s_t$ in $C_0$.
Depending on whether the covering intervals of
$s_t$ and $s_h$ overlap in $C$, there are two subcases.

If the covering intervals of
$s_t$ and $s_h$ overlap in $C$, then let $[0,x'']$ be the
left-aligned interval that is covered by $s_t$ and $s_h$ in $C$ (see
Fig.~\ref{fig:overlap}).
If we {\em switch} their relative positions by moving $s_t$ to $y_t$ and moving
$y_h$ to $x''-r_h$ (i.e., the left extension of $s_t$ is at $0$ and
the right extension of $s_h$ is at $x''$),
then these two sensors still cover $[0,x'']$ (see
Fig.~\ref{fig:overlap}), and thus the sensors in $S_i'$ still form a coverage
of $[0,x'']$.
Further, after the above switch operation, the displacements of these two
sensors are no bigger than $\lambda$. To see this, clearly, the
displacement of $s_t$ is at most $\lambda$.
%For the sensor $s_h$, $s_h$ does not cover the point $p(x'')$
%in $C$ since otherwise $s_t$ would not be a critical sensor in $S_C$.
%After the above movement, however, $s_h$ covers $p(x'')$.
For the sensor $s_h$,
%on one hand,
it is easy to see that the switch operation moves $s_h$ to the right.
%On the other hand,
Since $s_t$ covers $p(x'')$ in $C$, $x''$ is no larger than the right extension of
$s_t$ in $C_0$, which is smaller than that of $s_h$ in $C_0$ since
$s_h$ is in $S_{12}$. Thus,
$x''$ is smaller than $x_h'+r_h$, implying that the position of $s_h$ after the
switch operation is still to the left of its position in $C_0$.
Hence, after the switch operation,
the displacement of $s_h$ is no bigger than $\lambda$. In
summary, after the switch operation, we obtain a new
configuration $C_1'$ in which the interval $[0,x']$ is still
covered by the sensors of $S'_i$ and the
position of the sensor $s_t$ in $C_1'$ is $y_t$.

\begin{figure}[t]
\begin{minipage}[t]{\linewidth}
\begin{center}
\includegraphics[totalheight=0.8in]{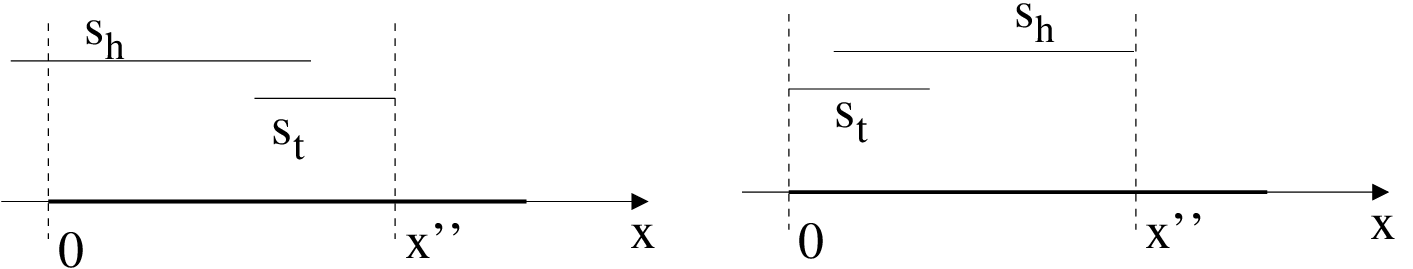}
\caption{\footnotesize Illustrating the switch operation on $s_t$ and
$s_h$: The left figure is before the switch and the right one is after the switch.}
\label{fig:overlap}
\end{center}
\end{minipage}
\vspace*{-0.15in}
\end{figure}

If the covering intervals of $s_t$ and $s_h$ do not overlap in $C$, then suppose
the sensors in
the critical set $S_C$ between $s_h$ and $s_t$ are
$s_h=s_{f(0)},s_{f(1)},s_{f(2)},\ldots,s_{f(m)},s_t$, in the cover order.
Clearly, the covering intervals of any two consecutive sensors in this list
overlap in $C$.
In the following, we perform a sequence of ``switch operations''
to obtain a configuration $C_1'$ in which $[0,x']$ is covered by the sensors
of $S'_i$ and the location of the sensor $s_t$ is $y_t$.
Let $Q=\{s_{f(0)},s_{f(1)},s_{f(2)},\ldots,s_{f(m)}\}$.

Indeed, this is the most difficult case\footnote{The proof for this case in the previous
version of this paper \cite{ref:ChenAl13} is based on the assumption that every
sensor of $Q$ is in $S_{12}$. However, the assumption is not true because it is
possible that some sensors of $Q$ are not in $S_{12}$. Therefore, the previous
proof is not correct. Here, we provide a new proof for this case.}.

For any sensor $s_k$ of $S$ and any configuration $C'$, we use $x_k(C')$ to denote its location in $C'$. Recall that $x_k'$ is the location of $s_k$ in $C_0$.
We use $I(s_k)$ to denote the covering interval of $s_k$. We say
that $s_k$ is {\em valid} in a configuration $C'$ if the displacement of $s_k$ in
$C'$ is at most $\lambda$ (i.e., $x_k(C')\in [x_k'-2\lambda,x_k']$). We say that a configuration $C'$ is {\em valid} if every sensor of $S$ is valid in $C'$.

Consider any sensor $s_{f(j)}$ in $Q$. If the right extension of $s_{f(j)}$ is strictly less than the right extension of $s_t$ in $C_0$, then we say $s_{f(j)}$ is a {\em special sensor}; otherwise it is a {\em normal} sensor (see Fig~\ref{fig:normalspecial}). Note that all sensors of $S_{12}$ are normal
sensors but not every normal sensor is in $S_{12}$. If a normal sensor
is not in $S_{12}$, then its left extension must be larger than $2\lambda$ in $C_0$.

\begin{figure}[t]
\begin{minipage}[t]{\linewidth}
\begin{center}
\includegraphics[totalheight=0.8in]{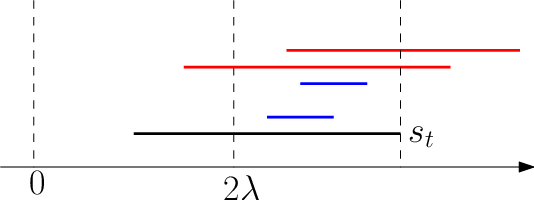}
\caption{\footnotesize Illustrating the special and normal sensors of $Q$ when they are in $C_0$: each segment represents the covering interval of a sensor of $Q$ in $C_0$; the black segment is for $s_t$; the red segments are for normal sensors and the blue segments are for special sensors.}
\label{fig:normalspecial}
\end{center}
\end{minipage}
\vspace*{-0.15in}
\end{figure}

%Let $s_{f(j)}$ be the last normal sensor in the sequence
We begin with considering the last sensor $s_{f(m)}$ of the sequence
$Q$.  Depending on whether $s_{f(m)}$ is a normal sensor or a
special sensor, there are two cases.

%\begin{enumerate}

%\item
\paragraph{The normal sensor case.}
If $s_{f(m)}$ is a normal sensor, then let $x''_1$ be the maximum of $0$ and the
left extension of $s_{f(m)}$ in $C$, and $x''_2$ be the
right extension of $s_t$ in $C$ (see
Fig.~\ref{fig:nonoverlap}). Clearly, $x''_1<x''_2$.
%Thus, the sub-interval of $[0,x']$
%covered by $s_t$ and $s_{f(m)}$ in $C$ is $[x''_1,x''_2]$.
We perform a switch operation on $s_t$ and $s_{f(m)}$ by moving $s_t$ to the left and
moving $s_{f(m)}$ to the right such that the left extension of $s_t$ is
at $x_1''$ and the right extension of $s_{f(m)}$ is at $x_2''$ (see
Fig.~\ref{fig:nonoverlap}). It is easy to see that after this switch operation,
the sensors in $S_C$ still form a coverage of $[0,x']$.

Let $C'$ be the new configuration after the above switch operation. Next we show that $C'$ is still valid. To this end, it is sufficient to show both $s_{f(m)}$ and $s_t$ are valid in $C'$.

\begin{figure}[h]
\begin{minipage}[t]{\linewidth}
\begin{center}
\includegraphics[totalheight=0.8in]{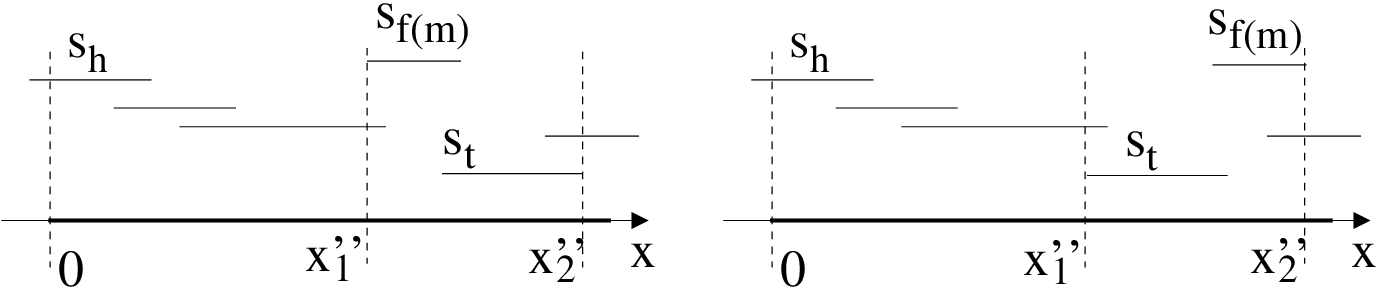}
\caption{\footnotesize Illustrating the switch between $s_t$ and $s_{f(m)}$.}
\label{fig:nonoverlap}
\end{center}
\end{minipage}
%\vspace*{-0.15in}
\end{figure}

We first consider $s_t$. Our goal is to show that $x_t(C')\in [x_t'-2\lambda,x_t']$.
On the one hand, since the left extension of $s_t$ in $C'$ is at $x_1''\geq 0$, it holds that $x_t(C')\geq y_t\geq x_t'-2\lambda$. On the other hand, since $s_t$ is valid in $C$ and $s_t$ has been moved leftwards from $x_t(C)$ to $x_t(C')$, we obtain that $x_t(C')\leq x_t(C)\leq x_t'$. This proves that $s_t$ is valid in $C'$.

We then consider $s_{f(m)}$. Let $k=f(m)$. Our goal is to show that
$x_{k}(C')\in [x_{k}'-2\lambda,x_{k}']$. On the one hand, since $s_k$
is valid in $C$ and $s_k$ has been moved rightwards from $x_k(C)$ to
$x_k(C')$, we obtain that $x_k'-2\lambda\leq x_k(C)\leq x_k(C')$. On
the other hand, since $s_{k}$ is a normal sensor, the right extension
of $s_{k}$ is larger than that of $s_t$ in $C_0$. However, the right extension
of $s_k$ in $C'$ is at $x_2''$, which is the right extension of $s_t$
in $C$. This implies that comparing with its location $x_k'$ in $C_0$,
$s_k$ must have been moved leftwards to reach $x_k(C')$. In other
words, $x_k(C')\leq x_k'$.
%This implies that the leftward moving distance of $s_k$ from
%$x_k'$ to $x_k(C')$ is larger than the leftward moving distance of
%$s_t$ from $x_t'$ to $x_t(C)$. The latter moving distance is at least
%zero. Thus, the leftward moving distance of $s_k$ from $x_k'$ to
%$x_k(C')$ is larger than zero, i.e., $x_k'-x_k(C')>0$.
This proves that $x_{k}(C')\in [x_{k}'-2\lambda,x_{k}']$, and thus, $s_k$ is valid
in $C'$.

Thus, $C'$ is a valid configuration in which $[0,x']$ is still
covered by the sensors of $S_C$. But the sensors between $s_h$ and
$s_t$ in $C'$ are $s_{f(1)},s_{f(2)},\ldots,s_{f(m-1)}$, which has one
sensor less than before. Let
$Q=\{s_{f(0)},s_{f(1)},s_{f(2)},\ldots,s_{f(m-1)}\}$.
We proceed to apply the same argument on $Q$ and $C'$.

%\item

\paragraph{The special sensor case.}
If $s_{f(m)}$ is a special sensor, this case is more complicated.
Let $s_{f(j)}$ be the last normal sensor
in the sequence $Q$ for some $j$ with $0\leq j\leq m-1$. Note
that such a normal sensor must exist in $Q$ because $s_{f(0)}=s_h$
is a normal sensor. Let $\calI=[\alpha,\beta]$ be the union of the covering
intervals of $s_{f(j+1)},s_{f(j+2)},\ldots,s_{f(m)}$ in $C$. Let $\calI'=[\alpha',\beta']$
be the union of $\calI$ and the covering intervals of $s_{f(j)}$ and $s_t$ in $C$.
For ease of exposition, we assume $\calI'\subseteq [0,x']$ (since
otherwise we could apply similar argument on the interval
$[0,\infty)\cap I'$, as we did for the above normal sensor case). Note
that since all sensors of $Q\cup \{s_{f(j)},s_t\}$ are in the critical
set $S_C$ and their order follows the cover order,
the left extension of $s_{f(j)}$ is at $\alpha'$ and the
right extension of $s_t$ is at $\beta'$ in $C$.

Consider any sensor $s_{f(k)}$ with $j+1\leq k\leq m$. Since
$s_{f(k)}$ is a special sensor, we claim that the left extension of
$s_{f(k)}$ in $C_0$ is larger than $2\lambda$ (see
Fig.~\ref{fig:normalspecial}). Indeed, assume to the contrary that
this is not true. Then, since $S_{11}=\emptyset$ and $s_{f(k)}\not\in
S_{12}$, the right extension of $s_{f(k)}$ in $C_0$ must be at most
$0$. However, this implies that $s_{f(k)}$ cannot be in the critical
set $S_C$ for covering the interval $[0,x']$, incurring contradiction.
This proves the claim.

%Let $Q'=\{s_{f(j+1)},s_{f(j+2)},\ldots,s_{f(m)}\}$. Let $Q'_1$ be the set of sensors of $Q'$ whose covering intervals do not intersect the covering interval $I(s_{f(j)})$ in $C$. Let $Q_2'=Q'\setminus Q_1'$ (see Fig.~\ref{fig:special}).  For each sensor $s_{f(k)}\in Q'$, since it is a special sensor and $s_{f(j)}$ is a normal one, the right extension of $s_{f(k)}$ is less than that of $s_{f(j)}$. Hence, if $s_{f(k)}\in Q_1'$, then the right extension of $s_{f(k)}$ must be less than the left extension of $s_{f(j)}$ in $C_0$ (see Fig.~\ref{fig:special}).

Let $Q'=\{s_{f(j+1)},s_{f(j+2)},\ldots,s_{f(m)}\}$. Define $Q'_1$ to
be the
set of sensors of $Q'$ whose right extensions are less than the left
extension of $s_{f(j)}$ in $C_0$. Let $Q_2'=Q'\setminus Q_1'$
(see Fig.~\ref{fig:special}).

\begin{figure}[h]
\begin{minipage}[t]{\linewidth}
\begin{center}
\includegraphics[totalheight=1.0in]{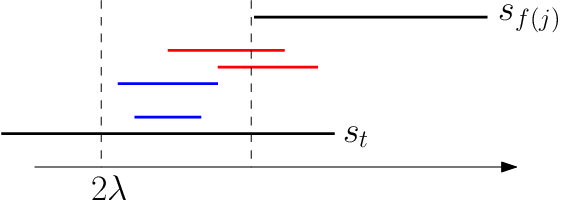}
\caption{\footnotesize Illustrating $Q_1'$ and $Q_2'$. The segments are covering intervals of sensors in $Q'\cup \{s_t,s_{f(j)}\}$ in $C_0$. The set $Q_1'$ (resp., $Q_2'$) consists of the two sensors whose covering intervals are blue (resp., red). }
\label{fig:special}
\end{center}
\end{minipage}
%\vspace*{-0.15in}
\end{figure}

Note that if $s_{f(j)}$ is in $S_{k2}$, then $Q'_1=\emptyset$. To see
this, suppose $s_{f(j)}$ is in $S_{k2}$. Then, the left extension of
$s_{f(j)}$ in $C_0$ is at most $2\lambda$. Since each sensor of $Q'$
is a special sensor, its right extension must be
larger than $2\lambda$. Thus, every sensor of $Q'$ has its right
extension larger than the left extension of $s_{f(j)}$ in $C_0$.
Hence,  $Q'_1=\emptyset$.

Let $Q_1'=\{s_{f_1(1)},s_{f_1(2)},\ldots,s_{f_1(m_1)}\}$ sorted by
their index order in $Q'$.
%Similarly, Let
%$Q_2'=\{s_{f_2(1)},s_{f_2(2)},\ldots,s_{f_2(m_2)}\}$ sorted by their
%index order in $Q'$.

%For each sensor $s_k$, recall that $x_k'$ is its location in $C_0$, and we let $x_k(C)$ to denote its location in $C$.

In the sequel, we will perform a sequence of operations to move sensors of $Q'\cup
\{s_{f(j)},s_t\}$ in $C$ to obtain a new configuration $C'$.
Roughly speaking, in $C'$, the order of the sensors of $Q'$ from left
to right are sensors of $Q'_1$, $s_t$,
sensors of $Q'_2$, and $s_{f(j)}$. We will show that after the operations, $\calI'$ is still covered by the sensors of $Q'\cup
\{s_{f(j)},s_t\}$ and $C'$ is a valid configuration.
In the following discussion, for each sensor $s_k\in Q'$, $x_k(C')$ refers to its location in $C'$ that we are going to set.

\begin{enumerate}
\item
We first consider sensors of $Q_1'$.

We consider the sensors of $Q_1'$ by their index order.
By the definition of $\calI'=[\alpha',\beta']$, the left extension of $s_{f_1(1)}$ in $C$ is larger than $\alpha'$.
%If the left extension of $s_{f_1(1)}$ is at $\alpha'$, then we do not move $s_{f_1(1)}$. Otherwise,
We move  $s_{f_1(1)}$ leftwards until its left extension is at $\alpha'$.
%$s_{f_1(1)}(C')=\alpha+r_{f_1(1)}$, i.e., the left extension of $s_{f_1(1)}$ in $C'$ is at $\alpha$.

Then, we consider $s_{f_1(2)}$. Clearly, either $I(s_{f_1(2)})$ intersects
$I(s_{f_1(1)})$ or $I(s_{f_1(2)})$ is to the right of $I(s_{f_1(1)})$. If the
former case happens, we do not move $s_{f_1(2)}$; otherwise, we move
$s_{f_1(2)}$ leftwards until its left extension is at the right extension of
$s_{f_1(1)}$.

We continue to consider other sensors of $Q_1'$. In general, suppose sensor
$s_{f_1(k)}$ has been considered as above. Then, we consider sensor
$s_{f_1(k+1)}$. Either $I(s_{f_1(k+1)})$ intersects
$I(s_{f_1(k)})$ or $I(s_{f_1(k+1)})$ is to the right of $I(s_{f_1(k)})$. If the
former case happens, we do not move $s_{f_1(k+1)}$; otherwise, we move
$s_{f_1(k+1)}$ leftwards until its left extension is at the right extension of
$s_{f_1(k)}$.

The above procedure stops once all sensors of $Q_1'$ have been considered. We
have the following observations. First,
each sensor of $Q_1'$ either does not change its position
or has been moved leftwards. Second, sensors of $Q_1'$ in $C'$ together cover
a continuous interval, denoted by $I(Q_1')$, whose left endpoint is $\alpha'$.
Further, due to the above
first observation, it holds that $I(Q_1')\subseteq \calI'=[\alpha',\beta']$.
Third, note that sensors of $Q_1'$ may cover multiple disjoint maximal
sub-intervals of $\calI'$ in $C$. Our above way of moving sensors of $Q_1'$
leftwards guarantees that the length of $I(Q_1')$ is larger than or equal to the
sum of the lengths of the above sub-intervals of $\calI'$ covered by sensors of
$Q_1'$ in $C$.

In the following, we show that each sensor of $Q_1'$ is valid in $C'$.
Consider any sensor $s_k\in Q_1'$. It is sufficient to show that
$x_k(C')\in [x_k'-2\lambda,x_k']$.

Due to the above first observation, $x_k(C')\leq x_k(C)$. Since
$s_k$ is valid in $C$, $x_k(C)\leq x_k'$. Thus, we have
$x_k(C')\leq x_k'$. Next, we show that $x_k(C')\geq x_k'-2\lambda$.

On the one hand, since $s_k\in Q_1'$, the left
extension of $s_k$ is smaller than the left extension of $s_{f(j)}$ in
$C_0$. On the other hand, the left extension of $s_{f(j)}$ is at
$\alpha'$ in $C$ while the left extension of $s_k$ is at least
$\alpha'$ in $C'$. This implies the leftward moving distance of $s_k$
from $x_k'$ in $C_0$ to  $x_k(C')$ in $C'$ is smaller than the leftward moving distance of
$s_{f(j)}$ from $x_{f(j)}'$ in $C_0$ to $x_{f(j)}(C)$ in $C$. Formally, $x_k'-x_k(C')\leq
x_{f(j)}'-x_{f(j)}(C)$. Since $s_{f(j)}$ is valid in $C$, we have
$x_{f(j)}'-x_{f(j)}(C)\leq 2\lambda$. Hence, we obtain $x_k'(C)\geq x_k'-2\lambda$.

This proves that $s_k$ is valid in $C'$.

%Let $I(Q_1')$ denote the union of the covering intervals of the
%sensors of $Q_1'$ in $C'$. Note that the left endpoint of $I(Q_1')$ is
%$\alpha'$. Let $\beta(Q_1')$ be the right endponit.

Let $\beta_1'$ denote the right endpoint of $I(Q_1')$. Thus,  $I(Q_1')=[\alpha',\beta_1']$.
\item
We consider the sensor $s_t$.

We move $s_t$ to a new location $x_t(C')$ as follows. Since the right extension
of $s_t$ is at $\beta'$ and $I(Q_1')\subseteq \calI'$,
either $I(s_t)$  intersects $I(Q_1')$ or $I(s_t)$ is to the right of
$I(Q_1')$.

If the former case happens, the interval $\calI'=[\alpha',\beta']$ is currently covered by
sensors in $Q_1'$ and $s_t$. Hence, we do not need to move the sensors of $Q_2'\cup
\{s_{f(j)}\}$ any more. We then proceed to apply the
same argument on the
sequence of sensors $s_{f(0)},s_{f(1)},\ldots,s_{f(j-1)}$ followed by sensors
in $Q_1'$ in the configuration $C'$. Note that $s_{f(j)}$ is not in
the above sequence of sensors.

Below, we assume $I(s_t)$ is to the right of $I(Q_1')$. We move $s_t$
leftwards until its left extension is at $\beta_1'$.
In this way, sensors in $Q_1'$ and $s_t$ together
cover the interval $[\alpha',\beta_1'+2r_t]$ in $C'$.
Note that $s_t$ is valid in $C'$.
The proof is similar to that in the normal sensor case and we omit it.

%Indeed, since $s_t\in S_{12}$,
%we are allowed to move $s_t$ leftwards such that the right endpoint of $s_t$ is
%at $0$. In $C'$, the right endpoint of $s_t$ is at $\beta(Q_1')$, which is
%larger than zero. Therefore, $s_t$ is still valid in $C'$.

\item
We consider the sensor $s_{f(j)}$.

By the definition of $\calI'=[\alpha',\beta']$,
the right extension of $s_{f(j)}$ is to the left of $\beta'$.
We move $s_{f(j)}$ rightwards such that its right extension is at $\beta'$.
Below, we show that $s_{f(j)}$ is still valid in $C'$.

It is sufficient to show that $s_{f(j)}(C')\in [x_{f(j)}'-2\lambda,x'_{f(j)}]$.
First of all, since $s_{f(j)}$ is valid in $C$ and we have moved it rightwards
to $x_{f(j)}(C')$, it holds that $x_{f(j)}(C')\geq x_{f(j)}(C)
\geq x'_{f(j)}-2\lambda$. Next, we show that $x_{f(j)}(C')\leq x'_{f(j)}$.

On the one hand, since $s_{f(j)}$ is a normal sensor, its right extension is larger than
that of $s_t$ in $C_0$. On the other hand, the right extension of $s_{f(j)}$ in
$C'$ is at $\beta'$, which is the same as the right extension of $s_t$ in $C$.
This implies that comparing with its location $x_{f(j)}'$ in $C_0$,
$s_{f(j)}$ must have been moved leftwards to reach $x_{f(j)}(C')$ in
$C'$. In other words, it holds that $x_{f(j)}(C')\leq x_{f(j)}'$.

%the leftwards moving distance of $s_{f(j)}$ from
%$x_{f(j)}'$ to $x_{f(j)}(C')$ is larger than the leftward moving
%distance of $s_t$ from $x_t'$ to $x_t(C)$.
%Since $x_t(C)\leq x_t'$, $s_{f(j)}$ must have been moved leftwards to
%$x_{f(j)}(C')$ from $x_{f(j)}'$. Thus, we obtain
%$x_{f(j)}(C')\leq x'_{f(j)}$.

This shows that $s_{f(j)}$ is still valid in $C'$.

\item
Finally, we consider the sensors of $Q_2'$.

Recall that currently in $C'$, sensors of $Q_1'$ together cover the interval
$I(Q_1')=[\alpha',\beta_1']$,  $s_t$ covers $[\beta_1',\beta_1'+2r_t]$, and
sensor $s_{f(j)}$ covers $[\beta'-2r_{f(j)},\beta']$.

If $\beta_1'+2r_t\geq \beta'-2r_{f(j)}$, then $\calI'=[\alpha',\beta']$ is
covered by sensors of $Q_1'\cup\{s_t,s_{f(j)}\}$. In this cae, we do not need to move
any sensor of $Q_2'$.

In the following, we assume $\beta_1'+2r_t< \beta'-2r_{f(j)}$. We claim that the
length of the interval $[\beta_1'+2r_t,\beta'-2r_{f(j)}]$ is no larger than the
sum of the lengths of the covering intervals of all sensors in $Q_2'$. Indeed,
we have shown above the the length of $I(Q_1')$ is no smaller than the sum of
the sub-invervals of $\calI'$ covered by the sensors of $Q_1'$ in $C$. On the
other hand, in $C'$ the three intervals $I(Q_1')$, $I(s_t)$, and $I(s_{f(j)})$ are
pairwise disjoint execept at their endpoints. Since $\calI'$ are covered by all
sensors of $Q'_1\cup Q_2'\cup \{s_t,s_{f(j)}\}$ in $C$. The claim must hold.

In the sequel, we give a {\em moving algorithm} to move sensors of $Q_2'$
to cover the interval $[\beta_1'+2r_t,\beta'-2r_{f(j)}]$  such that every sensor
of $Q_2'$ is still valid.

Initially, let $I$ refer to the covering interval $I(s_{f(j)})$, and we use
$\alpha(I)$  to denote the left endpoint of $I$. Thus, $I=[\alpha(I),\beta']$.
During the algorithm we will update $I$ and $\alpha(I)$.
Let $I'=[\alpha',\beta_1'+2r_t]$. Note that initially $\beta_1'+2r_t$ is less
than $\alpha(I)$.
We take an arbitrary sensor $s_k$ from $Q_2'$ and remove it from $Q_2'$. We do
the following on $s_k$.

We move $s_k$ such that the right extension of $s_k$ is at $\alpha(I)$. In the
following, we show that $s_k$ is valid in $C'$ (i.e., after the above movement).
Our goal is to show that $x_k(C')\in [x_k'-2\lambda,x_k']$.

On the one hand, since $s_k$ is a special sensor, the right extension of $s_k$ is
smaller than the right extension of $s_t$ in $C_0$. In $C'$,
the right extension of $s_k$ (i.e., $\alpha(I)$)
is larger than the right extension of
$s_t$ (i.e., $\beta_1'+2r_t$).
This means that the left moving distance of $s_t$ from $x_t'$ in
$C_0$ to $x_t(C')$ in $C'$ is larger than the left moving distance of $s_k$ from
$x_k'$ in $C_0$ to $x_k(C')$ in $C'$. Since we have proved above that $s_t$ is
valid in $C'$, we conclude that the left moving distance of $s_k$ from
$x_k'$ in $C_0$ to $x_k(C')$ in $C'$ is no more than $2\lambda$, i.e.,
$x_k(C')\geq x_k'-2\lambda$.

On the other hand, since $s_k$ is from $Q_2'$, the right extension of $s_k$ is
larger than or equal to the left extension of $s_{f(j)}$ in $C_0$. However, in $C'$, the
right extension of $s_k$ is smaller than or equal to the left extension of
$s_{f(j)}$.  This means that comparing with its location $x_k'$ in $C_0$, $s_k$ must
have been moved leftwards to reach $x_{k}(C')$ in $C'$. In other words, it holds
that $x_k(C')\leq x_k'$.

This proves that $s_k$ is still valid in $C'$.

Next we check whether $I(s_k)$ now intersects $I'$. If yes, then
$\calI'\subseteq I'\cup I(s_k)\cup I$, which means that $\calI'$ is now covered
in $C'$. Hence, we stop the moving algorithm.
Otherwise, we update $I=I(s_k)\cup I$ and $\alpha(I)=\alpha(I)-2r_k$. We proceed
to take another sensor of $Q_2'$ and apply the same moving algorithm as
above.

Due to our previous claim that the length of the interval
$[\beta_1'+2r_t,\beta'-2r_{f(j)}]$ is no larger than the
sum of the lengths of the covering intervals of all sensors in $Q_2'$.
The above moving algorithm will eventually stop and $\calI'$ will be
covered in $C'$. The above also shows that every sensor is valid in
$C'$.

Next, we continue to apply the similar argument on the new sequence of
sensors
$s_{f(1)},s_{f(2)},\ldots,s_{f(j-1)}$ followed by sensors
in $Q_1'$ and $s_t$ in the new configuration $C'$.
Note that $s_{f(j)}$ is not in the above sequence.

\end{enumerate}

The above shows that in either case (i.e., $s_{f(m)}$ is a normal or
special sensor) we can always move $s_t$ leftwards to ``switch'' with
a normal sensor. We keep doing the same thing on $s_t$ and the
remaining sequence of sensors until $s_h$ is considered, in which case
$s_t$ will be moved to $y_t$ (note that $Q_1'=\emptyset$ when $s_h$ is
considered since $s_h$ is in $S_{12}$).

This completes the proof of the base case, i.e., there is always a
configuration $C_1'$ in which the interval $[0,x']$ is
covered by the sensors of $S'_i$ and the
position of the sensor $s_{g(1)}$ in $C_1'$ is $y_{g(1)}$.

We assume inductively that the claim holds for each $k-1$ with $2\leq k\leq i$,
i.e., there is a configuration $C_{k-1}'$ in which the interval $[0,x']$ is
covered by the sensors of $S'_i$ and the
position of the sensor $s_{g(j)}$ for each $1\leq j\leq k-1$ in
$C_{k-1}'$ is $y_{g(j)}$. In the following, we show that the claim
holds for $k$, i.e., there is a configuration $C_{k}'$ in which the
interval $[0,x']$ is still covered by the sensors of $S'_i$ and the
position of the sensor $s_{g(j)}$ for each $1\leq j\leq k$ in $C_{k}'$ is
$y_{g(j)}$. The proof is quite similar to that for the base case and
we only discuss it briefly below.

Let $t=g(k)$. If the position of $s_t$ in $C_{k-1}'$ is $y_t$, then we are
done (with $C_k'= C'_{k-1}$). Otherwise, let $y_t'$ be the
position of $s_t$ in $C_{k-1}'$, with $y_t'\neq y_t$.
Depending on $s_t\in S_{k1}$ or $s_t\in S_{k2}$, there are
two cases.

\begin{itemize}
\item
If $s_t\in S_{k1}$, then $y_t=x'_t$. Since $y_t$ is the rightmost position
to which $s_t$ is allowed to move and $y_t'\neq y_t$, we have
$y_t'<y_t$. Depending on whether $s_t$ is in the critical
set $S_C$, there further are two subcases.

If $s_t\not\in S_C$, then the sensors in $S_C$ always form a coverage of
$[0,x']$ regardless of where $s_t$ is. Thus, if we move $s_t$ to
$y_t$, we obtain a new configuration $C_k'$ from $C_{k-1}'$ in which the
sensors of $S_i'$ still form a coverage of $[0,x']$ and the
position of the sensor $s_{g(j)}$ for each $1\leq j\leq k$ in
$C_{k}'$ is $y_{g(j)}$.

If $s_t\in S_C$, then since $y_t>y_t'$, if we move $s_t$ from $y_t'$ to
$y_t$, $s_t$ is moved to the right. By Lemma \ref{lem:correct10}(c), the
interval $[0,R_{k-1}]$ is covered by the sensors of $S_{k-1}
=\{s_{g(1)},s_{g(2)},\ldots, s_{g(k-1)}\}$ in $C_{k-1}'$
(since they are in positions $y_{g(1)},y_{g(2)},\ldots, y_{g(k-1)}$,
respectively).  When $s_t$ is at $y_t$, $s_t$ still covers the point
$p^+(R_{k-1})$. Thus, after moving
$s_t$ to $y_t$, we obtain a new configuration $C_k'$ from $C_{k-1}'$ in
which the sensors of $S_i'$ still form a coverage of $[0,x']$.

\item
If $s_t\in S_{k2}$, then $S_{k1}=\emptyset$ in this case, and $s_t$ is
the sensor in $S_{k2}$ with the smallest right extension. If $s_t\not\in
S_C$, then by the same argument as above, we can obtain a
configuration $C_k'$ from $C_{k-1}'$ in which the interval $[0,x']$ is still
covered by the sensors of $S'_i$ and the
position of the sensor $s_{g(j)}$ for each $1\leq j\leq k$ in
$C_{k}'$ is $y_{g(j)}$. Below, we discuss the
case when $s_t\in S_C$.

In $S_C$, some sensors must cover the point
$p^+(R_{k-1})$ in $C$. Let $S'$ be the set of sensors in $S_C$ that cover
$p^+(R_{k-1})$ in $C$. If $s_t\in S'$, then $y_t'<y_t$
since $y_t$ is the rightmost position for $s_t$ to cover $p^+(R_{k-1})$. In this
case, again, by the same argument as above, we can move $s_t$ to the
right from $y_t'$ to $y_t$ to obtain a
configuration $C_k'$ from $C_{k-1}'$ in which the interval $[0,x']$ is still
covered by the sensors of $S'_i$.
Otherwise (i.e., $s_t\not\in S'$), consider a sensor $s_h$ in $S'$.
Let the sensors in $S_C$ between $s_h$ and $s_t$ in the cover order
be $s_h,s_{f(1)},s_{f(2)},\ldots,s_{f(m)},s_t$ (this sequence may
contain only $s_h$ and $s_t$). Note that for each $1\leq j\leq k-1$, the
sensor $s_{g(j)}$ is not in this sequence.
Then by using a similar sequence of switch operations as
for the base case, we can obtain a new configuration $C_k'$ from $C_{k-1}'$
such that the sensors of $S'_i$ still form a coverage of $[0,x']$.
Again, the position of the sensor $s_{g(j)}$ for each $1\leq j\leq k$ in
$C_{k}'$ is $y_{g(j)}$.

\end{itemize}

This proves that the claim holds for $k$. Therefore, the claim is
true. The lemma can then be easily proved by using this claim, as
follows.

Suppose the largest left-aligned interval that can be covered by the
sensors of $S_i'$ is $[0,x']$. Then by the above claim,
there always exists a
configuration $C^*$ for $S'_i$ in which the interval $[0,x']$ is
also covered by the sensors of $S'_i$ and for each $1\leq j\leq i$, the
position of the sensor $s_{g(j)}$ in $C^*$ is $y_{g(j)}$. Recall that
$S_i=\{s_{g(1)},s_{g(2)},\ldots,s_{g(i)}\}$.
Then for each sensor $s_t\in S_i'\setminus S_i$, the rightmost point
that can be covered by $s_t$ is $x'_t+r_t$. Recall that in the
configuration $C_i$, for each $1\leq j\leq i$, the
position of the sensor $s_{g(j)}$ is $y_{g(j)}$, and for each sensor
$s_t\in S_i'\setminus S_i$, the position of $s_t$ is $x_t'$. Further,
by the definition of $S_i'$, the right extensions of all sensors in
$S_i'$ are at most $R_i$ in $C_i$.
%Note that the sensors of $S'_i$ in $C^*$ still form a coverage of $[0,x']$,
%but the set $S_i=\{s_{g(1)},s_{g(2)},\ldots,s_{g(i)}\}$ (the positions of these
%sensors are all in $C^*$) can cover at most $[0,R_i]$ and the remaining
%sensors of $S_i'$ (i.e., those in $S_i'\setminus S_i$) cannot cover beyond $R_i$.
Therefore, the right extensions of all sensors in $S_i'$ are also at most $R_i$ in $C^*$, implying that $x'\leq R_i$. On the other hand, by Lemma
\ref{lem:correct10}(c), the sensors of $S_i$ form a coverage of
$[0,R_i]$ in $C^*$. Thus, $[0,x'] =[0,R_i]$, and the lemma follows.
\end{proof}

Finally, we prove the correctness of our algorithm based on Lemma \ref{lem:correct20}.
Suppose our algorithm reports $\lambda<\lambda^*$ in step $i$.  Then
according to the algorithm, $R_{i-1}<L$ and both $S_{i1}$ and $S_{i2}$ are
$\emptyset$. Let $S_{i-1}'$ be the set of sensors whose right
extensions are at most $R_{i-1}$ in $C_{i-1}$. Since both $S_{i1}$ and
$S_{i2}$ are $\emptyset$, no sensor in $S\setminus S'_{i-1}$ can
cover any point to the left of the point $p^+(R_{i-1})$ (and including
$p^+(R_{i-1})$). By Lemma \ref{lem:correct20}, $[0,R_{i-1}]$ is the largest
left-aligned interval that can be covered by the sensors of $S_{i-1}'$.
Hence, the sensors in $S$ cannot cover the interval
$[0,p^+(R_{i-1})]$. Due to $R_{i-1}<L$, we have
$[0,p^+(R_{i-1})]\subseteq [0,L]$; thus the sensors of $S$ cannot cover
$B=[0,L]$. In other words, there is no feasible solution
for the distance $\lambda$. This establishes the correctness of our algorithm.

\subsubsection{The Algorithm Implementation}
\label{subsec:imple}

For the implementation of the algorithm, we first discuss a straightforward approach that runs in $O(n\log n)$ time. Later, we give another approach which, after $O(n\log n)$ time preprocessing, can determine whether $\lambda^*\leq \lambda$ in $O(n)$ time for any given $\lambda$. Although the second approach does not change the overall running time of our decision algorithm, it does help our
optimization algorithm in Section \ref{sec:algogeneral} to run faster.

In the beginning, we sort the $2n$ extensions of all sensors by the
$x$-coordinate, and move each sensor $s_i\in S$ to $x_i'$ to
produce the initial configuration $C_0$. During the algorithm, for each step $i$, we maintain two sets of sensors, $S_{i1}$ and $S_{i2}$, as defined earlier.
To this end, we
sweep along the $x$-axis and maintain $S_{i1}$ and $S_{i2}$,
using two {\em sweeping points} $p_1$ and $p_2$, respectively.
Specifically, the point $p_1$ follows the
positions $R_0$ ($=0$), $R_1,R_2,\ldots$, and $p_2$ follows the positions
$R_0+2\lambda,R_1+2\lambda,R_2+2\lambda,\ldots$. Thus, $p_2$ is kept
always by a distance of $2\lambda$ to the right of $p_1$.  To maintain the
set $S_{i1}$, when the sweeping point $p_1$ encounters the left
extension of a sensor, we insert the sensor into $S_{i1}$;
when $p_1$ encounters the right extension of a sensor, we delete the sensor from
$S_{i1}$. In this way, when the sweeping point $p_1$ is at $R_{i-1}$, we have the
set $S_{i1}$ ready. To maintain $S_{i2}$, the situation is slightly
more subtle. First, whenever the sweeping point $p_2$ encounters the left
extension of a sensor, we insert the sensor into $S_{i2}$. The subtle part
is at the deletion operation. By the definition of $S_{i2}$, if the left
extension of any sensor is less than or equal to $R_{i-1}$, then it should
not be in $S_{i2}$. Since eventually the first sweeping point $p_1$ is at
$R_{i-1}$ in step $i$, whenever a sensor is inserted into
the first set $S_{i1}$, we need to delete that sensor from $S_{i2}$.
%(Note that the above sensor $s$ must be in $S_{i2}$. To see this, since $s$ is inserted into $S_{i1}$, the left extension of $s$ must be on the left of $p_1$. Since $p_1$ is on the left of $p_2$, the left extension of $s$ must be on the left of $p_2$, implying that $s$ has been inserted into $S_2$).
Thus, a deletion on $S_{i2}$ happens only when the same sensor is
inserted into $S_{i1}$.
In addition, we need a {\em search} operation on $S_{i1}$ for finding the
sensor in $S_{i1}$ with the largest right extension, and a {\em search}
operation on $S_{i2}$ for finding the sensor in $S_{i2}$ with the smallest right
extension.

It is easy to see that there are $O(n)$ insertions and deletions in
the entire algorithm. Further, the search operations on both $S_{i1}$
and $S_{i2}$ are dependent on the right extensions of the senors.
By using a balanced binary search tree to represent each of these two
sets in which the right extensions of the sensors are used as keys,
the algorithm runs in $O(n\log n)$ time.

In the sequel, we give the second approach which, after $O(n\log n)$ time
preprocessing, can determine whether $\lambda^*\leq \lambda$ in $O(n)$ time for any given $\lambda$.

In the preprocessing, we compute two sorted lists $S_L$ and $S_R$,
where $S_L$ contains all sensors sorted by the increasing values of
their left extensions and $S_R$ contains all sensors sorted by the
increasing values of their right extensions. Consider any value
$\lambda$. Later in the algorithm, for each step $i$, our algorithm
will determine
the sensor $s_{g(i)}$ by scanning the two lists. We will show that
when the algorithm finishes, each sensor in $S_L$ is scanned at most
once and each sensor in $S_R$ is scanned at most three times, and
therefore, the algorithm runs in $O(n)$ time.

Initially, we move each sensor $s_i\in S$ to $x_i'$ to
produce the initial configuration $C_0$.  During the algorithm, we
sweep along the $x$-axis, using a {\em sweeping point} $p_1$.
Specifically, the point $p_1$ follows the positions $R_0$ ($=0$), $R_1,R_2,\ldots$.
With a little abuse of notation, we also let $p_1$ be the coordinate of the current position of $p_1$. Initially, $p_1=0$.

Consider a general step $i$ and we need to determine the sensor
$s_{g(i)}$. In the beginning of this step, $p_1$ is at the position
$R_{i-1}$. We scan the list $S_L$ from the beginning until the left
extension of the next sensor is strictly to the right of $p_1$. For
each scanned sensor $s_j$, if its right extension is strictly to the
right of $p_1$, then it is in $S_{i1}$ by the definition of $S_{i1}$.
Thus, the above scanning procedure can determine $S_{i1}$, after which
we can easily find the sensor $s_{g(i)}$ in $S_{i1}$ if $S_{i1}\neq
\emptyset$. In fact, we can compute $s_{g(i)}$ directly in the above
scanning procedure. In addition, for any sensor in $S_L$ that is
scanned, we remove it from $S_L$ (and thus it will never be scanned
later in the algorithm any more). If $S_{i1}\neq \emptyset$, then
$s_{g(i)}$ is determined and we move the sweeping point $p_1$ to the
right extension of $s_{g(i)}$ (i.e., $p_1=R_i$). If $p_1\geq L$, we
terminate the algorithm and report $\lambda^*\leq \lambda$; otherwise,
we continue on to step $i+1$. Below, we discuss the case
$S_{i1}=\emptyset$.

\begin{figure}[t]
\begin{minipage}[t]{\linewidth}
\begin{center}
\includegraphics[totalheight=0.8in]{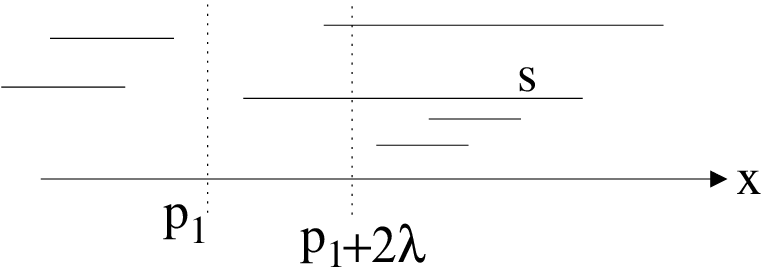}
\caption{\footnotesize Illustrating the search for the sensor $s$: the
sensors before $p_1$ are removed from $S_R$; the
two sensors below $s$ are redundant sensors for this step.}
\label{fig:redundant}
\end{center}
\end{minipage}
\vspace*{-0.15in}
\end{figure}

If $S_{i1}=\emptyset$, then $s_{g(i)}$ is the sensor in $S_{i2}$ with the smallest right extension if $S_{i2}\neq \emptyset$.
Specifically, among the sensors (if any) whose left extensions are
larger than $p_1$ ($=R_{i-1}$) and at most $p_1+2\lambda$, $s_{g(i)}$
is the sensor with the smallest right extension. To find $s_{g(i)}$,
we scan the list $S_R$ from the beginning until we find the first
sensor $s$ whose left extension is larger than $p_1$ and at most
$p_1+2\lambda$ (see Fig.~\ref{fig:redundant}). If such a sensor $s$ does not exist, then
$S_{i2}=\emptyset$, and we terminate the algorithm and report $\lambda^*>\lambda$.
Below, we assume we have found such a sensor $s$. Since the sensors in
$S_R$ are sorted by their right extensions, $s_{g(i)}$ is exactly the
sensor $s$. Further, unlike the scanning on $S_L$ where each scanned
sensor is removed immediately, for each scanned sensor in $S_R$, we
remove it only if its right extension is to the left of $p_1$
(see Fig.~\ref{fig:redundant}). Specifically, when we are searching the above sensor $s$ during scanning $S_R$, we remove from $S_R$ those sensors whose right extensions are to the left of $p_1$.
It is
easy to see that the removed sensors (if any) are consecutive from the
beginning of $S_R$. Let $S_R$ be the list after all removals. If
$s_{g(i)}$ ($=s$) is not the first sensor in $S_R$, then for any
sensor $s_j$ in $S_R$ before $s_{g(i)}$, the left extension of $s_j$
must be larger than $p_1+2\lambda$; we call the sensors in $S_R$
before $s_{g(i)}$ the {\em redundant sensors} for the step $i$
(see Fig.~\ref{fig:redundant}). Later
we will show that these sensors will be not redundant any more in the
later algorithm.
In summary, for each sensor scanned in the original $S_R$ in this
step, it is removed, or a redundant sensor, or $s_{g(i)}$. Finally, we
move $s_{g(i)}$ to the left such that its left extension is at $p_1$,
and then we move $p_1$ to the right extension of $s_{g(i)}$ (i.e.,
$p_1=R_i$). If $p_1\geq L$, we terminate the algorithm and report
$\lambda^*\leq \lambda$; otherwise, we continue on the next step
$i+1$.

To analyze the algorithm, it is easy to see each sensor in this list
$S_L$ is scanned at most once. For the list $S_R$, this may not be the
case as the redundant sensors may be scanned again in the later
algorithm. However, the following lemma shows that this would not be
an issue.

\begin{lemma}
If a sensor $s_j$ is a redundant sensor for the step $i$, then it will be not a redundant sensor again in the later algorithm.
\end{lemma}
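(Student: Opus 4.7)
The plan is to track exactly one inequality about the redundant sensor's left extension, and show that this inequality rules out redundancy at every subsequent step. Let $s=s_{g(i)}$ be the sensor chosen at step $i$. Since $s_{g(i)}$ was found from $S_R$ (because the redundant case only arises when $S_{i1}=\emptyset$), we have $s\in S_{i2}$, i.e.\ $R_{i-1}<x'_s-r_s\le R_{i-1}+2\lambda$; after moving $s$ so that its left extension is at $R_{i-1}$, its right extension becomes $R_i=R_{i-1}+2r_s$, so $x'_s+r_s\le R_i+2\lambda$. For the redundant sensor $s_j$ we know two things: it lies before $s$ in $S_R$, so $x'_j+r_j<x'_s+r_s$, and its left extension satisfies $x'_j-r_j>R_{i-1}+2\lambda$.

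The single quantitative step I would carry out carefully is to deduce from $x'_j+r_j<x'_s+r_s\le R_i+2\lambda$ (together with $r_j>0$) the crucial bound
\[
x'_j-r_j \;<\; R_i+2\lambda.
\]
This is the one calculation in the argument; everything else will be combinatorial.

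Having this, I would argue as follows. Consider any later step $i'>i$ at which $s_j$ is still present in $S_R$, i.e.\ $s_j$ was not removed, so $x'_j+r_j>R_{i'-1}$. Since the sequence $\{R_t\}$ is nondecreasing (Lemma~\ref{lem:correct10}(d)), we get $x'_j-r_j<R_i+2\lambda\le R_{i'-1}+2\lambda$. I now split on where $x'_j-r_j$ sits relative to $R_{i'-1}$: if $x'_j-r_j\le R_{i'-1}$, then together with $x'_j+r_j>R_{i'-1}$ the sensor $s_j$ covers $p^+(R_{i'-1})$, placing $s_j\in S_{i'1}$; in particular $S_{i'1}\ne\emptyset$, and the algorithm at step $i'$ does not scan $S_R$ at all, so $s_j$ cannot be redundant there. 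Otherwise $R_{i'-1}<x'_j-r_j\le R_{i'-1}+2\lambda$, so $s_j\in S_{i'2}$; but then, since $S_R$ is sorted by right extension and $s_{g(i')}$ is defined as the \emph{first} sensor in $S_R$ whose left extension lies in $(R_{i'-1},R_{i'-1}+2\lambda]$, $s_j$ is either equal to $s_{g(i')}$ or lies after $s_{g(i')}$ in $S_R$, hence is not scanned before $s_{g(i')}$ and again is not redundant.

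The only real obstacle is the quantitative step above: one needs the right chain from ``$s_j$ precedes $s=s_{g(i)}$ in $S_R$'' and ``$s$ belongs to $S_{i2}$'' to ``$x'_j-r_j<R_i+2\lambda$.'' Once this is established, the case analysis on whether $x'_j-r_j\le R_{i'-1}$ or not handles every later step $i'$ uniformly, and monotonicity of $R_{i'-1}$ lets the single bound carry forward. No further induction or multi-step case analysis is required.
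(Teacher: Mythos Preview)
Your proposal is correct and follows essentially the same approach as the paper: both derive the key bound $x'_j - r_j \le R_i + 2\lambda$ from the facts that $s_j$ precedes $s_{g(i)}$ in $S_R$ and $s_{g(i)}\in S_{i2}$, then use the monotonicity of $\{R_t\}$ to propagate the bound to all later steps. The only difference is cosmetic: the paper concludes in one line by noting that, \emph{by definition}, a redundant sensor at any step has left extension strictly larger than $p_1+2\lambda$, so the bound immediately rules out future redundancy; your case split on whether $x'_j-r_j\le R_{i'-1}$ is simply a more explicit way of unpacking that same observation.
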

\begin{proof}
Consider the moment right after the step $i$. The sweeping point $p_1$ is at the right extension of $s_{g(i)}$.
To prove the lemma, since $p_1$ always moves to the right, by the definition of the redundant sensors, it is sufficient to show that the left extension of $s_j$ is at most $p_1+2\lambda$, as follows.

Consider the moment in the beginning of the step $i$ (the sensor $s_{g(i)}$ has not been moved to the left). Since $s_j$ is a redundant sensor for the step $i$, the sensor $s_{g(i)}$ is from $S_{i2}$ and the left extension of $s_{g(i)}$ is at most $R_{i-1}+2\lambda$. Thus, the right extension of $s_{g(i)}$ is at most $R_{i-1}+2r_{g(i)}+2\lambda$. Recall that the right extension of $s_j$ is less than that of $s_{g(i)}$ (since $s_j$ is before $s_{g(i)}$ in $S_R$). Therefore, the right extension of $s_j$ is at most $R_{i-1}+2r_{g(i)}+2\lambda$. Now consider the moment right after the step $i$. The sweeping point $p_1$ is at the position $R_{i-1}+2r_{g(i)}$. Hence, the right extension of $s_j$ is at most $p_1+2\lambda$, which implies that the left extension of $s_j$ is at most $p_1+2\lambda$. The lemma thus follows.
\end{proof}

The preceding lemma implies that any sensor can be a redundant sensor in at most one step. Therefore, for the list $S_R$, each sensor has been scanned at most twice when it is removed, once as a redundant sensor, and once when it is found as $s_{g(i)}$. Thus, each sensor in $S_R$ is scanned at most three times. Hence, after the two lists $S_L$ and $S_R$ are obtained, the running time of the algorithm is $O(n)$.

\begin{theorem}\label{theo:generaldecision}
After $O(n\log n)$ time preprocessing, for any $\lambda$, we can determine whether
$\lambda^*\leq \lambda$ in $O(n)$ time; further, if
$\lambda^*\leq \lambda$, we can compute a feasible solution in $O(n)$ time.
\end{theorem}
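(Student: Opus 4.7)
The plan is to combine the greedy decision procedure described above with two sorted lists that serve as the only global data structures, and then charge each scan against a small number of events per sensor. In the preprocessing, I would compute and store the two lists $S_L$ and $S_R$ of the input sensors, where $S_L$ is sorted by increasing left extension $x_j - r_j$ and $S_R$ is sorted by increasing right extension $x_j + r_j$. Sorting takes $O(n\log n)$ time, and because a shift of every sensor by the same additive constant $\lambda$ preserves both orderings, these two lists can be reused for every query value of $\lambda$ without re-sorting.

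Given a query $\lambda$, I would simulate the step-by-step algorithm of Section~\ref{subsec:desc}. In step $i$, with sweeping point $p_1$ sitting at $R_{i-1}$, I first walk along the current front of $S_L$ popping every sensor whose left extension is at most $p_1$; among these, the ones whose right extension also exceeds $p_1$ form the set $S_{i1}$, and the argmax of the right extension among them is $s_{g(i)}$ when $S_{i1}\neq \emptyset$. If instead $S_{i1}=\emptyset$, I walk along the current front of $S_R$: any sensor whose right extension is $\leq p_1$ is discarded permanently; the first remaining sensor whose left extension lies in $(p_1, p_1+2\lambda]$ is $s_{g(i)}$ (this is exactly the element of $S_{i2}$ with smallest right extension); any sensors strictly before it, whose left extensions exceed $p_1+2\lambda$, are declared redundant for step $i$ and left in place. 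Depending on which case occurs, I set $y_{g(i)}$ as prescribed and advance $p_1$ to $R_i$. The procedure terminates as soon as $R_i\geq L$ (reporting $\lambda^*\leq \lambda$ and returning the configuration, which is a feasible solution by Lemma~\ref{lem:correct10}(c,e)) or as soon as both candidate sets are empty (reporting $\lambda^*>\lambda$, whose correctness is justified by Lemma~\ref{lem:correct20}).

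The only nontrivial step in the running-time analysis is bounding the total work spent on the second list $S_R$, since a sensor that is redundant in one step might, in principle, be scanned again in a later step. This is precisely the content of the last lemma proved just before the theorem: the inequality chain
\[
\text{right ext}(s_j) \;\le\; \text{right ext}(s_{g(i)}) \;=\; R_{i-1} + 2r_{g(i)} \;=\; p_1^{\text{new}} + 2\lambda,
\]
combined with the fact that the left extension of $s_j$ is at most its right extension, shows that a sensor redundant at step $i$ no longer satisfies the ``$>p_1+2\lambda$'' condition in any later step, so it can be the first strictly-in-range sensor (and hence the chosen $s_{g(\cdot)}$) the next time it is examined, or it will have been popped as a removed sensor. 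Thus each sensor is scanned in $S_L$ at most once and in $S_R$ at most thrice (once to be discarded, once as a redundant sensor, once as the chosen $s_{g(i)}$), yielding total query time $O(n)$.

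The hard part I expect is getting the bookkeeping exactly right so that the ``current front'' of $S_R$ is advanced correctly across steps: in particular, I need to be sure that redundant sensors remain in $S_R$ in their original order so that they can be re-examined in the next step (where the above lemma guarantees they are no longer redundant), while truly outdated sensors (right extension $\le p_1$) are flushed permanently. Once this invariant is maintained, the charging argument above closes the analysis, and combining it with the correctness established through Lemmas~\ref{lem:correct10} and~\ref{lem:correct20} proves the theorem.
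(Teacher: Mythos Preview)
Your proposal is correct and follows essentially the same approach as the paper: the same two sorted lists $S_L$ and $S_R$, the same scan-and-pop mechanism for $S_{i1}$ and $S_{i2}$, the same notion of redundant sensors in $S_R$, and the same charging argument (once in $S_L$, at most thrice in $S_R$). One small slip: in your displayed chain the middle equalities are off---the right extension of $s_{g(i)}$ in $C_0$ is not $R_{i-1}+2r_{g(i)}$ but at most $R_{i-1}+2\lambda+2r_{g(i)}$, and $R_{i-1}+2r_{g(i)}$ equals $p_1^{\text{new}}$ rather than $p_1^{\text{new}}+2\lambda$; the correct bound $\text{right ext}(s_j)\le R_{i-1}+2\lambda+2r_{g(i)}=p_1^{\text{new}}+2\lambda$ is exactly what the paper's lemma establishes, so your invocation of that lemma closes the argument.
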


\subsubsection{Another Decision Version}
\label{subsec:more}

Our optimization algorithm in Section \ref{sec:algogeneral} also needs to
determine whether $\lambda^*$ is strictly less than $\lambda$ (i.e.,
$\lambda^*<\lambda$) for any $\lambda$. By
%slightly
modifying our algorithm for Theorem \ref{theo:generaldecision}, we have
the following result.

\begin{theorem}\label{theo:generaldecision2}
After $O(n\log n)$ time preprocessing, for any value $\lambda$,
we can determine whether $\lambda^*<\lambda$ in $O(n)$ time.
\end{theorem}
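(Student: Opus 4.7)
The plan is to adapt the $O(n)$-query decision algorithm of Theorem \ref{theo:generaldecision} by tightening every inequality that is ``wasted'' when the non-strict version allowed a displacement of exactly $\lambda$. Because the feasible region for each fixed $\lambda$ is closed and there are only finitely many combinatorially distinct placements of the sensors, $\lambda^*$ is attained by some configuration; hence $\lambda^*<\lambda$ holds if and only if there is a feasible configuration whose every sensor displacement is strictly less than $\lambda$. I would therefore design a variant of the greedy that succeeds exactly on inputs with this property.

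Two local modifications to Section \ref{subsec:desc} suffice. First, in the definition of $S_{i2}$, I would require the left extension in $C_0$ of each candidate sensor to lie in the \emph{open} interval $(R_{i-1},\,R_{i-1}+2\lambda)$; a sensor on the upper endpoint of the original interval would be moved to displacement exactly $\lambda$, which is now forbidden. Second, a type I sensor stands still in $C_0$ with displacement exactly $\lambda$, so each type I selection must be thought of as accompanied by an infinitesimal leftward slack $\epsilon_i>0$. I would propagate this slack by maintaining a Boolean flag that records whether any type I sensor has been added to the critical set so far, and switching the final termination test from $R_i\ge L$ to $R_i>L$ whenever the flag is raised. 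Internally, the slack of a type I sensor $s_{g(j)}$ is absorbable by the next sensor $s_{g(j+1)}$: the standing assumption that distinct sensors have distinct covering-interval endpoints forces the left extension of $s_{g(j+1)}$ in $C_0$ to be strictly less than $R_j$ when $s_{g(j+1)}$ is of type I, and to lie strictly above $R_j$ (leaving room for an equal leftward nudge without violating the strict bound) when $s_{g(j+1)}$ is of type II.

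The correctness proof mirrors Section \ref{subsec:correct}. Lemma \ref{lem:correct10} is purely combinatorial and transfers verbatim. The analogue of Lemma \ref{lem:correct20} asserts that the modified greedy covers the largest left-aligned interval achievable by $S'_i$ under \emph{strict} displacement bounds; its proof re-runs the same induction and the two ``switch'' case analyses, with the additional arithmetic observation that in every switch the sensor moved outward ends up strictly between its original coordinate and $x_i+\lambda$, so its displacement inherits strictness from the sensor it replaces (which is strict by the inductive hypothesis). This careful propagation of strict bounds through the switch argument is the main obstacle I expect, but it reduces to verifying a single sharpened inequality in each of the overlap and non-overlap subcases. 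For implementation, neither the $O(n\log n)$ preprocessing (the sorted lists $S_L$ and $S_R$) nor the $O(n)$ sweep of Section \ref{subsec:imple} changes: only the comparison for admission into $S_{i2}$ flips from $\le$ to $<$, and the termination test consults the type I flag. The time bound claimed in Theorem \ref{theo:generaldecision2} follows.
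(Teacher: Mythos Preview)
Your algorithmic modifications---tightening $S_{i2}$ to the open interval and requiring $R_i>L$ once a type~I sensor has been used---match the paper's approach. The paper makes the same change to $S_{i2}$, runs the greedy until $R_m\ge L$, and then post-processes: if $R_m>L$ or no type~I sensor appears it reports $\lambda^*<\lambda$; if $R_m=L$ with a type~I sensor present it checks whether some unused sensor $s_t$ satisfies $x_t\in(L-\lambda-r_t,\,L+\lambda+r_t)$, reporting $\lambda^*<\lambda$ if so and $\lambda^*\ge\lambda$ otherwise. Your ``continue the loop with the sharpened termination test'' is equivalent: one more greedy step from $R_m=L$ scans exactly the sensors meeting the paper's extra-sensor condition (the union $S_{(m+1)1}\cup S_{(m+1)2}$ with your strict upper bound is precisely that open interval), so the two algorithms return the same answer.

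Where your write-up has a genuine gap is the correctness argument. Your ``analogue of Lemma~\ref{lem:correct20}'' cannot assert that $[0,R_i]$ is the largest left-aligned interval achievable under \emph{strict} displacement bounds, because the greedy computes $R_i$ with every type~I sensor sitting at displacement exactly $\lambda$; under strict bounds the supremum of coverable right endpoints is $R_i$ but it is not attained. The switch arguments in the proof of Lemma~\ref{lem:correct20} move sensors to the positions $y_{g(j)}$, and for type~I choices those are precisely the forbidden displacement-$\lambda$ positions, so the induction does not transfer verbatim and ``strictness is inherited through the switch'' is not enough. Consequently, when your loop halts with $R_m=L$, the type~I flag raised, and no sensor available in step $m{+}1$, you have not yet shown $\lambda^*\ge\lambda$. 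The paper fills this gap by a different argument: take the \emph{rightmost} type~I sensor $s_{g(j)}$, imagine capping its displacement at $\lambda-\epsilon$, observe that for $\epsilon$ small enough the greedy behaves identically but ends with $R_m=L-\epsilon$, and then invoke the ordinary (non-strict) Lemma~\ref{lem:correct20} together with the absence of any sensor near $L$ to conclude that $[0,L]$ cannot be covered with every displacement at most $\lambda-\epsilon$. You will need an argument of this shape; the strict-bound variant of Lemma~\ref{lem:correct20} you sketched is not the right tool.
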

\begin{proof}
We first apply the algorithm for Theorem \ref{theo:generaldecision} on
the value $\lambda$. If the algorithm reports $\lambda^*>\lambda$, then
we know $\lambda^*<\lambda$ is false.
Otherwise, we have $\lambda^*\leq \lambda$. In the following, we modify
%slightly
the algorithm for Theorem \ref{theo:generaldecision}
to determine whether $\lambda^*<\lambda$, i.e., $\lambda^*$ is strictly
smaller than $\lambda$. Note that this is equivalent to deciding whether
$\lambda^*\leq \lambda-\epsilon$ for any arbitrarily small constant
$\epsilon>0$. Of course, we cannot enumerate all such small values
$\epsilon$. Instead, we add a new mechanism to the algorithm for
Theorem \ref{theo:generaldecision} such that the resulting
displacement of each sensor is strictly smaller than $\lambda$.

At the start of the algorithm, we move all sensors to the right by a
distance $\lambda$ to obtain the configuration $C_0$. But, the displacement
of each sensor should be strictly less than  $\lambda$. To ensure this,
later in the algorithm, if
the destination of a sensor $s_i$ is set as $y_i=x_i'$, then we
adjust this destination of $s_i$ by moving it to the left slightly
such that $s_i$'s displacement is strictly less than $\lambda$.

Consider a general step $i$ of the algorithm. We define the set
$S_{i1}$ in the same way as before, i.e., it consists of all sensors
covering the point $p^+(R_{i-1})$ in $C_{i-1}$. If $S_{i1}\neq\emptyset$,
then the algorithm is the same as before. In this case, the sensor
$s_{g(i)}$ chosen in this step has a displacement of exactly $\lambda$,
which is actually ``illegal" since the displacement of each sensor should be
strictly less than $\lambda$. We will address this issue later.
However, if $S_{i1}=\emptyset$, then
the set $S_{i2}$ is defined slightly different from before. Here, since
$S_{i1}=\emptyset$, we have to use a sensor to the right of
$R_{i-1}$ in $C_{i-1}$ to cover $p^+(R_{i-1})$. Since the displacement
of each sensor should be strictly less than $\lambda$, we do not allow any
sensor to move to the left by exactly the distance $2\lambda$.
To reflect this difference, we define $S_{i2}$ as the set of sensors in
$C_{i-1}$ each of which has its left extension larger than $R_{i-1}$ and
%its right extension
{\em strictly smaller} than $R_{i-1}+2\lambda$ (previously, it was ``at most").
In this way, if we move a sensor in $S_{i2}$ to the left to cover
$p^+(R_{i-1})$, then the displacement of that sensor is strictly less than
$\lambda$.  The rest of the algorithm is the same as before.
We define the type I and type II sensors in the same way as before.

If the algorithm terminates without finding a feasible solution, then it must
be $\lambda^*\geq \lambda$; otherwise, the algorithm finds a ``feasible"
solution $SOL$ with a critical set
$S^c=\{s_{g(1)},s_{g(2)},\ldots,s_{g(m)}\}$. But, this does not
necessarily mean $\lambda^*<\lambda$ since in $SOL$, the displacements
of some sensors in $S^c$ may be exactly $\lambda$. Specifically, all
type I sensors in $S^c$ are in the same positions as they are in $C_0$
and thus their displacements are exactly $\lambda$. In contrast,
during the algorithm, the type II sensors in $S^c$ have been moved
strictly to the left with respect to their positions in $C_0$; further,
due to our new definition of the set $S_{i2}$, the displacements
of all type II sensors are strictly less than $\lambda$. Therefore, if
there is no type I sensor in $S^c$, then the displacement of each
sensor in $S^c$ is strictly less than $\lambda$ and thus we have
$\lambda^*<\lambda$. Below we assume $S^c$ contains at least
one type I sensor.  To make sure that $\lambda^*<\lambda$ holds,
we need to find a {\em real feasible solution} in which the displacement
of each sensor in $S$ is strictly less than $\lambda$. On the other hand,
to make sure that $\lambda^*\geq \lambda$ holds, we must show that there is no
real feasible solution.
For this, we apply the following algorithmic procedure.

We seek to adjust the solution $SOL$ to produce a real feasible
solution. According to our algorithm, for each sensor $s_i\in S^c$, if
it is a type I sensor, then $y_i=x_i'$ and thus its displacement is
exactly $\lambda$; otherwise, its displacement is less than
$\lambda$. The purpose of our adjustment of $SOL$ is to move all
type I sensors slightly to the left so that (1) their displacements are
strictly less than $\lambda$, and (2) we can still form a coverage of $B$. In
certain cases, we may need to use some sensors in $S\setminus S^c$ as well.
Also, we may end up with the conclusion that no real feasible solution
exists.

According to our algorithm, after finding the last sensor $s_{g(m)}$ in $S^c$,
we have $R_m\geq L$. If $R_m>L$, then we can always adjust $SOL$ to obtain a real
feasible solution by shifting each sensor in $S^c$ to the left by a very
small value $\epsilon$ such that (1) the resulting displacement of each sensor
in $S^c$ is less than $\lambda$, and (2) the sensors of $S^c$ still form a coverage of
$B$. Note that there always exists such a small value $\epsilon$ such that the above adjustment is  possible. Therefore, if $R_m>L$, then
we have $\lambda^*<\lambda$.

If $R_m=L$, however, then the above strategy does not work. There are two cases.
If there is a sensor $s_t\in S\setminus S^c$ such that $x_t\in
(L-\lambda-r_t,L+\lambda+r_t)$, then we can also obtain a real feasible solution
%
%DANNY 1
%
% I do not think the above condition (L-\lambda,L+\lambda)$ is sufficient
% to capture all possibly useful sensors in this situation.
%
by shifting the sensors of $S^c$ slightly to the left as above and using the
sensor $s_t$ to cover the remaining part of $B$ around $L$ that is no
longer covered
by the shifted sensors of $S^c$; thus we also have $\lambda^*<\lambda$.
Otherwise, we claim that it must be $\lambda^*\geq \lambda$.
Below we prove this claim.

Consider the rightmost Type I sensor $s_i$ in $S^c$. Suppose
$s_i=s_{g(j)}$, i.e., $s_i$ is determined in step $j$. Thus,
$s_i$ is at $x_i'$ in $SOL$. Let $\epsilon>0$ be an arbitrarily small
value (we will determine below how small it should be).  Since we
have assumed that the extensions of all sensors are different,
the value $\epsilon$ can be made small enough such that by moving $s_i$ to
$x_i'-\epsilon$ in $C_0$, the relative order of
the extensions of all sensors remains the same as before.
Further, according to our algorithm above, the value $\epsilon$ can also be
small enough such that the behavior of the algorithm is the same as
before, i.e., the algorithm finds the same critical set $S^c$ with the same
cover order as before.
%Haitao--->
It is easy to see that such a small value $\epsilon$ always exists.
Note that our task here is to prove our claim $\lambda^*\geq
\lambda$ is true, and knowing that such a value
$\epsilon$ exists is sufficient for our purpose
and we need not actually find such a value $\epsilon$ in our algorithm.

Now, in step $j$, the new value $R_j$, which is the right extension of
$s_i$, is $\epsilon$ smaller than its value before since $s_i$ was at
$x'_i$ in $C_0$. Because $s_i$ is the rightmost type I sensor in $S^c$,
after step $j$, all sensors in $S^c$ determined after $s_i$ (if any) are
of type II and thus are moved to the left such that
they are all in attached positions along with $s_i$, which implies that the
right extension of the last sensor $s_{g(m)}$ in $S^c$ is also $\epsilon$
smaller than its previous value (which was $L$). Hence, after
step $m$, the sensors in $S_m$ covers $[0,L-\epsilon]$.
%Haitao--->
As discussed above, if $\epsilon$ is made small enough, the behavior of the
algorithm is the same as before. By a similar analysis,
we can also establish a result similar
to Lemma \ref{lem:correct20}. Namely, $[0,L-\epsilon]$ is the largest
left-aligned interval that can be covered by the sensors in $S_m'$ in this
setting (here, $S_m'$ is the set of sensors whose right extensions are at most
$L-\epsilon$ in the configuration after step $m$).
We omit the detailed analysis for this, which is very similar to that for
Lemma \ref{lem:correct20}.
%Thus, after step $m$, the interval $(L-\epsilon,L]$ is no longer covered.
Note that $S^c=S_m$.
%<---
Since there is no sensor $s_t\in S\setminus S^c$ such that $x_t\in
%
%DANNY 2
%
% The argument above may be incomplete.  One might say that the chain of
% attached positions could broken somewhere NOT AT THE END (say, between two
% sensors in the chain) and find a certain sensor in $S \setminus S^c$
% to cover the gap thus resulted.  It should be argued that such sensor
% cannot exist.
%
(L-\lambda-r_t,L+\lambda+r_t)$, the interval $(L-\epsilon,L]$ cannot be fully covered
by the sensors in $S$.
%
%DANNY 3
%
% Again, the above condition (L-\lambda,L+\lambda)$ should be changed.
% The argument here should also be changed.  It seems to me that $s_{g(m)}$,
% which belongs to the set $S_{m2}$, is the ONLY element in $S_{m2}$ (i.e.,
% no other element can be in $S_{m2}$; otherwise, such an element would
% cover beyond the point $R_m = L$).
%
The above discussion implies that if we do not allow the displacement
of $s_i$ to be larger than $\lambda-\epsilon$, then there would be no
feasible solution even if we allow the displacements of some other
sensors (i.e., those type I sensors in $S^c$ before $s_i$, if any) to be
larger than $\lambda-\epsilon$ (but at most $\lambda$). Thus,
$\lambda^*\leq \lambda-\epsilon$ cannot be true. That is,
$\lambda^*>\lambda-\epsilon$ holds. Further, it is easy to see that,
by a similar argument, for any fixed value $\epsilon'>0$ with
$\epsilon'<\epsilon$, we also have $\lambda^*>\lambda-\epsilon'$.
Hence, we obtain $\lambda^*\geq\lambda$.

This finishes the discussion on how to determine whether
$\lambda^*<\lambda$. It is easy to see that the above algorithm can also be
implemented in $O(n)$ time for each value $\lambda$,
after $O(n\log n)$ time preprocessing. The theorem thus follows.
\end{proof}

Theorems \ref{theo:generaldecision} and \ref{theo:generaldecision2}
together lead to the following corollary.

\begin{corollary}\label{cor:20}
After $O(n\log n)$ time preprocessing, for any value $\lambda$, we can
determine whether $\lambda^*=\lambda$ in $O(n)$ time.
\end{corollary}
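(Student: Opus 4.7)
The plan is simply to combine Theorems \ref{theo:generaldecision} and \ref{theo:generaldecision2}. Observe the logical equivalence $\lambda^*=\lambda$ if and only if both $\lambda^*\leq\lambda$ and $\neg(\lambda^*<\lambda)$ hold. Since we have an $O(n)$-time decision procedure for each of these two conditions (after a common $O(n\log n)$ preprocessing step that produces the sorted lists $S_L$ and $S_R$ of left and right extensions), the combined test runs in $O(n)$ time.

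More concretely, I would do the preprocessing once: sort the $2n$ extensions of the sensors to obtain $S_L$ and $S_R$ as required by the algorithms of Theorems \ref{theo:generaldecision} and \ref{theo:generaldecision2}. For a given query value $\lambda$, first run the algorithm of Theorem \ref{theo:generaldecision} to decide whether $\lambda^*\leq\lambda$; if it reports $\lambda^*>\lambda$, then certainly $\lambda^*\neq\lambda$. Otherwise, run the algorithm of Theorem \ref{theo:generaldecision2} to decide whether $\lambda^*<\lambda$; if it reports $\lambda^*<\lambda$, then $\lambda^*\neq\lambda$, and if it reports $\lambda^*\geq\lambda$, then combined with $\lambda^*\leq\lambda$ we conclude $\lambda^*=\lambda$.

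There is no real obstacle here: both subroutines share the same preprocessing (neither modifies the sorted lists permanently; each pass can work on a scratch copy, or the two procedures can be interleaved on separate copies), and two $O(n)$ passes still cost $O(n)$. The only minor care needed is that the preprocessing stage must be performed only once and reused across the two invocations, which is immediate since both algorithms use exactly the same sorted structures $S_L$ and $S_R$ on the set of $2n$ extensions.
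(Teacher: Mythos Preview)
Your proposal is correct and is exactly the approach the paper takes: the paper simply states that Theorems~\ref{theo:generaldecision} and~\ref{theo:generaldecision2} together yield the corollary, which is precisely the logical combination $\lambda^*=\lambda \iff (\lambda^*\leq\lambda)\wedge\neg(\lambda^*<\lambda)$ that you spell out.
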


\subsection{The Optimization Version of the General \bcls}
\label{sec:algogeneral}

In this section, we discuss the optimization version of the general
\bcls\ problem. We show that it is solvable in $O(n^2\log
n)$ time, thus settling the open problem in \cite{ref:CzyzowiczOn09}.

It should be pointed out that if we can determine a set $\Lambda$ of
candidate values such that $\lambda^*\in \Lambda$, then we would use
our decision algorithms given in Section \ref{sec:decision} to find
$\lambda^*$ in $\Lambda$. We will use this approach in Section
\ref{sec:uniform} for the uniform case. However, so far it is not clear to
us how to determine such a set $\Lambda$. Below, we use a different
approach.

One main difficulty for solving the problem is that we do not know the order of the
sensors in the optimal solution.
Our strategy is to determine a critical set of sensors and their
cover order in a feasible solution for
the (unknown) optimal value $\lambda^*$.
The idea is somewhat similar to parametric
search \cite{ref:ColeSl87,ref:MegiddoAp83} and here we ``parameterize"
our algorithm for Theorem \ref{theo:generaldecision}. But, unlike the typical
parametric search \cite{ref:ColeSl87,ref:MegiddoAp83},
our approach does not involve any parallel scheme and is practical.
We first give an overview of this algorithm. In the following
discussion, the ``decision algorithm" refers to our algorithm
for Theorem \ref{theo:generaldecision} unless otherwise stated.

Recall that given any value $\lambda$, step $i$ of our decision
algorithm determines the sensor $s_{g(i)}$ and obtains the set
$S_i=\{s_{g(1)},s_{g(2)},\ldots,s_{g(i)}\}$, in this order, which
we also call the {\em cover order} of the sensors
in $S_i$. In our optimization algorithm, we often use
$\lambda$ as a variable. Thus, $S_i(\lambda)$ (resp.,
$R_i(\lambda)$, $s_{g(i)}(\lambda)$, and $C_i(\lambda)$) refers to the
corresponding $S_i$ (resp., $R_i$, $s_{g(i)}$, and $C_i$) obtained by
running our decision algorithm on the specific value $\lambda$. Denote by $C_I$
the configuration of the input.

Our optimization algorithm takes at most $n$ steps.
Initially, let $S_0(\lambda^*)=\emptyset$, $R_0(\lambda^*)=0$,
$\lambda^1_0=0$, and $\lambda^2_0=+\infty$.
For each $i\geq 1$, step $i$ receives an
interval $(\lambda^1_{i-1},\lambda^2_{i-1})$ and a sensor set $S_{i-1}
(\lambda^*)$, with the following {\em algorithm invariants}:

\begin{itemize}
\item
$\lambda^*\in (\lambda^1_{i-1},\lambda^2_{i-1})$.

\item
For any value $\lambda\in (\lambda^1_{i-1},\lambda^2_{i-1})$,
$S_{i-1}(\lambda)=S_{i-1} (\lambda^*)$ and their cover orders are
the same.
\end{itemize}

Step $i$ either finds the value
$\lambda^*$ or determines a sensor
$s_{g(i)}(\lambda^*)$.  The interval
$(\lambda^1_{i-1},\lambda^2_{i-1})$ will shrink to a new
interval $(\lambda^1_{i},\lambda^2_{i})\subseteq
(\lambda^1_{i-1},\lambda^2_{i-1})$ and we also obtain the set
$S_i(\lambda^*)=S_{i-1}(\lambda^*)\cup \{s_{g(i)}(\lambda^*)\}$.
All these can be done in $O(n\log n)$ time.
The details of the algorithm are given below.

Consider a general step $i$ for $i\geq 1$ and we have the interval
$(\lambda^1_{i-1},\lambda^2_{i-1})$ and the set $S_{i-1}(\lambda^*)$.
While discussing the algorithm, we will also prove inductively the
following lemma about the function $R_{i}(\lambda)$ with variable
$\lambda\in (\lambda_{i}^1,\lambda_{i}^2)$.

\begin{lemma}\label{lem:new}
For any step $i$ with $i\geq 0$, if the algorithm does not stop after the step, then the following hold:
\begin{description}
\item[(a)] The function $R_{i}(\lambda)$ for $\lambda\in (\lambda_{i}^1,\lambda_{i}^2)$
is a line segment of slope $1$ or
$0$.
\item[(b)] We can compute the function $R_{i}(\lambda)$ for $\lambda\in(\lambda_{i}^1,\lambda_{i}^2)$
 explicitly in $O(n)$ time.
\item[(c)] $R_{i}(\lambda)<L$ for any $\lambda\in(\lambda_{i}^1,\lambda_{i}^2)$.
\end{description}
\end{lemma}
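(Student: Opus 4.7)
The plan is to prove this by induction on $i$, in lockstep with the description of step $i$ of the optimization algorithm. The base case $i=0$ is immediate: $R_0(\lambda)\equiv 0$ on $(0,+\infty)$ is a slope-$0$ line segment, explicitly given, and strictly less than $L$.

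For the inductive step, the structural observation I would leverage is that once the interval $(\lambda_i^1,\lambda_i^2)$ has been chosen so that both the identity of $s_{g(i)}(\lambda)$ and its type (I or II) are invariant on it, then $R_i(\lambda)$ admits an immediate closed form. If $s_{g(i)}(\lambda^*)$ is of type I, it stands still at $x_{g(i)}+\lambda$, so $R_i(\lambda)=x_{g(i)}+r_{g(i)}+\lambda$, which has slope $1$. If it is of type II, the algorithm places its left extension at $R_{i-1}(\lambda)$, giving $R_i(\lambda)=R_{i-1}(\lambda)+2r_{g(i)}$, which inherits slope $0$ or $1$ from $R_{i-1}(\lambda)$ by the inductive hypothesis. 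This gives (a). Part (b) then follows because by induction $R_{i-1}(\lambda)$ is already stored as two numbers (slope and intercept), and passing to $R_i(\lambda)$ costs only $O(1)$ additional arithmetic once $s_{g(i)}(\lambda^*)$ and its type have been identified; the step's overall $O(n\log n)$ budget easily absorbs the $O(n)$ claim. Part (c) is enforced by the construction of the new interval: if $R_i(\lambda)\geq L$ were to hold anywhere on $(\lambda_i^1,\lambda_i^2)$, the algorithm would detect a feasible solution in step $i$ and return $\lambda^*$ via Corollary \ref{cor:20}, so no further inductive step would be taken.

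The main obstacle is constructing $(\lambda_i^1,\lambda_i^2)$ so that both the identity and the type of $s_{g(i)}(\lambda)$ stay invariant on it while $\lambda^*$ still lies inside. My plan is to invoke the decision procedure of Theorem \ref{theo:generaldecision} at probes inside $(\lambda_{i-1}^1,\lambda_{i-1}^2)$ to obtain a candidate sensor $s_{g(i)}(\lambda^*)$, and then to enumerate the ``critical'' $\lambda$-values at which the choice could change: namely the values where some $s_j$ enters or exits $S_{i1}(\lambda)$ or $S_{i2}(\lambda)$, and the values where the argmax (resp.\ argmin) over right extensions of sensors in $S_{i1}(\lambda)$ (resp.\ $S_{i2}(\lambda)$) flips. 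Because $R_{i-1}(\lambda)$ and every sensor extension in $C_0$ are linear in $\lambda$ with slope $0$ or $1$, each such event is the unique root of a single linear equation, so only $O(n)$ events arise in total. Sorting these events and using Corollary \ref{cor:20} to localize $\lambda^*$ among them pins down the maximal sub-interval of $(\lambda_{i-1}^1,\lambda_{i-1}^2)$ that contains $\lambda^*$ and on which the choice is invariant; this is the desired $(\lambda_i^1,\lambda_i^2)$, and on it all three conclusions of the lemma follow from the closed form for $R_i(\lambda)$ derived above.
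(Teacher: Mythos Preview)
Your proposal is correct and follows essentially the same inductive approach as the paper: both argue the base case $i=0$ trivially, both derive the closed form $R_i(\lambda)=x_{g(i)}+r_{g(i)}+\lambda$ (slope $1$) in the type~I case and $R_i(\lambda)=R_{i-1}(\lambda)+2r_{g(i)}$ (slope inherited) in the type~II case, and both obtain $(\lambda_i^1,\lambda_i^2)$ by enumerating the $O(n)$ values of $\lambda$ at which $S_{i1}(\lambda)$ or $S_{i2}(\lambda)$ changes and then localizing $\lambda^*$ by binary search with the decision algorithm. One small remark: the ``argmax/argmin flip'' events you list are actually vacuous here, since every sensor not in $S_{i-1}$ has its right extension equal to $x_j+r_j+\lambda$, all of slope $1$, so over a fixed set the extremizer never changes; the paper simply observes this rather than listing it as an event class.
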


In the base case for $i=0$, the statement of Lemma \ref{lem:new}
obviously holds. We assume the lemma statement holds for $i-1$, in particular, the function $R_{i-1}(\lambda)$ for $\lambda\in(\lambda_{i-1}^1,\lambda_{i-1}^2)$ is already known. We
will show that after step $i$ the lemma statement holds for $i$, and
thus the lemma will be proved.

Again, in step $i$, we need to determine the sensor $s_{g(i)}(\lambda^*)$ and let
$S_i(\lambda^*)=S_{i-1}(\lambda^*)\cup \{s_{g(i)}(\lambda^*)\}$. We
will also obtain an interval $(\lambda_i^1,\lambda_i^2)$ such that
$\lambda^*\in(\lambda_i^1,\lambda_i^2)\subseteq
(\lambda_{i-1}^1,\lambda_{i-1}^2)$ and for any $\lambda\in
(\lambda_i^1,\lambda_i^2)$, $S_i(\lambda) = S_i(\lambda^*)$ holds
(with the same cover order).
%The details are given below. We assume that we already compute explicitly the function $R_{i-1}(\lambda)$ for $\lambda\in(\lambda_{i-1}^1,\lambda_{i-1}^2)$, which takes $O(n)$ time by our assumption that the statement of Lemma \ref{lem:new} holds for $i-1$.

\begin{figure}[t]
\begin{minipage}[t]{\linewidth}
\begin{center}
\includegraphics[totalheight=1.1in]{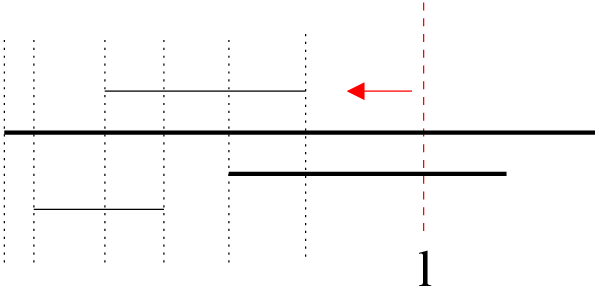}
\caption{\footnotesize The vertical dashed line $l$ is
$x=R_{i-1}(\lambda)$. The set $S_{i1}(\lambda)$ consists of the
sensors whose covering intervals intersect $l$, shown in bold. As
$\lambda$ increases, the relative position of $l$ in
$C_{i-1}(\lambda)$ moves to the left, and $S_{i1}(\lambda)$ changes
whenever $l$ hits a sensor extension, shown with the dotted vertical lines.}
\label{fig:relativepos}
\end{center}
\end{minipage}
\vspace*{-0.15in}
\end{figure}

To find the sensor $s_{g(i)}(\lambda^*)$, we first determine the
set $S_{i1}(\lambda^*)$. Recall that $S_{i1}(\lambda^*)$ consists of
all sensors covering the point $p^+(R_{i-1}(\lambda^*))$ in the
configuration $C_{i-1}(\lambda^*)$.
For each sensor in $S\setminus S_{i-1}(\lambda^*)$, its position in the
configuration $C_{i-1}(\lambda)$ with respect to
$\lambda\in(\lambda_{i-1}^1,\lambda_{i-1}^2)$ is a function of slope
$1$. As $\lambda$ increases in $(\lambda_{i-1}^1,\lambda_{i-1}^2)$, by our assumption that Lemma \ref{lem:new}(a) holds for $i-1$, the function $R_{i-1}(\lambda)$ is a line segment of slope $1$ or $0$. If
$R_{i-1}(\lambda)$ is of slope $1$, then the relative
position of $R_{i-1}(\lambda)$ in $C_{i-1}(\lambda)$ does not change and
thus the set $S_{i1}(\lambda)$ does not change; if the function
$R_{i-1}(\lambda)$ is of slope $0$, then the relative
position of $R_{i-1}(\lambda)$ in $C_{i-1}(\lambda)$ is monotonically moving to the
left. Hence, there are $O(n)$ values for $\lambda$ in
$(\lambda_{i-1}^1,\lambda_{i-1}^2)$ that can incur some changes to the
set $S_{i1}(\lambda)$ and each such value corresponds to a sensor
extension (e.g., see Fig.~\ref{fig:relativepos});
further, these values can be easily determined in $O(n\log n)$
time by a simple sweeping process (we omit the discussion of it). Let $\Lambda_{i1}$ be the set of all these
$\lambda$ values. Let $\Lambda_{i1}$ also contain both $\lambda_{i-1}^1$ and $\lambda_{i-1}^2$, and thus,
$\lambda_{i-1}^1$ and $\lambda_{i-1}^2$ are the smallest and
largest values in $\Lambda_{i1}$, respectively. We sort the values in $\Lambda_{i1}$. For any
two consecutive values $\lambda_1<\lambda_2$ in the sorted $\Lambda_{i1}$,
the set $S_{i1}(\lambda)$ for any
$\lambda\in(\lambda_1,\lambda_2)$ is the same. By using binary search on the sorted $\Lambda_{i1}$ and our
decision algorithm in Theorem \ref{theo:generaldecision}, we determine (in $O(n\log n)$ time) the two
consecutive values $\lambda_1$ and $\lambda_2$ in $\Lambda_{i1}$ such that
$\lambda_1<\lambda^*\leq \lambda_2$. Further, by Corollary
\ref{cor:20}, we determine whether $\lambda^*=\lambda_2$. If
$\lambda^*=\lambda_2$, then we terminate the algorithm.
Otherwise, based on our discussion above, $S_{i1}(\lambda^*)= S_{i1}(\lambda)$ for any
$\lambda\in(\lambda_1,\lambda_2)$. Thus, to
compute $S_{i1}(\lambda^*)$, we can pick an arbitrary $\lambda$ in
$(\lambda_1,\lambda_2)$ and find $S_{i1}(\lambda)$ in the same way as
in our decision algorithm. Hence, $S_{i1}(\lambda^*)$ can be easily found in $O(n\log n)$ time. Note that
$\lambda^*\in(\lambda_1,\lambda_2)\subseteq(\lambda_{i-1}^1,\lambda_{i-1}^2)$. Depending on whether $S_{i1}(\lambda^*)\neq\emptyset$, there are two cases.

\begin{itemize}
\item
If $S_{i1}(\lambda^*)\neq\emptyset$, then $s_{g(i)}(\lambda^*)$ is the
sensor in $S_{i1}(\lambda^*)$ with the largest right extension. An obvious observation is that for any $\lambda\in (\lambda_1,\lambda_2)$, the sensor in
$S_{i1}(\lambda)$ with the largest right extension is the same,
which can be easily found.
We let $\lambda^1_i=\lambda_1$ and $\lambda^2_i=\lambda_2$. Let
$S_i(\lambda^*)=S_{i-1}(\lambda^*)\cup\{s_{g(i)}(\lambda^*)\}$.
The algorithm invariants hold.
Further, as $\lambda$ increases in $(\lambda^1_i,\lambda^2_i)$, the
right extension of $s_{g(i)}(\lambda)$, which is $R_i(\lambda)$,
increases by the same amount.  That is, the function
$R_i(\lambda)$ on $(\lambda^1_i,\lambda^2_i)$ is a line segment of slope $1$.
Therefore, we can compute $R_i(\lambda)$ on $(\lambda^1_i,\lambda^2_i)$
explicitly in constant time. This also shows
Lemma \ref{lem:new}(a) and (b) hold for $i$.

\item
If $S_{i1}(\lambda^*)=\emptyset$, then we need to compute $S_{i2}(\lambda^*)$.
For any $\lambda\in (\lambda_1,\lambda_2)$, the set
$S_{i2}(\lambda)$ consists of all sensors whose left extensions are
larger than $R_{i-1}(\lambda)$ and at most
$R_{i-1}(\lambda)+2\lambda$ in the configuration $C_{i-1}(\lambda)$.
Recall that the function $R_{i-1}(\lambda)$ on
$(\lambda_{i-1}^1,\lambda_{i-1}^2)$ is linear with slope $1$ or $0$.
Due to $(\lambda_1,\lambda_2)\subseteq (\lambda_{i-1}^1,\lambda_{i-1}^2)$,
the linear function $R_{i-1}(\lambda)+2\lambda$ on $(\lambda_1,\lambda_2)$
is of slope $3$ or $2$.  Again, as $\lambda$ increases, the
position of each sensor in $S\setminus S_{i-1}(\lambda^*)$
in $C_{i-1}(\lambda)$ is a linear function of slope
$1$. Hence, as $\lambda$ increases, the relative position of $R_{i-1}(\lambda)+2\lambda$ in $C_{i-1}(\lambda)$ moves to the right, and the relative position of $R_{i-1}(\lambda)$ in $C_{i-1}(\lambda)$ either does not change or moves to the left.
Therefore, there are $O(n)$ $\lambda$ values in
$(\lambda_1,\lambda_2)$ each of which incurs
some change to the set $S_{i2}(\lambda)$ and each such
$\lambda$ value corresponds to the left extension of a sensor (e.g., see Fig.~\ref{fig:relativepos2}).
Further,
these values can be easily determined in $O(n\log n)$ time by a sweeping process (we omit the discussion for this). (Actually,
as $\lambda$ increases, the size of the set $S_{i2}(\lambda)$ is monotonically
increasing.) Let $\Lambda_{i2}$ denote the set of these $\lambda$ values,
and let $\Lambda_{i2}$ contain both $\lambda_1$ and $\lambda_2$.
Again, $|\Lambda_{i2}|=O(n)$. We sort the values in $\Lambda_{i2}$.
Using binary search on the sorted $\Lambda_{i2}$ and our
decision algorithm in Theorem \ref{theo:generaldecision}, we determine (in $O(n\log n)$ time) the two
consecutive values $\lambda'_1$ and $\lambda'_2$ in $\Lambda_{i2}$ such that
$\lambda'_1<\lambda^*\leq \lambda'_2$. Further, by Corollary
\ref{cor:20}, we determine whether $\lambda^*=\lambda'_2$. If
$\lambda^*=\lambda'_2$, then we are done.
Otherwise, $S_{i2}(\lambda^*)=
S_{i2}(\lambda)$ for any $\lambda\in (\lambda'_1,\lambda'_2)$,
which can be easily found.
Note that $\lambda^*\in (\lambda'_1,\lambda'_2)\subseteq (\lambda_1,\lambda_2)$.

\begin{figure}[t]
\begin{minipage}[t]{\linewidth}
\begin{center}
\includegraphics[totalheight=1.1in]{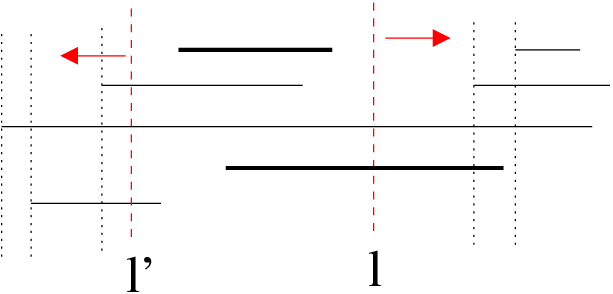}
\caption{\footnotesize The two vertical dashed lines $l$ is $x=R_{i-1}(\lambda)+2\lambda$ and $l'$ is $x=R_{i-1}(\lambda)$. The set $S_{i2}(\lambda)$ consists of sensors whose left extensions are between $l$ and $l'$, shown in bold. As $\lambda$ increases, the relative position of $l$ in $C_{i-1}(\lambda)$ moves to the right, and the relative position of $l'$ in $C_{i-1}(\lambda)$ either does not change or moves to the left. The set $S_{i2}(\lambda)$ changes whenever $l$ or $l'$ hits the left extension of a sensor, shown with the dotted vertical lines.}
\label{fig:relativepos2}
\end{center}
\end{minipage}
\vspace*{-0.15in}
\end{figure}

The above obtains the set $S_{i2}(\lambda^*)$. We claim that
$S_{i2}(\lambda^*)\neq \emptyset$. Indeed, due to our assumption that
Lemma 4 holds for $i-1$, we have $R_{i-1}(\lambda)<L$ for
$\lambda\in(\lambda_{i-1}^1,\lambda_{i-1}^2)$. Suppose to the contrary
that $S_{i2}(\lambda^*)= \emptyset$. Then, the sensor
$s_{g(i)}(\lambda^*)$ does not exist, which implies that
$S_{i-1}(\lambda^*)$ is the critical set for covering the barrier $B$
in an optimal solution. By our algorithm invariants, $\lambda^*\in
(\lambda_{i-1}^1,\lambda_{i-1}^2)$ and $S_{i-1}(\lambda^*)$ is the
same as $S_{i-1}(\lambda)$ for any $\lambda\in
(\lambda_{i-1}^1,\lambda_{i-1}^2)$. Due to $R_{i-1}(\lambda)<L$ for
$\lambda\in(\lambda_{i-1}^1,\lambda_{i-1}^2)$, the sensors in
$S_{i-1}(\lambda^*)$ cannot cover the entire barrier $B$, which
contradicts with that $S_{i-1}(\lambda^*)$ is the critical set in the
optimal solution. Hence, $S_{i2}(\lambda^*)\neq \emptyset$.

Since $S_{i2}(\lambda^*)\neq \emptyset$, $s_{g(i)}(\lambda^*)$
is the sensor in $S_{i2}(\lambda^*)$ with
the smallest right extension. As before, the sensor in $S_{i2}(\lambda)$ with
the smallest right extension is the same for any $\lambda\in
(\lambda'_1,\lambda'_2)$. Thus, $s_{g(i)}(\lambda^*)$ can be easily
determined. We let $\lambda^1_i=\lambda'_1$ and $\lambda^2_i=\lambda'_2$.
Let $S_i(\lambda^*)=S_{i-1}(\lambda^*)\cup\{s_{g(i)}(\lambda^*)\}$.
The algorithm invariants hold.
Further, we examine the function $R_i(\lambda)$, i.e., the
right extension of $s_{g(i)}(\lambda)$ in the configuration $C_i(\lambda)$, as
$\lambda$ increases in $(\lambda^1_i,\lambda^2_i)$. Since
$s_{g(i-1)}(\lambda^*)$ and $s_{g(i)}(\lambda^*)$ are always in attached
positions in this case, for any $\lambda\in (\lambda^1_i,\lambda^2_i)$, we have
$R_i(\lambda)=R_{i-1}(\lambda)+2r_{g(i)}$. Thus, the function $R_i(\lambda)$ is a vertical shift of $R_{i-1}(\lambda)$ by the distance $2r_{g(i)}$.
 Because we already know
explicitly the function $R_{i-1}(\lambda)$ for $\lambda\in
(\lambda^1_i,\lambda^2_i)$, which is a line segment of
slope $1$ or $0$, the function $R_i(\lambda)$ can be computed in constant time, which is also a line segment of slope $1$ or $0$. Note that this shows that Lemma \ref{lem:new}(a) and (b) hold for $i$.
\end{itemize}

If the algorithm does not stop, the above determines an interval $(\lambda^1_i,\lambda^2_i)$ such that the algorithm invariants and Lemma 4(a) and (b) hold on the interval. Below, we do further processing such that Lemma 4(c) also holds.

Because the function
$R_i(\lambda)$ on $(\lambda^1_i,\lambda^2_i)$ is a line segment of slope $1$ or $0$,
there are three cases depending on the values $R_i(\lambda)$ and $L$: (1)
$R_i(\lambda)<L$ for any $\lambda\in (\lambda^1_i,\lambda^2_i)$, (2)
$R_i(\lambda)>L$ for any $\lambda\in (\lambda^1_i,\lambda^2_i)$, and (3)
there exists $\lambda'\in
(\lambda^1_i,\lambda^2_i)$ such that $R_i(\lambda')=L$.

\begin{enumerate}
\item
For Case (1), we proceed to the next step,
along with the interval $(\lambda^1_i,\lambda^2_i)$.  Clearly, the
algorithm invariants hold and Lemma \ref{lem:new}(c) holds for $i$.

\item
For Case (2), the next lemma shows that it actually cannot happen
due to $\lambda^*\in(\lambda^1_i,\lambda^2_i)$.

\begin{lemma}\label{lem:lattercase}
It is not possible that $R_i(\lambda)>L$ for any $\lambda\in
(\lambda^1_i,\lambda^2_i)$.
\end{lemma}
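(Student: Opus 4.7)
The plan is to derive a contradiction from the assumed strict inequality using the optimality of $\lambda^*$ together with the fact that $\lambda^*$ lies strictly inside the interval $(\lambda_1,\lambda_2)$. Suppose for contradiction that $R_i(\lambda)>L$ for every $\lambda\in(\lambda_1,\lambda_2)$. The goal is to exhibit some $\lambda'<\lambda^*$ on which the decision algorithm of Theorem \ref{theo:generaldecision} still finds a feasible solution, which would contradict the definition of $\lambda^*$.

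First I would pick $\lambda'\in(\lambda_1,\lambda^*)$; this interval is nonempty precisely because the algorithm already ruled out $\lambda^*=\lambda_2$ via Corollary \ref{cor:20}, leaving $\lambda^*\in(\lambda_1,\lambda_2)$ strictly. Next I would argue that running the decision algorithm on $\lambda'$ produces the same first $i$ choices as on $\lambda^*$. For steps $1,\ldots,i-1$ this follows from the step-$i$ algorithm invariants: for any $\lambda\in(\lambda_{i-1}^1,\lambda_{i-1}^2)\supseteq(\lambda_1,\lambda_2)$ we have $S_{i-1}(\lambda)=S_{i-1}(\lambda^*)$ with the same cover order, and by the inductive hypothesis Lemma \ref{lem:new}(c) for index $i-1$ we also know $R_{i-1}(\lambda')<L$, so the decision algorithm has not terminated before step $i$. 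For step $i$ itself, the way $(\lambda_1,\lambda_2)$ was carved out of $\Lambda_{i1}$ guarantees that $S_{i1}(\lambda')=S_{i1}(\lambda^*)$, and since $s_{g(i)}$ is selected as the sensor in $S_{i1}$ with the largest right extension, the same sensor is picked; therefore the value $R_i(\lambda')$ is exactly the line segment of slope $1$ we already computed.

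By the standing hypothesis, $R_i(\lambda')>L$. Combined with Lemma \ref{lem:correct10}(c), this means that in the configuration $C_i(\lambda')$ the sensors in $S_i(\lambda')$ cover the interval $[0,R_i(\lambda')]\supseteq[0,L]=B$, and each of these sensors has been moved by at most $\lambda'$; all remaining sensors stand still in $C_0$ and so also have displacement at most $\lambda'$. Hence $C_i(\lambda')$ is a feasible solution for $\lambda'$, and by Theorem \ref{theo:generaldecision} we conclude $\lambda^*\leq \lambda'$. But $\lambda'<\lambda^*$ by construction, a contradiction, which proves the lemma.

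I do not expect any real obstacle here; the only subtlety is making sure that the first $i-1$ steps of the decision algorithm on $\lambda'$ genuinely coincide with those on $\lambda^*$ (so that step $i$ is actually reached), which is exactly what the algorithm invariants and Lemma \ref{lem:new}(c) for $i-1$ guarantee on the whole interval $(\lambda_1,\lambda_2)$.
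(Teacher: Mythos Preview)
Your proof is correct and follows the same approach as the paper: pick some $\lambda'\in(\lambda_1,\lambda^*)$, observe that $R_i(\lambda')>L$ yields a feasible solution for $\lambda'<\lambda^*$, and derive a contradiction with the optimality of $\lambda^*$. You supply more detail than the paper does in verifying that the decision algorithm on $\lambda'$ actually reaches step $i$ and computes the same $R_i$ value (via the algorithm invariants and Lemma~\ref{lem:new}(c) for $i-1$), but this extra care just makes explicit what the paper's definition of $R_i(\lambda)$ already encodes.
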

\begin{proof}
Assume to the contrary that $R_i(\lambda)>L$ for any $\lambda\in
(\lambda^1_i,\lambda^2_i)$. Since $\lambda^*\in(\lambda^1_i,\lambda^2_i)$, let
$\lambda''$ be any value in $(\lambda^1_i,\lambda^*)$. Due to
$\lambda''\in(\lambda^1_i,\lambda^2_i)$, we have $R_i(\lambda'')>L$. But this would
implies that we have found a feasible solution where the displacement
of each sensor is at most $\lambda''$, which is smaller than $\lambda^*$, incurring
contradiction.
\end{proof}

\item
For the Case (3), note that the slope of $R_i(\lambda)$ on $(\lambda^1_i,\lambda^2_i)$ can not be $0$. To see this, suppose to the contrary the slope of $R_i(\lambda)$ on $(\lambda^1_i,\lambda^2_i)$ is $0$. Then, $R_i(\lambda)=L$ for any $\lambda\in (\lambda^1_i,\lambda^2_i)$. Since $\lambda^*\in
(\lambda^1_i,\lambda^2_i)$, for any $\lambda'\in (\lambda^1_i,\lambda^*)$, $R_i(\lambda')=L$, which means that there is a feasible solution where the displacement of each sensor is at most $\lambda'<\lambda^*$, incurring contradiction.

Hence, $R_i(\lambda)$ on $(\lambda^1_i,\lambda^2_i)$ is a line
segment of slope $1$, and thus
we can determine in constant time the unique value $\lambda'\in
(\lambda^1_i,\lambda^2_i)$ such that $R_i(\lambda')=L$.
Clearly, $\lambda^*\leq \lambda'$. By Corollary
\ref{cor:20}, we determine whether $\lambda^*=\lambda'$. If
$\lambda^*=\lambda'$, then we terminate the algorithm; otherwise, we have $\lambda^*\in
(\lambda^1_i,\lambda')$ and update $\lambda^2_i$ to $\lambda'$.  We proceed to the
next step, along with the interval $(\lambda^1_i,\lambda^2_i)$.
Again, the algorithm invariants hold and Lemma \ref{lem:new}(c) holds for $i$.
\end{enumerate}

%Similarly to the case when $S_{i1}(\lambda^*)\neq \emptyset$, since the function
%$R_i(\lambda)$ in $(\lambda'_1,\lambda'_2)$ is a line segment of slope
%$1$ or $0$,
%there are three cases depending on the values $R_i(\lambda)$ and $L$: (1)
%$R_i(\lambda)<L$ for any $\lambda\in (\lambda'_1,\lambda'_2)$, (2)
%$R_i(\lambda)>L$ for any $\lambda\in (\lambda'_1,\lambda'_2)$, and (3)
%there exists a unique value $\lambda''\in
%(\lambda'_1,\lambda'_2)$ such that $R_i(\lambda'')=L$.
%For Case (1), we proceed to the next step,
%along with the interval $(\lambda^1_i,\lambda^2_i)$.  Clearly, the
%algorithm invariants hold and Lemma \ref{lem:new}(c) holds for $i$.
%Similarly to Lemma \ref{lem:lattercase}, Case (2) cannot happen
%due to $\lambda^*\in(\lambda'_1,\lambda'_2)$.
%For the Case (3), since $R_i(\lambda)$ in $(\lambda'_1,\lambda'_2)$ is a line
%segment of slope $1$ or $0$,
%we can compute in constant time the unique value $\lambda''\in
%(\lambda'_1,\lambda'_2)$ such that $R_i(\lambda'')=L$.
%Clearly, $\lambda^*\leq \lambda''$. By Corollary
%\ref{cor:20}, we determine whether $\lambda^*=\lambda''$. If
%$\lambda^*=\lambda''$, we are done; otherwise, we have $\lambda^*\in
%(\lambda'_1,\lambda'')$ and update $\lambda^2_i$ to $\lambda''$.  We proceed to the
%next step, along with the interval $(\lambda^1_i,\lambda^2_i)$.
%Again, the algorithm invariants and Lemma \ref{lem:new}(c) hold for $i$.

This finishes the discussion of step $i$ of our algorithm. The running time of step $i$ is $O(n\log n)$. Note that in each case where we proceed to the next step, the statement of Lemma \ref{lem:new} holds for $i$, and thus Lemma \ref{lem:new} has been proved.

In the following lemma, we show that the algorithm must stop
within at most $n$ steps.

\begin{lemma}
The algorithm finds $\lambda^*$ in at most $n$ steps.
\end{lemma}
\begin{proof}
Assume the critical set is $S_k(\lambda^*)$ for some $k$ if we run our decision algorithm with $\lambda=\lambda^*$. Since there are $n$ sensors from the input, we have $1\leq k\leq n$.

We claim that our algorithm finds $\lambda^*$ in at most $k$ steps. Suppose to the contrary that the algorithm does not find $\lambda^*$ in the first $k$ steps. In other words, the algorithm does not stop after step $k$. By the algorithm invariants, after step $k$, we have an interval $(\lambda_k^1,\lambda_k^2)$ such that $\lambda^* \in (\lambda_k^1,\lambda_k^2)$ and $S_k(\lambda)=S_k(\lambda^*)$ for any $\lambda \in (\lambda_k^1,\lambda_k^2)$. Further, by Lemma \ref{lem:new}, $R_k(\lambda)<L$ for any $\lambda \in (\lambda_k^1,\lambda_k^2)$, which means that the sensors in $S_k(\lambda^*)$ cannot cover the entire barrier $B$ for any $\lambda \in (\lambda_k^1,\lambda_k^2)$, contradicting with that $S_k(\lambda^*)$ is the critical set for the decision algorithm when $\lambda=\lambda^*$.

Therefore, our algorithm finds $\lambda^*$ in at most $k$ steps. The lemma thus follows. \end{proof}

After $\lambda^*$ is found,
by applying our decision algorithm on $\lambda=\lambda^*$, we
finally produce an optimal solution in which
the displacement of every sensor is at most $\lambda^*$.  Since each
step takes $O(n\log n)$ time, the total time of the algorithm is
$O(n^2\log n)$.

\begin{theorem}
The general \bcls\ problem is solvable in $O(n^2\log n)$ time.
\end{theorem}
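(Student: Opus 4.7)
The plan is to assemble the theorem from the invariants already set up for the parameterized algorithm together with Lemma \ref{lem:new}, Corollary \ref{cor:20}, and Lemma \ref{lem:lattercase}. Both correctness and the runtime bound follow by induction on the step index $i$, with essentially no new structural argument needed beyond checking that everything fits.

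For correctness, I would induct on $i$, maintaining the algorithm invariants that $\lambda^*\in(\lambda^1_{i-1},\lambda^2_{i-1})$ and that for every $\lambda$ in this open interval the decision algorithm of Theorem \ref{theo:generaldecision} produces exactly the prefix $S_{i-1}(\lambda^*)$ in the same cover order. The base case $i=1$ is trivial with $(\lambda^1_0,\lambda^2_0)=(0,+\infty)$ and $S_0(\lambda^*)=\emptyset$. For the inductive step, the two main branches ($S_{i1}(\lambda^*)\neq\emptyset$ and $S_{i1}(\lambda^*)=\emptyset$) and their three subcases on $R_i(\lambda)$ versus $L$ have already been checked in the algorithm description: in each subcase, either Corollary \ref{cor:20} pins down $\lambda^*$ exactly at one of the candidate values $\lambda_2$, $\lambda'$, $\lambda'_2$, or $\lambda''$, or the interval shrinks to $(\lambda^1_i,\lambda^2_i)$ with $R_i(\lambda)<L$ and $S_i(\lambda)=S_i(\lambda^*)$ throughout, so that Lemma \ref{lem:new} continues to hold at index $i$. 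The infeasible subcase is ruled out by Lemma \ref{lem:lattercase}. Since each non-terminal step strictly enlarges $S_i(\lambda^*)$ by one sensor of $S\setminus S_{i-1}(\lambda^*)$, the algorithm halts in at most $n$ steps; once $\lambda^*$ has been identified, one final invocation of Theorem \ref{theo:generaldecision} on $\lambda=\lambda^*$ outputs an optimal solution.

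For the running time, I would check the per-step bound of $O(n\log n)$. The breakpoint sets $\Lambda_{i1}$ and $\Lambda_{i2}$ each have size $O(n)$ because the position of each sensor in $S\setminus S_{i-1}(\lambda^*)$ is linear in $\lambda$ with slope $1$, while $R_{i-1}(\lambda)$ (respectively $R_{i-1}(\lambda)+2\lambda$) is a single line segment of slope $0$ or $1$ (respectively $2$ or $3$) by Lemma \ref{lem:new}, so only $O(n)$ crossing events can alter $S_{i1}(\lambda)$ or $S_{i2}(\lambda)$. Constructing and sorting these sets takes $O(n\log n)$, the binary search over each sorted list calls the $O(n)$-time decision procedure of Theorem \ref{theo:generaldecision} and Corollary \ref{cor:20} $O(\log n)$ times after the one-time $O(n\log n)$ preprocessing, and updating $R_i(\lambda)$ from $R_{i-1}(\lambda)$ is $O(1)$ by Lemma \ref{lem:new}(b). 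Summing over at most $n$ steps and adding the preprocessing yields $O(n^2\log n)$.

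The main obstacle I expect is confirming that the candidate set $\Lambda_{i1}\cup\Lambda_{i2}$ is truly comprehensive, i.e., that between any two consecutive breakpoints the sensor chosen in step $i$ of the decision algorithm is independent of $\lambda$. This reduces to verifying that within such a sub-interval no extension-versus-$R_{i-1}$ crossing can occur and that the identity of the sensor with largest (respectively smallest) right extension in $S_{i1}(\lambda)$ (respectively $S_{i2}(\lambda)$) does not change; both follow from the slope analysis just described, but this is where I would spend the bulk of careful case-checking before declaring the inductive step closed.
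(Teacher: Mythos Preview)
Your proposal is correct and follows essentially the same approach as the paper: the theorem is a direct consequence of the parameterized algorithm described in Section~\ref{sec:algogeneral}, and the paper's own proof amounts to the observation that each of the at most $n$ steps costs $O(n\log n)$ time, with correctness carried by the invariants, Lemma~\ref{lem:new}, Lemma~\ref{lem:lattercase}, and Corollary~\ref{cor:20}. The ``main obstacle'' you flag (that the chosen sensor is constant between consecutive breakpoints) is handled in the paper exactly as you suggest, via the observation that all right extensions of sensors in $S\setminus S_{i-1}(\lambda^*)$ are linear of slope~$1$ in $\lambda$, so their relative order is preserved on any sub-interval where $S_{i1}(\lambda)$ (respectively $S_{i2}(\lambda)$) is fixed.
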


We shall make a technical remark. The typical parametric search
\cite{ref:ColeSl87,ref:MegiddoAp83} usually returns with an interval
containing the optimal value and then uses an additional step to find
the optimal value. In contrast, our
algorithm is guaranteed to find the optimal value $\lambda^*$
directly. This is due to
the mechanism in our algorithm that requires $R_i(\lambda)<L$ for
any $\lambda\in (\lambda^1_i,\lambda^2_i)$
after each step $i$ if the algorithm is not terminated.
This mechanism actually plays the role of the
additional step used in the typical parametric search.

\section{The Uniform Case of \bcls}
\label{sec:uniform}

In this section, we present an $O(n\log n)$ time algorithm for the uniform
case of \bcls. Previously, the best known algorithm for it takes $O(n^2)$
time \cite{ref:CzyzowiczOn09}. Further, for the special uniform case when all
sensors are initially located on the barrier $B$, we solve it in $O(n)$ time.

\subsection{Preliminaries}

Recall that in the input, all sensors are ordered from left to right by their
initial positions, i.e., $x_1\leq x_2\leq\cdots\leq x_n$.
Suppose in a solution, the destination of each sensor
$s_i$ is $y_i$ ($1\leq i\leq n$); then we say that the solution is {\em
order preserving} if $y_1\leq y_2\leq \cdots\leq y_n$.
In the uniform case, since all sensors have the same range,
%for simplicity,
we let $r$ denote the sensor range.
The next lemma was known \cite{ref:CzyzowiczOn09}.

\begin{lemma}\label{lem:10}
{\em (Czyzowicz {\em et al.} \cite{ref:CzyzowiczOn09})}
For the uniform case, there is always an optimal solution that is order preserving.
\end{lemma}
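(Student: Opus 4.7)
The plan is a standard exchange argument. Starting from any optimal solution, I would repeatedly swap the destinations of an adjacent inverted pair of sensors and show that (i) each swap preserves the barrier coverage, and (ii) each swap does not increase the maximum displacement. Because every swap on an adjacent inversion strictly decreases the number of inversions between the input order $x_1\leq\cdots\leq x_n$ and the destination sequence, after finitely many swaps the resulting assignment is order preserving while still achieving the optimal maximum displacement.

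To make the swap concrete, fix an optimal solution with destinations $y_1,\ldots,y_n$ and suppose some pair $i<j$ has $y_i>y_j$. Consider the modified assignment that sends $s_i$ to $y_j$ and $s_j$ to $y_i$, leaving all other destinations unchanged. The multiset of final positions is identical to the original, and since all sensors share the same range $r$, the union of covering intervals $\bigcup_k [y_k-r,y_k+r]$ is unchanged; barrier coverage is therefore preserved. It remains to verify the displacement inequality
\[
\max\{|x_i-y_j|,\,|x_j-y_i|\}\;\leq\;\max\{|x_i-y_i|,\,|x_j-y_j|\}.
\]

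This inequality is the only nontrivial step, but it follows from a short case analysis using $x_i\leq x_j$ and $y_j<y_i$. For $|x_i-y_j|$: if $x_i\leq y_j$, then $x_i<y_i$, so $|x_i-y_j|=y_j-x_i<y_i-x_i=|x_i-y_i|$; if $x_i>y_j$, then $|x_i-y_j|=x_i-y_j\leq x_j-y_j=|x_j-y_j|$. The bound on $|x_j-y_i|$ is symmetric: if $x_j\geq y_i$ then $|x_j-y_i|=x_j-y_i\leq x_j-y_j=|x_j-y_j|$, and if $x_j<y_i$ then $|x_j-y_i|=y_i-x_j\leq y_i-x_i=|x_i-y_i|$. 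Hence every swap is safe.

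Iterating the swap on any adjacent inverted pair, after at most $\binom{n}{2}$ swaps no inversions remain, producing an order-preserving assignment whose maximum displacement is no larger than that of the original optimum, hence itself optimal. The only real obstacle is the displacement case analysis above; everything else is routine bookkeeping that leverages the fact that equal ranges make destinations freely exchangeable among sensors.
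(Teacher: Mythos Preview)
Your exchange argument is correct: the coverage-preservation step is immediate because equal ranges make the multiset of covering intervals invariant under destination swaps, and your four-case verification of $\max\{|x_i-y_j|,|x_j-y_i|\}\le\max\{|x_i-y_i|,|x_j-y_j|\}$ under the hypotheses $x_i\le x_j$ and $y_j<y_i$ is complete. Applying the swap to adjacent inversions (so that the inversion count drops by exactly one each time) terminates in at most $\binom{n}{2}$ steps with an order-preserving optimal assignment.

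As for comparison with the paper: the paper does not actually prove this lemma---it simply attributes the statement to Czyzowicz \emph{et al.}~\cite{ref:CzyzowiczOn09} and moves on. Your proof is the standard argument one expects for such a claim (and is presumably what the cited reference contains), so there is nothing to contrast; you have supplied what the paper chose to omit.
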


As discussed in \cite{ref:CzyzowiczOn09},
Lemma \ref{lem:10} is not applicable to the general \bcls.
Consequently, the approach in this
section does not work for the general \bcls.

Based on the order preserving property in Lemma \ref{lem:10},
the previous $O(n^2)$ time algorithm
\cite{ref:CzyzowiczOn09} tries to cover $B$ from left to right;
each step picks the next sensor and re-balances the current maximum
sensor movement. Here, we take a completely different approach.

Denote by $\lambda^*$ the maximum sensor movement in an optimal
solution. We use $OPT$ to denote an optimal order
preserving solution in which the destination for each sensor
$s_i$ is $y_i$ ($1\leq i\leq n$). For each sensor $s_i$, if $x_i>y_i$
(resp., $x_i<y_i$), then we say $s_i$ is moved to the left (resp., right)
by a distance $|x_i-y_i|$.
%We say that a sequence of sensors
%$s_i,s_{i+1},\ldots,s_j$ for $i<j$ are {\em in attached positions}
%if any two consecutive sensors $s_k$ and $s_{k+1}$ with $i\leq k<j$
%are exactly $r_k+r_{k+1}$ distance apart; a single
%sensor $s_i$ is always {\em in attached position}.
A set of sensors is said to be in {\em attached positions}
if the union of their covering intervals is a
continuous interval on the $x$-axis whose length is equal to the
sum of the lengths of the covering intervals of these sensors.
A single sensor is always in attached position.
The following lemma
%is almost self-evident and
was proved in \cite{ref:CzyzowiczOn09}.

%DANNY-NEW
%
% In the lemma below, $\lambda$ was undefined. Thus I added a specification
% for $\lambda$ as "for a value $\lambda$ with $0< \lambda\leq \lambda^*$".
% But I am not sure this is what is needed.  Please check.
%
%Haitao--->
%The original \lambda is a typo, it should be \lambda^*
\begin{lemma}\label{lem:20}
{\em (Czyzowicz {\em et al.} \cite{ref:CzyzowiczOn09})}
If $\lambda^* > 0$, then in $OPT$, there exist a sequence of consecutive
sensors $s_i,s_{i+1},\ldots,s_j$ with $i\leq j$ such that they are in attached
positions and
one of the following three cases is true. (a) The sensor $s_j$
is moved to the left by the distance $\lambda^*$ and $y_i=r$
(i.e., the sensors $s_i,s_{i+1},\ldots,s_j$ together cover exactly the
interval $[0,2r(j-i+1)]$). (b) The sensor
$s_i$ is moved to the right by the distance $\lambda^*$ and $y_j=L-r$.
%
%DANNY-NEW
%
% For case (c) below, any sensor in the sequence has a
% final destination $y_k$ determined as in cases (a) and (b)?
%My answer: YES, they are all in attached positions as stated in the
%beginning of this lemma!
(c) For $i\neq j$ (i.e., $i<j$), the sensor
$s_i$ is moved to the right by the distance $\lambda^*$ and the sensor
$s_j$ is moved to the left by the distance $\lambda^*$.
\end{lemma}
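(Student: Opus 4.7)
My plan is to argue by an exchange/perturbation argument on the optimal order-preserving solution $OPT$ produced by Lemma \ref{lem:10}. Since $\lambda^* > 0$, at least one sensor moves by exactly $\lambda^*$. By the left--right symmetry of the problem (reflect positions about $x = L/2$ and swap the roles of $0$ and $L$), it suffices to treat the case in which some sensor moves \emph{left} by $\lambda^*$; the symmetric situation, with some sensor moving right by $\lambda^*$, will yield case (b) or case (c) by the very same reasoning.

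First I would pick any sensor $s_j$ that moves left by $\lambda^*$ in $OPT$, and then build the maximal block $[s_i,s_j]$ of consecutive sensors in attached positions ending at $s_j$, by extending the index leftward from $j$ as long as each adjacent pair of sensors has covering intervals that touch at exactly one point. I then split into two sub-cases. If some sensor $s_k$ with $i \le k < j$ inside this block moves right by $\lambda^*$, then the sub-block $[s_k,s_j]$ is still in attached positions, has $k < j$, and simultaneously has $s_k$ moved right by $\lambda^*$ and $s_j$ moved left by $\lambda^*$, which is exactly case (c). Otherwise, no sensor of $[s_i,s_j]$ moves right by $\lambda^*$, and I will aim to prove $y_i = r$, giving case (a).

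To establish $y_i = r$ in this second sub-case, the key step is a rigid shift: translate the whole block $[s_i,s_j]$ to the right by a small distance $\delta > 0$, leaving every other sensor fixed. By the sub-case assumption, every sensor of the block moving right in $OPT$ does so by strictly less than $\lambda^*$, so for $\delta$ small enough the maximum displacement over the block stays strictly below $\lambda^*$; in particular $s_j$'s displacement drops to $\lambda^* - \delta$. If $y_i < r$, then the strip $[y_i - r, y_i - r + \delta)$ vacated by the shift lies outside $B = [0,L]$, so coverage of $B$ is preserved and optimality is contradicted. If $y_i > r$, the maximality of the attached block forces $s_{i-1}$ either to overlap $s_i$ or to have a gap with $s_i$; a gap would leave a point of $B$ uncovered already in $OPT$, whereas the overlap leaves positive slack $y_{i-1} + r - (y_i - r)$ that absorbs a small $\delta$ without losing coverage, again contradicting optimality. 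The hard part of the program will be precisely this coverage bookkeeping when $y_i > r$: I must combine the order-preserving property, the full coverage of $B$ by $OPT$, and the maximality of the attached block to rule out gaps on the left of $s_i$ and to convert the overlap case into quantitative slack for $\delta$. Once this is in place, $y_i = r$ follows, and the three listed configurations (a), (b), (c) are the only possible outcomes of the case split.
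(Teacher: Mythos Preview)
The present paper does not give its own proof of this lemma; it is quoted from Czyzowicz \emph{et al.}\ and no argument is reproduced. So there is no proof here to compare against, and I assess your plan on its own merits.

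Your perturbation step has a genuine gap. You choose one sensor $s_j$ moving left by $\lambda^*$, build the leftward-maximal attached block $[s_i,s_j]$, and in the sub-case where no block member moves right by $\lambda^*$ you slide the whole block right by $\delta$, asserting that ``optimality is contradicted.'' But sliding the block only controls displacements of sensors \emph{inside} $[s_i,s_j]$ (you explicitly leave every other sensor fixed). Any sensor outside the block whose displacement equals $\lambda^*$ is untouched, so the global maximum need not drop below $\lambda^*$ and there is no contradiction. Such outside sensors are entirely consistent with your hypotheses: a second left-mover $s_{j'}$ with displacement $\lambda^*$ may sit to the right of $s_j$, or a right-mover $s_k$ with displacement $\lambda^*$ may lie anywhere outside the block --- your step detecting case~(c) only sees right-movers that happen to lie \emph{inside} the chosen block. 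The left--right symmetry you invoke does not rescue this, since the mirrored argument has the identical defect.

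The natural repair is to handle all sensors at displacement $\lambda^*$ simultaneously rather than a single arbitrarily chosen one. For instance, if every $\lambda^*$-mover goes left, collect \emph{every} leftward-maximal attached block containing at least one of them and slide all these blocks right by the same small $\delta$; your left-boundary dichotomy ($y_i<r$; or $y_i>r$ with strict overlap on the left; a gap being impossible in a valid coverage) then applies block by block, and if no block has its leftmost sensor at $y_i=r$ you genuinely beat $\lambda^*$, a contradiction --- hence some block witnesses case~(a). The mixed situation (both a left-mover and a right-mover at $\lambda^*$) likewise requires a global argument producing an attached run from a right-mover $s_i$ to a left-mover $s_j$ with $i<j$ for case~(c); picking one $s_j$ and looking only to its left does not suffice.
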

%<---

Cases (a) and (b) in Lemma \ref{lem:20} are symmetric.
By Lemma \ref{lem:20}, for each pair of sensors $s_i$ and
$s_j$ with $i\leq j$, we can compute three distances
$\lambda_1(i,j), \lambda_2(i,j)$, and $\lambda_3(i,j)$ corresponding
to the three cases in Lemma \ref{lem:20} as
candidates for the optimal distance $\lambda^*$.
Specifically, $\lambda_1(i,j)=x_j-[2r(j-i)+r]$, where the value $2r(j-i)+r$ is
supposed to be the destination of the sensor $s_j$ in $OPT$ if case (a) holds.
Symmetrically, $\lambda_2(i,j)=[L-2r(j-i)-r]-x_i$. Let
$\lambda_3(i,j)=[x_j-x_i-2r(j-i)]/2$ for $i<j$.
Let $\Lambda$ be the set of all $\lambda_1(i,j), \lambda_2(i,j)$, and
$\lambda_3(i,j)$ values. Clearly, $\lambda^*\in \Lambda$ and
$|\Lambda|=\Theta(n^2)$.  By using an algorithm
for the decision version of the uniform case to search in $\Lambda$,
one can find the value $\lambda^*$. Recall that the decision problem is
that given any value $\lambda$, determine whether there exists a
feasible solution for covering $B$ such that the moving distances of all
sensors are at most $\lambda$. Thus, $\lambda^*$ is the smallest value
in $\Lambda$ such that the answer to the decision problem on that
value is ``yes". A simple greedy $O(n)$ time algorithm was given in
\cite{ref:CzyzowiczOn09} for the decision problem.

\begin{lemma}\label{lem:30}
{\em (Czyzowicz {\em et al.} \cite{ref:CzyzowiczOn09})}
The decision version of the uniform case is solvable in $O(n)$ time.
\end{lemma}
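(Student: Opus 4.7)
The plan is to give a one-pass left-to-right greedy that exploits the order preserving property of Lemma~\ref{lem:10}. Fix the candidate value $\lambda$; by that lemma we may restrict attention to solutions with $y_1 \leq y_2 \leq \cdots \leq y_n$ and $|x_i - y_i| \leq \lambda$ for each $i$. I would maintain a scalar frontier $c$, initially $0$, denoting the largest $x'$ such that $[0, x']$ is fully covered by the destinations committed so far.

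In step $i$ the sensor $s_i$ is constrained by reachability, $y_i \in [x_i - \lambda,\, x_i + \lambda]$, and by continuity with the current coverage, $y_i - r \leq c$; we would like to maximize the new right extension $y_i + r$. If $x_i - \lambda > c + r$, no admissible $y_i$ exists, and since $x_j \geq x_i$ for all $j > i$ no later sensor can close the impending gap either, so we report that no feasible solution exists. Otherwise we set $y_i := \min\{x_i + \lambda,\, c + r\}$ and update $c := \max\{c,\, y_i + r\}$. As soon as $c \geq L$ we return feasibility, and at the end of the loop we return feasibility iff $c \geq L$. Each iteration is $O(1)$, giving the $O(n)$ bound, and the input is already sorted by $x_i$ so no preprocessing is needed.

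Correctness I would prove by induction. Let $c_i^G$ denote the greedy's frontier after step $i$ and $c_i^\ast$ the frontier after step $i$ of any order preserving feasible solution for $\lambda$. I claim $c_i^G \geq c_i^\ast$: having $c_{i-1}^G \geq c_{i-1}^\ast$ only relaxes the continuity constraint $y_i \leq c_{i-1} + r$, so the greedy's feasible set for $y_i$ contains the comparison solution's choice, and since the greedy picks $y_i$ maximizing $y_i + r$ we obtain $c_i^G \geq c_i^\ast$. A separate check shows the greedy's own choices satisfy $y_i \geq y_{i-1}$, so they themselves form an order preserving placement. Consequently, if any order preserving feasible solution exists the greedy succeeds, and Lemma~\ref{lem:10} lifts this to arbitrary feasible solutions.

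The main obstacle I anticipate is the ``wasted'' regime where $x_i + \lambda + r \leq c$, so $s_i$ cannot advance the frontier from any position. Here one must verify that the greedy's choice $y_i = x_i + \lambda$ still respects $y_{i-1} \leq y_i$ and does not disturb subsequent placements; this follows from $y_{i-1} \leq x_{i-1} + \lambda \leq x_i + \lambda$ together with the fact that the frontier is never pulled back. Once this invariant is nailed down, the induction closes and the overall argument becomes routine.
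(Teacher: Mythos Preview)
The paper does not actually prove this lemma; it is quoted from Czyzowicz \emph{et al.}~\cite{ref:CzyzowiczOn09}, with the paper merely noting that ``a simple greedy $O(n)$ time algorithm was given in \cite{ref:CzyzowiczOn09} for the decision problem.'' Your left-to-right greedy with the frontier $c$ and the placement rule $y_i=\min\{x_i+\lambda,\,c+r\}$ is precisely the standard algorithm one would expect here, and your dominance-of-frontier induction is the correct way to argue its optimality among order preserving placements; combined with Lemma~\ref{lem:10} this yields the result. The only spot to tighten is the order preservation check in the non-wasted case $y_i=c_{i-1}+r$: you need $y_{i-1}\le c_{i-1}+r$, which follows since $c_{i-1}=\max\{c_{i-2},\,y_{i-1}+r\}\ge y_{i-1}+r$, hence $c_{i-1}+r\ge y_{i-1}+2r>y_{i-1}$. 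With that detail filled in, the argument is complete and matches the intended approach.
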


But, the above approach would take $\Omega(n^2)$ time due to
$|\Lambda|=\Theta(n^2)$. To reduce the running time, we cannot compute the set $\Lambda$ explicitly. In general, our $O(n\log n)$ time algorithm
uses the following idea. First, instead of computing all elements of
$\Lambda$ explicitly, we compute one element of
$\Lambda$ whenever we need it (we may do some preprocessing for
this). Second, suppose we already know (implicitly) a sorted order
of all values in $\Lambda$; then we can use binary search and the
decision algorithm for Lemma \ref{lem:30} to find $\lambda^*$. However,
we are not able to order the values of $\Lambda$ into a single sorted
list; instead, we order them (implicitly) in
$O(n)$ sorted lists and each list has $O(n)$ values. Consequently, by a
technique called {\em binary search in sorted arrays}
\cite{ref:ChenRe11}, we compute $\lambda^*$ in $O(n\log n)$ time.
The details of our algorithm are given in the next subsection.

\subsection{Our Algorithm for the Uniform Case}

Due to the order preserving property, it is easy to check
whether $\lambda^*=0$ in $O(n)$ time. In the
following, we assume $\lambda^*> 0$.

We focus on how to order (implicitly) the elements of $\Lambda$ into
$O(n)$ sorted lists and each list contains $O(n)$ elements. We also show that
after preprocessing, each element
in any sorted list can be computed in $O(1)$ time using
the index of the element. We aim to prove the next lemma.

\begin{lemma}\label{lem:sortedlist}
In $O(n\log n)$ time, the elements of $\Lambda$ can be ordered (implicitly)
into $O(n)$ sorted lists such that each list contains $O(n)$ elements and each
element
in any list can be computed in constant time by giving the index of the
list and the index of the element in the list.
\end{lemma}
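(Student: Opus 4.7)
The plan is to handle the three types of values in $\Lambda$ separately. Recall $\lambda_1(i,j) = x_j - 2r(j-i) - r$, $\lambda_2(i,j) = L - 2r(j-i) - r - x_i$, and $\lambda_3(i,j) = (x_j - x_i - 2r(j-i))/2$, and let $\Lambda_1, \Lambda_2, \Lambda_3$ denote the three corresponding subsets of $\Lambda$.

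For $\Lambda_1$, I would group by $k = j - i$. For each $k \in \{0, 1, \ldots, n-1\}$, the list $A_k^{(1)}$ consisting of $\lambda_1(i, i+k) = x_{i+k} - (2rk + r)$ for $i = 1, 2, \ldots, n-k$ is nondecreasing in $i$, because $2rk + r$ is a constant depending only on $k$ and $x_1 \le x_2 \le \cdots \le x_n$. The $i$-th entry is returned in $O(1)$ time just by reading $x_{i+k}$ and subtracting the constant. Exactly the same device works for $\Lambda_2$: list $A_k^{(2)}$ enumerates $(L - 2rk - r) - x_i$ with $i$ traversed in reverse so as to be nondecreasing. This yields $2n$ sorted lists of size $O(n)$ each with only $O(n)$ preprocessing.

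The main obstacle is $\Lambda_3$, because the natural grouping by $k = j - i$ gives values of the form $(u_{i+k} - u_i)/2$ with $u_t = x_t - 2rt$, and these need not be monotone in $i$. The plan is to perform a single global sort instead: compute in $O(n \log n)$ time a permutation $\tau$ with $u_{\tau(1)} \le u_{\tau(2)} \le \cdots \le u_{\tau(n)}$, and store its inverse $\tau^{-1}$. For each $i \in \{1, \ldots, n\}$, define the list $B_i$ whose $k$-th entry is $(u_{\tau(\psi_i(k))} - u_i)/2$, where $\psi_i(k) = k$ for $k < \tau^{-1}(i)$ and $\psi_i(k) = k+1$ otherwise; thus $B_i$ enumerates $(u_j - u_i)/2$ in increasing order of $u_j$ over all $j \ne i$. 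By construction $B_i$ is sorted, has $n-1$ elements, and admits $O(1)$-time random access via $\tau$, $\tau^{-1}$, and the precomputed array of $u$-values.

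Combining the pieces yields $3n = O(n)$ sorted lists of size $O(n)$ each, with preprocessing cost dominated by the single sort, namely $O(n \log n)$. A subtle but harmless point is that each $B_i$ also contains entries $(u_j - u_i)/2$ for $j < i$, which are not literally members of $\Lambda_3$; however, every element of $\Lambda_3$ does appear in some $B_i$ (taking $j > i$), so the union of all our lists is a superset of $\Lambda$. Since $\lambda^* \in \Lambda$ and the decision procedure of Lemma \ref{lem:30} returns ``yes'' precisely on values $\ge \lambda^*$, the smallest value across all lists on which the decision procedure returns ``yes'' is still exactly $\lambda^*$, so these spurious entries do not interfere with the subsequent binary-search-on-sorted-arrays phase, and the lemma follows.
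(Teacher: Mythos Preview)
Your argument is correct, and for $\Lambda_3$ it is genuinely different from---and considerably simpler than---what the paper does.

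For $\Lambda_1$ and $\Lambda_2$ the difference is cosmetic: the paper fixes $j$ and varies $i$ (using $\lambda_1(i_1,j)\le\lambda_1(i_2,j)$ for $i_1\le i_2$), whereas you fix the difference $k=j-i$ and vary $i$. Both parameterizations yield $n$ sorted lists with $O(1)$ random access.

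For $\Lambda_3$ the contrast is substantial. The paper first prunes $\Lambda_3$ down to a subset $\Lambda_3'$ indexed by pairs $(b_k,a_h)$ coming from a decomposition of the sensors into ``groups'' $G_1,\dots,G_m$ (Lemma~\ref{lem:remove}), then proves an additive identity (Lemma~\ref{lem:relativeorder}) showing that all the row-lists share the sorted order of a single reference list $\mathcal L(1)$, and finally patches up the undefined entries in each row via that same identity (Lemma~\ref{lem:newlist}). Your substitution $u_t=x_t-2rt$, which turns $\lambda_3(i,j)$ into $(u_j-u_i)/2$, bypasses all of this machinery: one global sort of the $u$-values gives, for every $i$, a sorted list $B_i$ with $O(1)$ random access via $\tau$ and $\tau^{-1}$. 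The price is that each $B_i$ contains some spurious entries $(u_j-u_i)/2$ with $j<i$; but as you note (and as the paper itself remarks when justifying its own padding of undefined entries), enlarging the candidate set is harmless because $\lambda^*$ remains the smallest feasible value in the superset. What the paper's route buys is perhaps a tighter structural picture of which pairs $(i,j)$ can actually realize $\lambda^*$; what your route buys is a one-line reduction and no auxiliary lemmas.
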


The following technique, called {\em binary search on sorted
arrays} \cite{ref:ChenRe11}, will be applied.
Suppose there is a ``black-box"
decision procedure $\Pi$ available such that given any value $a$,
$\Pi$ can report whether $a$ is a {\em feasible value} to a certain problem
in $O(T)$ time, and further, if $a$ is a feasible value, then any
value larger than $a$ is also feasible. Given a set of $m$
arrays $A_i$, $1\leq i\leq m$, each containing $O(n)$ elements in
sorted order, the goal is to find the
smallest feasible value $\delta$ in $A=\cup_{i=1}^m A_i$.
Suppose each element of any array can be obtained in constant time
by giving its indices. An algorithm for the following result was presented
by Chen {\it et al.}~in \cite{ref:ChenRe11}.

\begin{lemma}\label{lem:binarysearch}
{\em (Chen {\em et al.} \cite{ref:ChenRe11})}
%{\em (\cite{ref:ChenRe11})}
%Given a set of $m$ arrays $A_i$, $1\leq i\leq m$, each containing
%$O(n)$ elements in sorted order, the largest smallest feasible value $\delta$
%in $A=\cup_{i=1}^mA_i$ can be determined in $O((m+T)\log (nm))$ time, where
%$O(T)$ is the time taken by one call to a ``black-box" decision
%procedure $\Pi$.
The value $\delta$
in $A$ can be computed in $O((m+T)\log (nm))$ time.
\end{lemma}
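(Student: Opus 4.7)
The plan is to mimic a classical Frederickson--Johnson style multi-array search by maintaining, for each sorted array $A_i$, an active index range $[\ell_i,r_i]$ with the invariant that the smallest feasible element contributed by $A_i$, if any, lies in $A_i[\ell_i..r_i]$. Initially $[\ell_i,r_i]=[1,|A_i|]$. I will also keep a running best tentative answer $\delta^*$, initialized to $+\infty$, that records the smallest value ever reported feasible by $\Pi$; the goal is to shrink the active ranges until they are all empty, at which point $\delta^*=\delta$ by monotonicity.

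Each round proceeds as follows. Let $n_i=r_i-\ell_i+1$ and let $x_i$ be the median element of $A_i$'s active range (accessible in $O(1)$ by the indexing assumption). Using linear-time weighted selection in $O(m)$ time, compute the weighted median $x^*$ of the $x_i$'s with weights $n_i$. Now invoke $\Pi(x^*)$ once, at cost $O(T)$. If $\Pi(x^*)$ reports infeasibility, then every value $\le x^*$ is infeasible, so for each array with $x_i\le x^*$ I can advance $\ell_i$ past the position of $x_i$, discarding at least $\lceil n_i/2\rceil$ of its active elements. If $\Pi(x^*)$ reports feasibility, I update $\delta^*\gets\min(\delta^*,x^*)$; then for each array with $x_i\ge x^*$ I retract $r_i$ to just before the position of $x_i$, again discarding at least $\lceil n_i/2\rceil$ of its active elements. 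By the defining property of the weighted median, the arrays on the pruned side account for at least half of the total weight $\sum_i n_i$, so at least a quarter of all currently active elements are eliminated per round.

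Starting from a total of $N\le nm$ active elements, the shrinking process terminates in $O(\log(nm))$ rounds, each costing $O(m+T)$, for a grand total of $O((m+T)\log(nm))$ as claimed. Correctness follows from the two invariants: (i) every element that is ever discarded is either known infeasible or known to be no smaller than the current $\delta^*$, and (ii) the smallest feasible element of $A$ is always either still active or equal to $\delta^*$. When all ranges become empty, (ii) forces $\delta^*=\delta$.

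The main obstacle I expect is not the pruning itself but the careful accounting needed so that the weighted median guarantees a constant-fraction reduction even when many arrays have become very small or empty; handling arrays whose active range has size $1$ (so that ``discard the half containing $x_i$'' removes only that single element) requires treating them separately — for instance, by resolving all singleton arrays in a single batch whenever their aggregate weight grows comparable to the non-singleton weight, which preserves the $\Theta(1)$-fraction decrease per round and hence the $O(\log(nm))$ bound.
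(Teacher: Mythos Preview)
The paper does not prove this lemma at all: it is stated as a black-box citation to Chen \emph{et al.}~\cite{ref:ChenRe11} and used without further argument. Your proposal supplies a self-contained proof via the standard Frederickson--Johnson weighted-median-of-medians pruning, and the argument is correct. Your worry about singleton arrays is unnecessary: when $n_i=1$, discarding ``the half containing $x_i$'' removes that one element, which is its full weight $n_i=1$, so the inequality $\sum_{\text{pruned }i}\lceil n_i/2\rceil \ge \tfrac12\sum_{\text{pruned }i} n_i \ge W/4$ goes through unchanged without any special batching. The only housekeeping you should make explicit is to drop arrays once their active range becomes empty (so they contribute neither a median $x_i$ nor weight), but this is routine and does not affect the bound.
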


If we use the algorithm for Lemma \ref{lem:30} as the decision procedure $\Pi$, then
by Lemmas \ref{lem:sortedlist} and \ref{lem:binarysearch}, we can find
$\lambda^*$ in $\Lambda$ in $O(n\log n)$ time.
After $\lambda^*$ is found, we can apply the algorithm for Lemma
\ref{lem:30} to compute the destinations of all sensors, in $O(n)$
time.  Hence, we have the following result.

\begin{theorem}
The uniform case of the \bcls\ problem is solvable in $O(n\log n)$ time.
\end{theorem}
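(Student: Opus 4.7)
The plan is to assemble the three ingredients already developed, namely the candidate set $\Lambda$ together with the fact that $\lambda^*\in\Lambda$, the implicit organization of $\Lambda$ into $O(n)$ sorted lists of $O(n)$ elements each (Lemma \ref{lem:sortedlist}), the linear-time decision procedure (Lemma \ref{lem:30}), and the binary-search-on-sorted-arrays framework (Lemma \ref{lem:binarysearch}).

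First, I would dispose of the degenerate case $\lambda^*=0$: by the order preserving property (Lemma \ref{lem:10}) this can be tested in $O(n)$ time by simply checking whether the input configuration already covers $B$. Henceforth assume $\lambda^*>0$, so that by Lemma \ref{lem:20} the value $\lambda^*$ equals one of the candidates $\lambda_1(i,j),\lambda_2(i,j),\lambda_3(i,j)$, i.e., $\lambda^*\in\Lambda$.

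Next, I would invoke Lemma \ref{lem:sortedlist} to implicitly arrange $\Lambda$ into $m=O(n)$ sorted lists, each of length $O(n)$, in $O(n\log n)$ preprocessing time, with random access to any entry in $O(1)$ time via its indices. I would then apply Lemma \ref{lem:binarysearch} with these arrays and with the decision algorithm of Lemma \ref{lem:30} as the black-box procedure $\Pi$, whose running time is $T=O(n)$. Feasibility is monotone in $\lambda$, so the smallest feasible value in $\Lambda$ is exactly $\lambda^*$. Plugging into the bound gives $O((m+T)\log(nm))=O((n+n)\log(n\cdot n))=O(n\log n)$ time to locate $\lambda^*$. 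Finally, one more invocation of the decision algorithm at $\lambda=\lambda^*$ yields the destinations of all sensors in $O(n)$ time, and the total running time is $O(n\log n)$.

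There is essentially no hard step left, since the real technical obstacle, namely finding a way to (implicitly) order the $\Theta(n^2)$ candidates in $\Lambda$ without materializing them, has already been handled by Lemma \ref{lem:sortedlist}; the theorem is now a direct composition of the three lemmas. The only point worth stating explicitly in the write-up is the monotonicity of feasibility in $\lambda$, which justifies using Lemma \ref{lem:30} as the $\Pi$ required by Lemma \ref{lem:binarysearch}.
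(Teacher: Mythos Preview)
Your proposal is correct and follows essentially the same approach as the paper: use Lemma~\ref{lem:sortedlist} to organize $\Lambda$ into $O(n)$ sorted lists, plug the $O(n)$ decision algorithm of Lemma~\ref{lem:30} into the framework of Lemma~\ref{lem:binarysearch} to find $\lambda^*$ in $O(n\log n)$ time, and then run the decision algorithm once more at $\lambda^*$ to obtain the destinations. The paper's proof is in fact even terser than yours; your explicit remarks on the $\lambda^*=0$ case and the monotonicity of feasibility are reasonable additions.
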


In the rest of this subsection, we focus on proving Lemma \ref{lem:sortedlist}.

For each $1\leq t\leq 3$, let $\Lambda_t$ denote the set of all
$\lambda_t(i,j)$ values. Clearly,
$\Lambda=\Lambda_1\cup \Lambda_2\cup \Lambda_3$. We seek to order each of
the three sets $\Lambda_1,\Lambda_2$, and $\Lambda_3$ into sorted lists.

We discuss $\Lambda_1$ first. This case is trivial. It is easy to see that
for each given value $j$, we
have $\lambda_1(i_1,j)\leq \lambda_1(i_2,j)$ for any $i_1\leq i_2\leq
j$. Thus, for every value $j$, we have a sorted list
$\lambda_1(1,j),\lambda_1(2,j),\ldots,\lambda_1(j,j)$ of $j$ elements,
and each element can be computed in constant time by using the index of
the element in the list. Therefore, we have $n$ sorted lists, and
clearly, the set of elements in all these lists is exactly
$\Lambda_1$. Hence we have the following lemma.

%
%DANNY-NEW
%
% Would it be true that it takes only O(n) time for Lemmas 9 and 10,
% instead of O(n log n) time?
%
%my answer: Yes, I think so.

\begin{lemma}\label{lem:sortedlist1}
In $O(n\log n)$ time,
the elements of $\Lambda_1$ can be ordered (implicitly) into $O(n)$
sorted lists such that each list contains $O(n)$ elements and each element
in any list can be computed in constant time by giving the index of the
list and the index of the element in the list.
\end{lemma}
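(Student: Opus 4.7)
The plan is to exploit monotonicity of $\lambda_1(i,j)$ in the first index. Recall that $\lambda_1(i,j) = x_j - 2r(j-i) - r$, which can be rewritten as $x_j - r - 2r(j-i)$. For any fixed $j$, as $i$ increases from $1$ up to $j$, the quantity $j-i$ strictly decreases, so $-2r(j-i)$ strictly increases. Hence $\lambda_1(1,j) < \lambda_1(2,j) < \cdots < \lambda_1(j,j)$, which already gives a sorted list for each $j$.

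Concretely, I would define, for every $j \in \{1,2,\ldots,n\}$, the list
\[
L_j \;=\; \bigl(\lambda_1(1,j),\,\lambda_1(2,j),\,\ldots,\,\lambda_1(j,j)\bigr),
\]
which by the above observation is in non-decreasing order. There are $n$ such lists, each of size at most $n$, and $\bigcup_{j=1}^{n} L_j = \Lambda_1$ by construction.

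For the constant-time access, note that the sensor coordinates $x_1,\ldots,x_n$ are given as an array and the common range $r$ is a single scalar. Given the pair of indices $(j, i)$ identifying the $i$-th element of the $j$-th list, we can return $x_j - r - 2r(j - i)$ in $O(1)$ time. No nontrivial preprocessing is required beyond reading the input; the claimed $O(n \log n)$ bound is trivially met (indeed, $O(n)$ suffices here). There is no real obstacle in this case since the monotonicity in $i$ is immediate from the closed-form expression for $\lambda_1(i,j)$; the more delicate parts of Lemma~\ref{lem:sortedlist} will be the corresponding statements for $\Lambda_2$ and $\Lambda_3$, to be handled separately.
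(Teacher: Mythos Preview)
Your argument is correct and essentially identical to the paper's: the paper also fixes $j$, observes that $\lambda_1(i_1,j)\le\lambda_1(i_2,j)$ whenever $i_1\le i_2\le j$, and takes the $n$ lists $\lambda_1(1,j),\ldots,\lambda_1(j,j)$ with $O(1)$-time evaluation per element. There is nothing to add.
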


The set $\Lambda_2$ can be processed in a symmetric manner as
$\Lambda_1$, and we omit the details.

\begin{lemma}\label{lem:sortedlist2}
In $O(n\log n)$ time,
the elements of $\Lambda_2$ can be ordered (implicitly) into $O(n)$
sorted lists such that each list contains $O(n)$ elements and each element
in any list can be computed in constant time by giving the index of the
list and the index of the element in the list.
\end{lemma}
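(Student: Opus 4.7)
The plan is to mirror, step by step, the construction used for $\Lambda_1$ in Lemma \ref{lem:sortedlist1}, but swapping the roles of the two indices. Recall that
\[
\lambda_2(i,j) = [L - 2r(j-i) - r] - x_i \quad\text{for } 1 \le i \le j \le n,
\]
so that in $\lambda_2(i,j)$ the ``anchor'' input appears through $x_i$ (whereas in $\lambda_1(i,j)$ it appeared through $x_j$). Accordingly, I would let $i$ play the role that $j$ played for $\Lambda_1$ and form one list per value of $i$.

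First I would check the monotonicity that makes each list sorted. Fix any $i \in \{1,\ldots,n\}$ and consider $\lambda_2(i,j)$ as a function of $j$ ranging over $\{i, i+1, \ldots, n\}$. Since $r > 0$, the quantity $2r(j-i)$ is strictly increasing in $j$, while $L$, $r$, and $x_i$ are constants with respect to $j$. Hence $\lambda_2(i,j)$ is strictly decreasing in $j$, and the reversed sequence
\[
\lambda_2(i,n),\ \lambda_2(i,n-1),\ \ldots,\ \lambda_2(i,i)
\]
is non-decreasing. This yields, for each $i$, one sorted list of length $n - i + 1 \le n$, giving $n$ sorted lists in total. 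Moreover, the pairs $(i,j)$ with $1 \le i \le j \le n$ are exactly the defining indices of $\Lambda_2$, so these $n$ lists collectively enumerate $\Lambda_2$ without repetition.

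Next I would address constant-time random access. After an $O(n)$ preprocessing pass to store the input values $x_1,\ldots,x_n$ in an indexable array (together with the constants $L$ and $r$), any specific element $\lambda_2(i,j)$ can be evaluated in $O(1)$ time from its closed-form expression. Given the list index $i$ and a position $k$ in the list, the corresponding $j$ is simply $n - k + 1$, so the entry can be returned in constant time. There is no genuine obstacle here, since the construction is an index-swapped copy of the $\Lambda_1$ case; the total preprocessing cost is $O(n)$, which is well within the $O(n\log n)$ bound claimed, and all conditions of the lemma are satisfied.
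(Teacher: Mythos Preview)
Your proposal is correct and follows exactly the symmetric construction the paper intends: the paper omits the details for $\Lambda_2$, saying only that it ``can be processed in a symmetric manner as $\Lambda_1$,'' and your argument (fix $i$, observe that $\lambda_2(i,j)$ is monotone in $j$, and reverse the order) is precisely that symmetric argument spelled out.
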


In the following, we focus on ordering (implicitly) the set $\Lambda_3$ and
showing the following lemma, which, together with Lemmas \ref{lem:sortedlist1}
and \ref{lem:sortedlist2}, proves Lemma \ref{lem:sortedlist}.

\begin{lemma}\label{lem:sortedlist3}
In $O(n\log n)$ time,
the elements of $\Lambda_3$ can be ordered (implicitly) into $O(n)$
sorted lists such that each list contains $O(n)$ elements and each element
in any list can be computed in constant time by giving the index of the
list and the index of the element in the list.
\end{lemma}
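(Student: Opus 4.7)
The key insight is to reparameterize $\lambda_3(i,j)$ so that it becomes a difference of two single-index quantities. Define $u_k := x_k - 2rk$ for $k=1,\ldots,n$; these values are computable in $O(n)$ time. A direct calculation gives
\[
\lambda_3(i,j) \;=\; \frac{x_j - x_i - 2r(j-i)}{2} \;=\; \frac{u_j - u_i}{2}.
\]
I then sort the $u_k$'s in $O(n\log n)$ time to obtain a permutation $\sigma$ with $u_{\sigma(1)}\le u_{\sigma(2)}\le\cdots\le u_{\sigma(n)}$, and store $\sigma$ and $\sigma^{-1}$ as plain arrays.

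For each $a\in\{1,\ldots,n-1\}$, define the list
\[
A_a \;:=\; \Bigl(\tfrac{u_{\sigma(a+1)}-u_{\sigma(a)}}{2},\ \tfrac{u_{\sigma(a+2)}-u_{\sigma(a)}}{2},\ \ldots,\ \tfrac{u_{\sigma(n)}-u_{\sigma(a)}}{2}\Bigr).
\]
Since $u_{\sigma(\cdot)}$ is nondecreasing, each $A_a$ is sorted, has at most $n$ entries, and its $k$-th entry is retrievable in $O(1)$ time from the arrays $u[\cdot]$ and $\sigma[\cdot]$. This gives $O(n)$ sorted lists of $O(n)$ entries with $O(1)$ random access, after only $O(n\log n)$ preprocessing. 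The coverage claim is that every relevant element of $\Lambda_3$ lies in $\bigcup_a A_a$: given $i<j$ with $\lambda_3(i,j)>0$ one has $u_i<u_j$, so setting $a:=\sigma^{-1}(i)$ and $b:=\sigma^{-1}(j)$ gives $a<b$, and the $(b-a)$-th entry of $A_a$ equals exactly $\lambda_3(i,j)$. Since we have already reduced to the case $\lambda^*>0$, every element of $\Lambda_3$ that could possibly equal $\lambda^*$ appears in $\bigcup_a A_a$.

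The step I expect to be the main obstacle is handling the ``cross-terms'' in $A_a$: values $(u_{\sigma(a+k)}-u_{\sigma(a)})/2$ with $\sigma(a)>\sigma(a+k)$ do not correspond to any $\lambda_3(i,j)$ with $i<j$ and so are not strictly elements of $\Lambda_3$. The argument that they are harmless is as follows: every such cross-term is non-negative, and feasibility of the uniform \bcls\ decision problem is monotone in $\lambda$ (trivially, a solution for $\lambda$ also works for any $\lambda'>\lambda$), so feeding $\bigcup_a A_a$ together with the lists from Lemmas \ref{lem:sortedlist1} and \ref{lem:sortedlist2} into the binary-search procedure of Lemma \ref{lem:binarysearch} with the decision algorithm of Lemma \ref{lem:30} still returns precisely $\lambda^*$: cross-terms below $\lambda^*$ are infeasible and hence rejected by the decision procedure, while those at or above $\lambda^*$ cannot undercut the true minimum. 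Suppressing the cross-terms instead would force each access through a rank/select query of cost $\Omega(\log n)$, breaking the overall $O(n\log n)$ bound; accepting them is precisely what enables constant-time random access. With this argument in hand, Lemma \ref{lem:sortedlist3} follows.
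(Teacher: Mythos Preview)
Your proof is correct, and it is a genuinely simpler route than the paper's. The key step you found---the substitution $u_k=x_k-2rk$ so that $\lambda_3(i,j)=(u_j-u_i)/2$---collapses the problem to sorting a single array and reading off difference lists; every such list is automatically sorted. The paper instead first restricts attention to a subset $\Lambda_3'\subseteq\Lambda_3$ via a connected-components analysis of the interval graph (Lemma~\ref{lem:remove}), proves a separate ``difference-preservation'' lemma (Lemma~\ref{lem:relativeorder}) to show all rows share the same sorted permutation, and finally patches in values for ``undefined'' entries so that each list has uniform length (Lemma~\ref{lem:newlist}). Your substitution makes Lemma~\ref{lem:relativeorder} a one-line identity and renders the group machinery ($S_a$, $S_b$, $g_k$, $l_k$) unnecessary. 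Both approaches enlarge the candidate set with harmless extra values---the paper via its assigned ``undefined'' entries, you via cross-terms with $\sigma(a)>\sigma(a+k)$---and both justify this the same way (monotonicity of feasibility plus $\lambda^*$ still being present). The paper's route does buy a reduction from $n$ to $m$ (the number of groups), but since $m\le n$ this does not improve the asymptotics. As the paper itself concedes after Lemma~\ref{lem:newlist}, its statement of Lemma~\ref{lem:sortedlist3} is ``a little imprecise'' because it actually orders only $\Lambda_3'$ plus padding; your version has the analogous imprecision (you order a superset of the positive elements of $\Lambda_3$), and your final paragraph correctly explains why this suffices for the downstream binary search.
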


Proving Lemma \ref{lem:sortedlist3} is a main challenge to our uniform
case algorithm. The reason is that, unlike $\Lambda_1$ and $\Lambda_2$, for
a given $j$, for any $1\leq i_1\leq i_2\leq j$, either
$\lambda_3(i_1,j)\leq \lambda_3(i_2,j)$ or
$\lambda_3(i_1,j)\geq \lambda_3(i_2,j)$ is possible.  Hence, to prove
Lemma \ref{lem:sortedlist3}, we have to find another way to order the
elements of $\Lambda_3$.

%Since $\lambda^*\neq 0$, initially, not all points in $B=[0,L]$ are
%covered by the sensors. In other words, there are some ``gaps" on
%$B$. We formally define {\em gaps} as follows. A maximal sub-interval
%of $B$ is called a {\em gap} such that every point in the
%sub-interval is covered by at least one sensor initially. Suppose
%there are $h$ gaps on $B$, denoted by $g_1,g_2,\ldots,g_h$, ordered
%from left to right  (see Fig.~\ref{fig:gaps}).

Our approach is to first remove some elements from $\Lambda_3$ that
are surely not $\lambda^*$ (for example, negative values
cannot be $\lambda^*$). We begin with some intuitions. We say two
intervals on the $x$-axis are {\em strictly overlapped} if they contain more than one common point. In the following discussion, the sensors are
always at their input positions unless otherwise stated. We define
two subsets of sensors, $S_a$ and $S_b$, as follows. A sensor $s_j$ is
in $S_a$ if and only if there is no sensor $s_i$ with $i<j$ such that
their covering intervals are strictly overlapped (e.g., see
Fig.~\ref{fig:groups}). A sensor $s_i$ is in $S_b$ if
and only if there is no sensor $s_j$ with $i<j$ such that their
covering intervals are strictly overlapped. Let the indices of sensors in
$S_a$ be $a_1,a_2,\ldots,a_{n_1}$ and the indices of sensors in
$S_b$ be $b_1,b_2,\ldots,b_{n_2}$, from left to right.
We claim $n_1=n_2$. To see this,
consider the interval graph $G$ in which the covering interval of each
sensor is a vertex and two vertices are connected by an edge if their
corresponding intervals are strictly overlapped. Observe that in each
connected component of $G$, there is exactly one interval whose
corresponding sensor is in $S_a$ and there is exactly one interval
whose corresponding sensor is in $S_b$, and vice versa. Thus,
$n_1=n_2$, which is the number of connected components of $G$. Let
$m=n_1=n_2\leq n$. Further, it is easy to see that the covering intervals of
both $a_i$ and $b_i$ must be in the same connected component of $G$ and
$a_i\leq b_i$. Indeed, $a_i$ (resp., $b_i$) is the leftmost (resp.,
rightmost) sensor in the subset of sensors whose covering intervals
are in the same connected component of $G$ (see
Fig.~\ref{fig:groups}). Note that $a_i=b_i$ is
possible.  Hence, $G$ has $m$ connected components.

For each $1\leq i\leq m$,
let $G_i$ denote the connected component containing the
covering intervals of $a_i$ and $b_i$; with a little
abuse of notation, we also use $G_i$ to denote the subset of sensors
whose covering intervals are in the connected component $G_i$.
Clearly, $G_i=\{s_j\ |\ a_i\leq j\leq b_i\}$ (e.g., see
Fig.~\ref{fig:groups}). We also call $G_i$ a {\em group} of sensors.
The groups $G_1,G_2,\ldots,G_m$ form a partition of $S$.
The sensor $s_{a_i}$ (resp., $s_{b_i}$) is the leftmost
(resp., rightmost) sensor in $G_i$.

%We also call the sensor $s_{a_i}$ (resp., $s_{b_i}$) the
%{\em first} (resp., {\em last}) sensor of $G_i$. Thus, the set $S_a$
%(resp., $S_b$) consists of the first (resp., last) sensors of all sensor groups.

\begin{figure}[t]
\begin{minipage}[t]{\linewidth}
\begin{center}
\includegraphics[totalheight=0.8in]{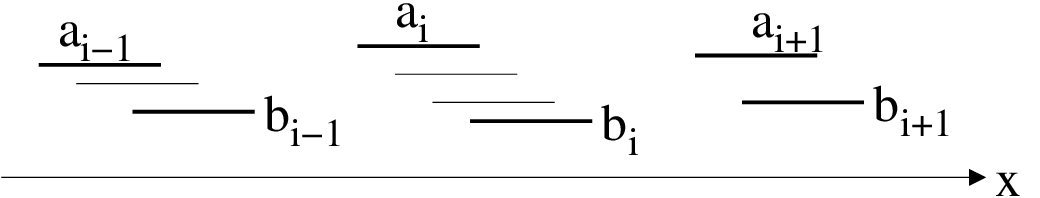}
\caption{\footnotesize Illustrating three groups of sensors:
$G_{i-1},G_i$, and $G_{i+1}$. The sensors with indices $a_{i-1},a_i$,
and $a_{i+1}$ (resp., $b_{i-1},b_i$, and $b_{i+1}$) are in $S_a$
(resp., $S_b$), which are indicated by thick segments.}
\label{fig:groups}
\end{center}
\end{minipage}
\vspace*{-0.15in}
\end{figure}

\begin{lemma}\label{lem:remove}
For any two sensors $s_i$ and $s_j$ with $i<j$, if $s_i\not\in S_b$ or
$s_j\not\in S_a$, then $\lambda_3(i,j)\neq\lambda^*$.
%In other words, if $s_i$ is not the last sensor of a group or $s_j$ is
%not the first sensor of a group, then the value $\lambda_3(i,j)$ cannot be $\lambda^*$.
\end{lemma}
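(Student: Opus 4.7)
The plan is to prove the contrapositive: if $\lambda^* = \lambda_3(i,j)$, then $s_i \in S_b$ and $s_j \in S_a$. By the left--right symmetry of the problem, it suffices to show $s_j \in S_a$; the conclusion $s_i \in S_b$ follows by reflecting the sensors about the midpoint of $B$ (the reflected problem has the same $\lambda^*$ and swaps the roles of $S_a$ and $S_b$). I would then argue by contradiction: assume $s_j \notin S_a$, so there is a sensor $s_k$ with $k<j$ whose covering interval in the input strictly overlaps that of $s_j$, giving $x_j - x_k < 2r$. Combined with the positivity of $\lambda_3(i,j)$, which yields $x_j - x_i > 2r(j-i) \ge 2r$, this sets up a two-case analysis on the relative order of $k$ and $i$.

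The easy case is $k \le i$: the input ordering gives $x_k \le x_i$, so $x_j - x_i \le x_j - x_k < 2r$, contradicting $x_j - x_i > 2r$. In the hard case $i < k < j$, I would fix an order-preserving optimal solution $OPT$ (guaranteed by Lemma~\ref{lem:10}) with destinations $y_1 \le \cdots \le y_n$ and maximum displacement $\lambda^*$. The displacement bounds immediately give $y_j - y_i \ge (x_j - \lambda^*) - (x_i + \lambda^*) = 2r(j-i)$. A coverage/pigeon-hole argument yields the reverse inequality: by order-preservation, sensors with indices outside $\{i,\ldots,j\}$ lie at positions $\le y_i$ or $\ge y_j$, so any inter-sensor gap inside the chain lies in $(y_i+r,\, y_j-r)$ and can be covered by no other sensor, yet the total covering length of $s_i,\ldots,s_j$ is only $2r(j-i+1)$. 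Hence $y_j - y_i \le 2r(j-i)$, forcing equality, saturation $y_i = x_i + \lambda^*$ and $y_j = x_j - \lambda^*$, and attached positions for the chain, so $y_k = x_i + \lambda^* + 2r(k-i)$. Using $x_k > x_j - 2r = x_i + 2r(j-i-1) + 2\lambda^*$, a direct computation yields $y_k - x_k < -\lambda^* + 2r(k-j+1) \le -\lambda^*$ (since $k < j$), violating $|y_k - x_k| \le \lambda^*$. This is the desired contradiction.

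I expect the main obstacle to be the coverage step that yields $y_j - y_i \le 2r(j-i)$, because an uncovered inter-sensor gap inside $(y_i+r,\, y_j-r)$ might formally fall outside $B=[0,L]$. I would handle this by a short boundary case analysis: if $y_i - r < 0$ or $y_j + r > L$, the chain wastes coverage outside $B$, which only strengthens the pigeon-hole bound; otherwise the gap lies in the interior of $B$ and the argument goes through directly. A complementary observation is that the neighbors $s_{i-1}$ and $s_{j+1}$ (if present) reach at most to $y_i+r$ and at least from $y_j-r$, so they cannot absorb any gap in the critical middle region. Once this bookkeeping is in place, the numerical contradiction with $s_k$'s displacement follows routinely.
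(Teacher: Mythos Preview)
Your proposal is correct but takes a substantially longer route than the paper. The paper's proof is essentially three lines: assuming $\lambda_3(i,j)=\lambda^*$, it invokes case~(c) of Lemma~\ref{lem:20} directly, which already asserts that in $OPT$ the sensors $s_i,\ldots,s_j$ are in attached positions with $s_i$ moved right by exactly $\lambda^*$. Then (treating the half $s_i\notin S_b$; the other half is symmetric) it observes that $s_i\notin S_b$ forces the \emph{adjacent} intervals of $s_i$ and $s_{i+1}$ to strictly overlap, i.e.\ $x_{i+1}-x_i<2r$; since $s_{i+1}$ sits at $y_i+2r=x_i+\lambda^*+2r$, its rightward displacement is $\lambda^*+(2r-(x_{i+1}-x_i))>\lambda^*$, an immediate contradiction.

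By contrast, you re-derive the attached-positions conclusion from scratch via the two-sided inequality $y_j-y_i=2r(j-i)$, the upper half of which needs a coverage/pigeonhole argument. This duplicates the content of Lemma~\ref{lem:20} and, as you correctly anticipate, introduces the boundary complication that a gap in the chain might lie outside $B=[0,L]$; the paper never faces this issue because Lemma~\ref{lem:20} hands it the attached chain for free. Your two-case split on $k\le i$ versus $i<k<j$ is also unnecessary once you note that in the uniform case $s_j\notin S_a$ forces specifically $s_{j-1}$ to overlap $s_j$ (any overlapping $s_k$ with $k<j$ has $x_k\le x_{j-1}$, so $x_j-x_{j-1}\le x_j-x_k<2r$); taking $k=j-1$ collapses everything to the single line the paper uses. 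What your approach does buy is a proof of the literal numerical statement $\lambda_3(i,j)\neq\lambda^*$ for an arbitrary order-preserving $OPT$, rather than relying on the interpretation that ``$\lambda_3(i,j)=\lambda^*$'' means case~(c) of Lemma~\ref{lem:20} is witnessed by this particular pair $(i,j)$; but for the algorithm's purposes that extra strength is not needed.
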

\begin{proof}
Assume $s_i\not\in S_b$. In the following, we prove that $\lambda_3(i,j)$
cannot be $\lambda^*$ for any $i<j$.

Suppose $s_i$ is in the group $G_k$. Then $i<b_k$ due to $s_i\not\in S_b$.
Further, the
covering intervals of $s_i$ and $s_{i+1}$ must be strictly overlapped
(otherwise, $s_i$ would be in $S_b$). Assume to the contrary
$\lambda_3(i,j)=\lambda^*$, which implies that case (c) in Lemma
\ref{lem:20} holds. Thus, in the corresponding $OPT$,
$s_i$ is moved to the right by the
distance $\lambda_3(i,j)$ and all sensors $s_i,s_{i+1},\ldots,s_j$
must be in attached positions. It is easy to see that the
sensor $s_{i+1}$ must move to the right by the distance
$\lambda_3(i,j)+2r-(x_{i+1}-x_i)$. Since the covering intervals of
$s_i$ and $s_{i+1}$ are strictly overlapped, $2r-(x_{i+1}-x_i)>0$.
Therefore, the moving distance of $s_{i+1}$ must be larger than that
of $s_i$. Since the moving distance of $s_i$ is
$\lambda_3(i,j)=\lambda^*$, we have contradiction. Hence, $\lambda_3(i,j)$
cannot be $\lambda^*$.

Assume $s_j\not\in S_a$. Then by a symmetric argument, we can
prove $\lambda_3(i,j)\not=\lambda^*$ for any $i<j$.
%The lemma thus follows.
\end{proof}

By Lemma \ref{lem:remove}, if $\lambda^*\in\Lambda_3$, then it can only be in
the set $\Lambda_3'=\{\lambda_3(i,j)\ |\ i< j, i\in S_b, j\in S_a\}$,
and $|\Lambda_3'|=O(m^2)$. Thus, we seek to order the
elements of $\Lambda_3'$ into $O(m)$ sorted lists and each list contains
$O(m)$ elements. One might be tempting to use the following way. Clearly,
for each $1\leq k\leq m-1$, $\Lambda_3'$ contains
$\lambda_3(s_{b_k},s_{a_h})$ for all $h=k+1,k+2,\ldots, m$, and hence one may
simply put them into a list. However, such a list is not necessarily sorted.
Specifically, for any two indices $h_1$ and $h_2$ with $k+1\leq
h_1< h_2\leq m$, either $\lambda_3(s_{b_k},s_{a_{h_1}})\leq
\lambda_3(s_{b_k},s_{a_{h_2}})$ or $\lambda_3(s_{b_k},s_{a_{h_1}}) >
\lambda_3(s_{b_k},s_{a_{h_2}})$ is possible. Our approach relies on
additional observations. Below, for simplicity of notation, we use
$\lambda_3(b_k,a_h)$ to refer to $\lambda_3(s_{b_k},s_{a_h})$. We
first examine the value of each $\lambda_3(b_k,a_h)$ in $\Lambda_3'$.

By definition, we have
$\lambda_3(b_k,a_{k+1})=(x_{a_{k+1}}-x_{b_k}-2r)/2$, and this is equal to
half the length of the interval between the right extension of $s_{b_k}$ and
the left extension of $s_{a_{k+1}}$, which we call a {\em gap}.
Note that this gap is $\emptyset$ when the two sensors $s_{b_k}$ and
$s_{a_{k+1}}$ are in attached positions.
For each $1\leq k\leq m-1$, define $g_k=x_{a_{k+1}}-x_{b_k}-2r$,
which is the length of the corresponding gap. Hence,
$\lambda_3(b_k,a_{k+1})=g_k/2$.
%Haitao--->
Further, for each $1\leq k\leq m$, we define the {\em width} of the
group $G_k$ as the length of the union interval of the covering intervals of
the sensors in $G_k$,
and define $l_k$ as the sum of the lengths of the covering intervals of
the sensors in $G_k$ minus the width of $G_k$,
i.e., $l_k$ is equal to $2r(b_k-a_k+1)$ minus the width of $G_k$. We
then have the following observation.

\begin{observation}
For every $k$ with $1\leq k\leq  m-2$, we have
$\lambda_3(b_k,a_{h})=(\sum_{t=k}^{h-1}g_t-\sum_{t=k+1}^{h-1}l_t)/2$ for
each $h$ with $k+2\leq h\leq m$.
\end{observation}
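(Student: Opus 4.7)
The plan is to unfold the definition $\lambda_3(b_k,a_h)=[x_{a_h}-x_{b_k}-2r(a_h-b_k)]/2$ and prove that the right-hand side already equals $(\sum_{t=k}^{h-1}g_t-\sum_{t=k+1}^{h-1}l_t)/2$. The two pieces I need to evaluate separately are the coordinate difference $x_{a_h}-x_{b_k}$ and the index difference $a_h-b_k$; both will be expanded by telescoping through the intermediate groups $G_{k+1},\ldots,G_{h-1}$.

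First I would record the two ``width'' identities that convert $g_t$ and $l_t$ into coordinate expressions: by definition $g_t=x_{a_{t+1}}-x_{b_t}-2r$, and since the width of $G_t$ is precisely $x_{b_t}-x_{a_t}+2r$ (the union interval of the sensors in $G_t$ stretches from $x_{a_t}-r$ to $x_{b_t}+r$ because $a_t$ and $b_t$ are the leftmost and rightmost sensors of the group), the definition $l_t=2r(b_t-a_t+1)-(\text{width of }G_t)$ simplifies to $l_t=2r(b_t-a_t)-(x_{b_t}-x_{a_t})$.

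Next I would telescope the coordinate difference:
\[
x_{a_h}-x_{b_k}=\sum_{t=k}^{h-1}(x_{a_{t+1}}-x_{b_t})+\sum_{t=k+1}^{h-1}(x_{b_t}-x_{a_t})=\sum_{t=k}^{h-1}(g_t+2r)+\sum_{t=k+1}^{h-1}\bigl(2r(b_t-a_t)-l_t\bigr).
\]
For the index count, since the groups partition $S$, the sensors strictly between $s_{b_k}$ and $s_{a_h}$ are exactly the full groups $G_{k+1},\ldots,G_{h-1}$, each contributing $b_t-a_t+1$ indices, so $a_h-b_k=\sum_{t=k+1}^{h-1}(b_t-a_t+1)+1=\sum_{t=k+1}^{h-1}(b_t-a_t)+(h-k)$, whence $2r(a_h-b_k)=2r\sum_{t=k+1}^{h-1}(b_t-a_t)+2r(h-k)$.

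Finally I would subtract these two expressions: the $2r\sum_{t=k+1}^{h-1}(b_t-a_t)$ terms match exactly, and the $\sum_{t=k}^{h-1}2r=2r(h-k)$ contribution from the coordinate side cancels the $2r(h-k)$ from the index side. What remains is $\sum_{t=k}^{h-1}g_t-\sum_{t=k+1}^{h-1}l_t$, and dividing by $2$ yields the claimed formula. I do not expect any real obstacle here; the only place to be careful is bookkeeping on the index ranges of the two telescoping sums (the ``gap'' sum runs over $h-k$ indices while the ``within-group'' sum runs over $h-k-1$ indices, which is why the $2r(h-k)$ terms cancel cleanly) and the correct coordinate formula for $l_t$.
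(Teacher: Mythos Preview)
Your proposal is correct and follows essentially the same telescoping idea as the paper: the paper decomposes $\lambda_3(b_k,a_h)$ by interpreting $x_{a_h}-x_{b_k}-2r$ geometrically as the sum of the gaps $g_t$ plus the sum of the widths of the intermediate groups, and $2r(a_h-b_k-1)$ as the total length of the covering intervals of those groups, so that their difference is $\sum g_t-\sum l_t$ by the definition of $l_t$. Your version carries out the same cancellation via explicit coordinate formulas for $g_t$ and $l_t$ and an explicit telescoping of both $x_{a_h}-x_{b_k}$ and $a_h-b_k$; this is a more detailed algebraic rendering of the paper's geometric sketch, not a genuinely different argument.
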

\begin{proof}
By definition,
$\lambda_3(b_k,a_{h})=[(x_{a_h}-x_{b_k}-2r)-2r(a_h-b_k-1)]/2$. It is
easy to see that the value $x_{a_h}-x_{b_k}-2r$ is equal to
$\sum_{t=k}^{h-1}g_t$ plus the sum of the widths of all groups
$G_{k+1},G_{k+2},\ldots,G_{h-1}$, and the value $2r(a_h-b_k-1)$ is
equal to the sum of the lengths of the covering intervals of the
sensors in the union of the groups $G_{k+1},G_{k+2},\ldots,G_{h-1}$.
According to definitions of $l_t$ for $k+1\leq t\leq h-1$, the
observation follows.
\end{proof}

The following lemma will be useful later.
%<---

%Then it is not difficult to see
%
%DANNY-NEW2
%
% I am not sure the formula of \lambda_3(b_k,a_h) below is equivalent to
% \lambda_3(i,j) given in the paragraph following Lemma 5.  I do not think
% it holds, for example, for sensors in attached positions (when all gaps
% are of length 0 and thus the formula below would give a negative value).
% I am not sure how this would affect the correction of the rest of the
% algorithm.  Please check carefully this section to its end.
%
% When you make changes to the rest of the paper, please mark all your changes,
% because I have only a hardcopy for an "old" draft and it is not easy for me
% to print another hardcopy (for a new draft) when people are not on
% campus here and I do not have access to an office and printer.
%
%$\lambda_3(b_k,a_{h})=(\sum_{t=k}^{h-1}g_t-\sum_{t=k+1}^{h-1}l_t)/2$, for
%each $h$ with $k+2\leq h\leq m$.

\begin{lemma}\label{lem:relativeorder}
For four indices $k_1,k_2,h_1$, and $h_2$, suppose
$\max\{k_1,k_2\}<\min\{h_1,h_2\}$; then
$\lambda_3(b_{k_1},a_{h_1})$ $- \lambda_3(b_{k_1},a_{h_2})=
\lambda_3(b_{k_2},a_{h_1})- \lambda_3(b_{k_2},a_{h_2})$, and
consequently,
$\lambda_3(b_{k_1},a_{h_1})\leq \lambda_3(b_{k_1},a_{h_2})$ if and
only if $\lambda_3(b_{k_2},a_{h_1})\leq \lambda_3(b_{k_2},a_{h_2})$.
\end{lemma}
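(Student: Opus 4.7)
The proof plan rests on the observation that $\lambda_3$ has a \emph{separable} structure in its two arguments, so differences in one argument are independent of the value of the other.

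Concretely, the plan is to start from the definition $\lambda_3(i,j) = (x_j - x_i - 2r(j-i))/2$ and rewrite it as
\[
\lambda_3(b_k, a_h) \;=\; \Bigl(\tfrac{x_{a_h}}{2} - r\, a_h\Bigr) \;+\; \Bigl(-\tfrac{x_{b_k}}{2} + r\, b_k\Bigr).
\]
Setting $\alpha(h) = \tfrac{x_{a_h}}{2} - r\, a_h$ and $\beta(k) = -\tfrac{x_{b_k}}{2} + r\, b_k$, this says $\lambda_3(b_k, a_h) = \alpha(h) + \beta(k)$. From this decomposition, the difference
\[
\lambda_3(b_{k_1}, a_{h_1}) - \lambda_3(b_{k_1}, a_{h_2}) \;=\; \alpha(h_1) - \alpha(h_2)
\]
depends only on $h_1$ and $h_2$, not on $k_1$. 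The identical computation with $k_2$ in place of $k_1$ yields the same value $\alpha(h_1) - \alpha(h_2)$, so the two differences are equal.

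Alternatively (and equivalently), I would invoke the preceding observation that expresses $\lambda_3(b_k, a_h)$ as $\tfrac12\bigl(\sum_{t=k}^{h-1} g_t - \sum_{t=k+1}^{h-1} l_t\bigr)$. Writing out the difference $\lambda_3(b_{k_1}, a_{h_1}) - \lambda_3(b_{k_1}, a_{h_2})$ (assuming $h_1 \geq h_2$ without loss of generality), the $g_t$ and $l_t$ terms with $t < h_2$ telescope away, leaving $\tfrac12\bigl(\sum_{t=h_2}^{h_1-1} g_t - \sum_{t=h_2}^{h_1-1} l_t\bigr)$, which is manifestly independent of $k_1$.

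The hypothesis $\max\{k_1,k_2\} < \min\{h_1,h_2\}$ is used only to guarantee that each pair $(b_{k_i}, a_{h_j})$ satisfies $b_{k_i} < a_{h_j}$ (since the sensor indexing within a group gives $b_{k_i} \le $ last index of group $k_i$ and $a_{h_j} \ge $ first index of group $h_j$, with $k_i < h_j$), so that $\lambda_3(b_{k_i}, a_{h_j})$ is well defined under its $i < j$ convention. The consequent equivalence $\lambda_3(b_{k_1},a_{h_1}) \le \lambda_3(b_{k_1},a_{h_2}) \Leftrightarrow \lambda_3(b_{k_2},a_{h_1}) \le \lambda_3(b_{k_2},a_{h_2})$ is then immediate by subtracting the common value from both sides of the inequality. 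There is no real obstacle here; the whole lemma reduces to noticing the additive separability of $\lambda_3(b_k, a_h)$ in $k$ and $h$, and its purpose is to enable the later construction of the sorted lists for $\Lambda_3'$.
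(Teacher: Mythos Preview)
Your proposal is correct. Your alternative approach---computing the difference via the $g_t,l_t$ representation and observing that the surviving terms depend only on $h_1,h_2$---is exactly the paper's proof (the paper takes $h_1<h_2$ rather than $h_1\ge h_2$, but this is immaterial).

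Your primary approach is a genuinely cleaner route. Working directly from the definition $\lambda_3(i,j)=(x_j-x_i-2r(j-i))/2$ and noting that it splits as $\alpha(h)+\beta(k)$ is more elementary: it avoids the intermediate $g_t,l_t$ observation entirely and makes the separability transparent in one line. The paper's version, by contrast, routes through the $g_t,l_t$ sums and then telescopes, which is a slightly longer detour; its only advantage is that the $g_t,l_t$ quantities are already introduced for other purposes (computing elements of the sorted lists via prefix sums), so reusing them here keeps the exposition unified. Both arguments buy exactly the same conclusion, and your decomposition would in fact simplify the later value-assignment step for undefined elements as well.
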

\begin{proof}
Note that for every $1\leq k\leq m-2$, we have
$\lambda_3(b_k,a_{h})=(\sum_{t=k}^{h-1}g_t-\sum_{t=k+1}^{h-1}l_t)/2$
for $k+2\leq h\leq m$, and $\lambda_3(b_k,a_{h})=g_k/2$ for $h=k+1$.

If $h_1=h_2$, then the lemma trivially follows since
$\lambda_3(b_{k_1},a_{h_1})=\lambda_3(b_{k_1},a_{h_2})$ and
$\lambda_3(b_{k_2},a_{h_1})=\lambda_3(b_{k_2},a_{h_2})$.
Thus we consider $h_1\neq h_2$, and only show the case with $h_1< h_2$
(the case with $h_1> h_2$ is similar).
%
%DANNY-NEW3
%
% I did a recalculation below, and changed the previous result.  Please check
% whether these are the correct ones.
%
By their definitions, we have
$\lambda_3(b_{k_1},a_{h_1})-\lambda_3(b_{k_1},a_{h_2})=
(-\sum_{t=h_1}^{h_2-1}g_t+\sum_{t=h_1}^{h_2-1}l_t)/2$. Similarly,
$\lambda_3(b_{k_2},a_{h_1})-\lambda_3(b_{k_2},a_{h_2})=
(-\sum_{t=h_1}^{h_2-1}g_t+\sum_{t=h_1}^{h_2-1}l_t)/2$.
Hence, the lemma follows.
\end{proof}

Lemma \ref{lem:relativeorder} implies that for any $k_1$ and $k_2$
with $1\leq k_1< k_2\leq m-1$, the sorted order of
$\lambda_3(b_{k_1},a_t)$ for all $t=k_2+1,k_2+2,\ldots,m$ is the same as
that of the list $\lambda_3(b_{k_2},a_t)$ for
$t=k_2+1,k_2+2,\ldots,m$ in terms of the indices of $a_t$.
This means that if we sort the values
in the list $\lambda_3(b_1,a_t)$ for all $t=2,3,\ldots,m$, then for
any $1<k\leq m-1$, the sorted order of the list $\lambda_3(b_k,a_t)$ with all
$t=k+1,k+2,\ldots,m$ is also obtained implicitly. Our ``ordering" algorithm
works as follows.

We first explicitly compute the values $\lambda_3(b_1,a_t)$ for
all $t=2,3,\ldots,m$, which takes $O(m)$ time, and then sort them in
$O(m\log m)$ time. Let $p$ be the permutation of $2,3,\ldots,m$ such
that the increasing sorted list of these $\lambda_3(b_1,a_t)$
values is $\lambda_3(b_1,a_{p(1)}),
\lambda_3(b_1,a_{p(2)}),\ldots,\lambda_3(b_1,a_{p(m-1)})$. Note that
the permutation $p$ is immediately available once we obtain the above
sorted list. For any $1<k<m$, we say the element
$\lambda_3(b_k,a_h)$ is {\em valid} if $k+1\leq h\leq m$ and is {\em
undefined} otherwise.
By Lemma \ref{lem:relativeorder}, the valid elements in each list
$\lambda_3(b_k,a_{p(1)}),
\lambda_3(b_k,a_{p(2)}),\ldots,\lambda_3(b_k,a_{p(m-1)})$ are also
sorted increasingly. Further, if we compute $g_1,g_2,\ldots,g_{m-1}$
and $l_1,l_2,\ldots,l_m$ as well as their prefix sums in the preprocessing,
then given the index of any valid element in the list, we can
obtain its actual value in $O(1)$ time. Clearly, the preprocessing
takes $O(n\log n)$ time.  Thus, we have ordered (implicitly) the
elements of $\Lambda'_3$ into $O(m)$ sorted lists and each list has $O(m)$
elements.

However, we are not done yet. Since eventually we will apply the binary
search technique of Lemma \ref{lem:binarysearch} to these sorted lists and
the lists contain undefined elements, the algorithm may take an undefined
element in such a list and use it in the decision procedure which is
the algorithm for Lemma \ref{lem:30}. But, the undefined elements do not
have meaningful values. To resolve
this, we assign (implicitly) to each undefined element an
``appropriate" value, as follows.
For each $1\leq k<m$, let $\calL(k)$ denote the list
$\lambda_3(b_k,a_{p(1)}), \lambda_3(b_k,a_{p(2)}),\ldots,\lambda_3(b_k,a_{p(m-1)})$.
If $1<k<m$, then the list $\calL(k)$ has some undefined elements. For each
undefined element, we (implicitly) assign an actual
value to it such that the resulting new list is still sorted.
The idea is inspired by Lemma \ref{lem:relativeorder}.  We use the
list $\calL(1)$ as the {\em reference list} since all its elements are
valid. Every other list $\calL(k)$ has at least one valid element, for
example, the element $\lambda_3(b_k,a_{k+1})$.
%We determine the index $k'$ of $p$ such that $p(k')=k+1$ for each
%$k=2,3,\ldots,m-1$, which is can be
%easily done in $O(m)$ time after we have $p$. Further,
We compute explicitly the value $\lambda_3(b_k,a_{k+1})$ for each $1<k<m$,
in $O(m)$ time. For a list $\calL(k)$ with $1<k<m$ and any
undefined element $\lambda_3(b_k,a_{p(i)})$ in $\calL(k)$, we assign
to it (implicitly) the value $\lambda_3(b_k,a_{k+1})+\lambda_3(b_1,a_{p(i)})-
\lambda_3(b_1,a_{k+1})$ (note that all these three values have already
been computed explicitly).
The lemma below shows that the resulting new list $\calL(k)$ is still sorted
increasingly with this value assignment scheme.

\begin{lemma}\label{lem:newlist}
For any $1< k<m$, the list $\calL(k)$ is still sorted increasingly after
all its undefined elements are assigned values implicitly.
\end{lemma}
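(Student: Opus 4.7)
The plan is to show that, after the assignment, every entry of $\calL(k)$ admits a single uniform formula, from which monotonicity is immediate. Concretely, define the constant $c_k = \lambda_3(b_k, a_{k+1}) - \lambda_3(b_1, a_{k+1})$, which depends only on $k$. The assignment rule spells out directly that every undefined entry $\lambda_3(b_k, a_{p(i)})$ is set to $c_k + \lambda_3(b_1, a_{p(i)})$. The main step is to check that the valid entries already obey the same identity; once this is established, the lemma follows from the fact that $\calL(1)$ is itself sorted.

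To verify the identity on valid entries, I would invoke Lemma \ref{lem:relativeorder} with the quadruple $(k_1, k_2, h_1, h_2) = (1, k, k+1, p(i))$. Since the entry is valid, $p(i) \geq k+1 > k$, so the hypothesis $\max\{k_1, k_2\} < \min\{h_1, h_2\}$ reduces to $k < \min\{k+1, p(i)\}$, which holds. Lemma \ref{lem:relativeorder} then yields
\[
\lambda_3(b_1, a_{k+1}) - \lambda_3(b_1, a_{p(i)}) \;=\; \lambda_3(b_k, a_{k+1}) - \lambda_3(b_k, a_{p(i)}),
\]
which rearranges to $\lambda_3(b_k, a_{p(i)}) = c_k + \lambda_3(b_1, a_{p(i)})$. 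Hence every entry of the modified list $\calL(k)$, whether originally valid or freshly assigned, equals $c_k + \lambda_3(b_1, a_{p(i)})$.

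With the uniform formula in hand, sorting is immediate: for any $i < j$, the definition of the permutation $p$ gives $\lambda_3(b_1, a_{p(i)}) \leq \lambda_3(b_1, a_{p(j)})$, and adding $c_k$ to both sides preserves the inequality, so the $i$-th entry of the modified $\calL(k)$ is at most the $j$-th entry. This handles all three pairings (valid/valid, undefined/undefined, and mixed) in one shot, which is the whole point of passing through the uniform expression rather than splitting into cases.

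The only potentially delicate point is making sure Lemma \ref{lem:relativeorder} is applicable in the step that equates the valid entries with the assignment formula; I have just checked that the hypothesis reduces to $p(i) > k$, which is exactly the definition of being valid. No serious obstacle is anticipated, and the argument is essentially a one-line algebraic identity dressed up in the notation of the sorted lists.
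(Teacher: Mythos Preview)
Your proof is correct and uses the same key ingredient as the paper --- Lemma~\ref{lem:relativeorder} applied to link $\calL(k)$ to the reference list $\calL(1)$ via the constant shift $c_k$ --- but you organize it more cleanly by establishing the uniform formula $\lambda_3(b_k,a_{p(i)}) = c_k + \lambda_3(b_1,a_{p(i)})$ once for all entries, whereas the paper splits into cases (both valid, one undefined, etc.) and re-derives the same identity in each case. The mathematical content is identical; your presentation just avoids the redundant case analysis.
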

\begin{proof}
Consider any $k$ with $1< k<m$, and any two indices $i$ and $j$ with
$1\leq i< j\leq m-1$. It is sufficient to prove
$\lambda_3(b_k,a_{p(i)})\leq \lambda_3(b_k,a_{p(j)})$.
%If $k=1$, this is trivially true since all values in $\calL(1)$ are
%valid. Below, we assume $k>1$.

If both values are valid, then by Lemma \ref{lem:relativeorder},
the inequality holds.
Otherwise, we assume $\lambda_3(b_k,a_{p(i)})$ is undefined.
After our value assignment, $\lambda_3(b_k,a_{p(i)})=
\lambda_3(b_k,a_{k+1})+\lambda_3(b_1,a_{p(i)})-\lambda_3(b_1,a_{k+1})$.
Depending on whether $\lambda_3(b_k,a_{p(j)})$ is undefined,
there are two cases.

If $\lambda_3(b_k,a_{p(j)})$ is undefined, then we have
$\lambda_3(b_k,a_{p(j)})=
\lambda_3(b_k,a_{k+1})+\lambda_3(b_1,a_{p(j)})-\lambda_3(b_1,a_{k+1})$.
Hence, $\lambda_3(b_k,a_{p(j)})-\lambda_3(b_k,a_{p(i)})=
\lambda_3(b_1,a_{p(j)})-\lambda_3(b_1,a_{p(i)})\geq 0$ due to $j>i$.
If $\lambda_3(b_k,a_{p(j)})$ is valid, then by Lemma \ref{lem:relativeorder},
we have $\lambda_3(b_k,a_{p(j)})=
\lambda_3(b_k,a_{k+1})+\lambda_3(b_1,a_{p(j)})-\lambda_3(b_1,a_{k+1})$.
Thus, $\lambda_3(b_k,a_{p(j)})-\lambda_3(b_k,a_{p(i)})=
\lambda_3(b_1,a_{p(j)})-\lambda_3(b_1,a_{p(i)})\geq 0$.

Therefore, in both cases, we have $\lambda_3(b_k,a_{p(i)})\leq
\lambda_3(b_k,a_{p(j)})$, which proves the lemma.
\end{proof}

In summary, in $O(n\log n)$ time, we have (implicitly) ordered the
elements of $\Lambda'_3$ into $O(m)$ sorted lists and each list has $O(m)$
elements such that every element in any list can be obtained in
$O(1)$ time. Hence, Lemma \ref{lem:sortedlist3} is proved.
We remark that assigning values to the undefined elements in $\Lambda'_3$
as above does not affect the correctness of our algorithm. Assigning
values to undefined elements only makes our candidate set $\Lambda$
for $\lambda^*$ a little larger (by a constant factor), which obviously
does not affect the algorithm
correctness because the larger candidate set still contains $\lambda^*$.
One might also see that the statement of Lemma \ref{lem:sortedlist3}
(and thus Lemma \ref{lem:sortedlist}) is a little imprecise since we
actually ordered only the elements in a subset $\Lambda_3'$ of $\Lambda_3$
(not the entire set $\Lambda_3$).
%However, as discussed above, the key is that the optimal value $\lambda^*$ must
%be in $\Lambda_3'$ if $\lambda^*\in \Lambda_3$.

\subsection{The Special Uniform Case}

In this subsection, we consider the special uniform case in which
all sensors are initially located on the barrier $B=[0,L]$, i.e.,
$0\leq x_i\leq L$ for each $1\leq i\leq n$. We give an
$O(n)$ time algorithm for it. Again, we assume $\lambda^*> 0$.

Clearly, Lemmas \ref{lem:10} and \ref{lem:20} still hold. Further,
since all sensors are initially on $B$,
in case (a) of Lemma \ref{lem:20}, $s_i$ must be $s_1$. To
see this, since $s_1$ is initially located on $B=[0,L]$,
it is always the best to use $s_1$ to cover the
beginning portion of $B$ due to the order preserving property. We omit
the formal proof of this.
Similarly, in case (b) of Lemma \ref{lem:20}, $s_j$ must be $s_n$. We
restate Lemma \ref{lem:20} below as a corollary for this special case.

\begin{corollary}\label{cor:10}
If $\lambda^* > 0$, then in $OPT$, there exist a sequence of consecutive
sensors $s_i,s_{i+1},\ldots,s_j$ with $i\leq j$ such that they
are in attached positions and
one of the following three cases is true. (a) The sensor
$s_j$ is moved to the left by the distance $\lambda^*$, $i=1$, and
$y_1=r$. (b) The sensor
$s_i$ is moved to the right by the distance $\lambda^*$, $j=n$, and
$y_n=L-r$. (c) For $i\not=j$ (i.e., $i<j$), the sensor
$s_i$ is moved to the right by the distance $\lambda^*$ and the sensor
$s_j$ is moved to the left by the distance $\lambda^*$.
% in this case, it is necessary that $i\neq j$.
\end{corollary}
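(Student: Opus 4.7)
The three cases of Lemma~\ref{lem:20} transfer verbatim; the only new content of the corollary is the refinement ``$i=1$'' in case (a) and ``$j=n$'' in case (b). These two refinements are symmetric via the reflection $x\mapsto L-x$, which swaps the hypothesis $x_1\ge 0$ with $x_n\le L$, so I will describe the plan for case (a) only. Case (c) requires no modification.

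Start from an order-preserving optimal solution $OPT$ provided by Lemma~\ref{lem:10}, and let $s_i,\ldots,s_j$ be a witness sequence produced by Lemma~\ref{lem:20}. If this sequence already satisfies case (c) or case (a) with $i=1$, there is nothing to prove. Otherwise, case (a) holds with $i>1$: the sensors $s_i,\ldots,s_j$ are in attached positions covering $[0,2r(j-i+1)]$ with $y_i=r$ and $x_j=y_j+\lambda^*$. Two structural facts will drive the argument: by order preservation $y_1\le y_2\le\cdots\le y_{i-1}\le y_i=r$; and by the special-case hypothesis $x_k\ge 0$ for every $k$, so the feasibility bound $|x_k-y_k|\le\lambda^*$ also forces $y_k\ge -\lambda^*$ for $1\le k\le i-1$.

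The plan is to construct from $OPT$ a new order-preserving optimal solution $OPT'$ whose witness sequence for case (a) is $s_1,\ldots,s_j$. The natural candidate is the rightward shift of the attached block by $2(i-1)r$: set $y_k'=(2k-1)r$ for $1\le k\le j$ and $y_k'=y_k$ for $k>j$. Feasibility of coverage is immediate since $[0,2rj]\supseteq[0,2r(j-i+1)]$, and $s_j$ still witnesses the left-tight condition because its new displacement is $|\lambda^*-2(i-1)r|\le\lambda^*$. The main obstacle is the displacement bookkeeping for the newly included sensors $s_1,\ldots,s_{i-1}$: their new positions $r,3r,\ldots,(2i-3)r$ must each lie within $\lambda^*$ of the corresponding input. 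Combining the structural facts above with the input ordering $x_1\le\cdots\le x_i\le r+\lambda^*$ should yield $|x_k-(2k-1)r|\le\max\{|x_i-r|,\lambda^*\}\le\lambda^*$ for each $1\le k\le i-1$, so no displacement in $OPT'$ exceeds $\lambda^*$. The modified solution then realizes case (a) with $i=1$ and $y_1=r$; case (b) follows by the mirror-image argument, using $x_n\le L$ and $y_n=L-r$.
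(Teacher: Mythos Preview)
Your construction has a real gap in the displacement bookkeeping. When you relocate $s_1,\dots,s_j$ to $y_k'=(2k-1)r$, every sensor in the original attached block $s_i,\dots,s_j$ is shifted to the right by exactly $2(i-1)r$. You verify only $s_j$: its new displacement is $|\lambda^*-2(i-1)r|$, and even the assertion that this is at most $\lambda^*$ already requires $(i-1)r\le\lambda^*$, which you never establish. For the intermediate sensors $s_k$ with $i\le k<j$ the situation is worse: any such $s_k$ that moved to the \emph{right} in $OPT$ (say $y_k-x_k=\delta_k>0$) acquires new displacement $\delta_k+2(i-1)r$, which exceeds $\lambda^*$ as soon as $i>1$ and $\delta_k$ is anywhere near $\lambda^*$. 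Nothing in case~(a) of Lemma~\ref{lem:20} excludes such a sensor. Similarly, for the newly attached sensors $s_1,\dots,s_{i-1}$, the facts $0\le x_k\le r+\lambda^*$ yield only $|x_k-(2k-1)r|\le\max\{(2k-1)r,\lambda^*\}$; the first term dominates whenever $\lambda^*$ is small relative to $r$ (e.g.\ already $|x_2-3r|$ can be as large as $3r$). Hence $OPT'$ need not respect the $\lambda^*$ bound, and the plan collapses. Even if feasibility held, you would still owe an explanation of which sensor in $OPT'$ moves left by \emph{exactly} $\lambda^*$, since your computation shows $s_j$ no longer does.

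For comparison, the paper does not give a formal argument either: it records only the intuition that ``since $s_1$ is initially located on $B=[0,L]$, it is always the best to use $s_1$ to cover the beginning portion of $B$ due to the order preserving property'' and then writes ``We omit the formal proof of this.'' A rigorous proof seems to require a contradiction argument---showing that if case~(a) held with $i>1$ one could strictly improve on $\lambda^*$---rather than the explicit rewriting of $OPT$ that you attempt.
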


For any $1\leq i<j\leq n$, we define $\lambda_3(i,j)$ in the
same way as before, i.e., $\lambda_3(i,j)=[x_j-x_i-2r(j-i)]/2$,
which corresponds to case (c) of Corollary \ref{cor:10}.
For each $1\leq j\leq n$, define
$\lambda_1'(j)=x_j+r-2rj$, which corresponds to case (a).
Similarly, for each $1\leq i\leq n$, define
$\lambda_2'(i)=L-2r(n-i)-(x_i+r)$, which corresponds to case (b).
We still use $\Lambda_3$ to denote the set of all $\lambda_3(i,j)$
values. Define $\Lambda'_1=\{\lambda'_1(j)\ |\ 1\leq j\leq n\}$ and
$\Lambda'_2=\{\lambda'_2(i)\ |\ 1\leq i\leq n\}$. Let
$\Lambda'=\Lambda'_1\cup\Lambda'_2\cup\Lambda_3$.
By Corollary \ref{cor:10}, we have $\lambda^*\in \Lambda'$.
The following lemma is crucial to our algorithm.

\begin{lemma}\label{lem:specialcase}
The optimal value $\lambda^*$ is the maximum value in $\Lambda'$.
\end{lemma}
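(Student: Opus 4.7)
The plan is to establish $\lambda^* = \max\Lambda'$ by proving both inclusions. That $\lambda^* \in \Lambda'$ follows directly from Corollary \ref{cor:10}, since an optimal order-preserving solution must realize case~(a), (b), or~(c) of that corollary, which is exactly one of the values $\lambda_1'(j_0)$, $\lambda_2'(i_0)$, or $\lambda_3(i_0,j_0)$. For the reverse bound $\lambda \le \lambda^*$ for every $\lambda \in \Lambda'$, I will fix an order-preserving $OPT$ (which exists by Lemma \ref{lem:10}) with destinations $y_1 \le y_2 \le \cdots \le y_n$ and argue the three families of candidates in turn.

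For a candidate $\lambda_1'(j) = x_j + r - 2rj$, the key observation is that by order preservation the sensors $s_j,\ldots,s_n$ all have left extensions at least $y_j - r$, so they cannot cover any point strictly less than $y_j-r$. Hence $s_1,\ldots,s_{j-1}$ must cover $[0,\min(y_j-r,L)]$, a region of length at most $2r(j-1)$. In the interesting regime $y_j - r \le L$ this forces $y_j \le 2rj - r$, while the complementary regime makes $\lambda_1'(j) \le 0$ trivially (using $x_j \le L$). Combining $y_j \le 2rj - r$ with the displacement bound $y_j \ge x_j - \lambda^*$ yields $\lambda^* \ge \lambda_1'(j)$. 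A mirror-image argument at the right endpoint of $B$ gives $\lambda_2'(i) \le \lambda^*$ for every $i$.

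The main obstacle is the bound $\lambda_3(i,j) \le \lambda^*$, because the natural proof requires $y_j - y_i \le 2r(j-i)$ in $OPT$, and this can fail whenever some consecutive pair of sensors has a \emph{big gap} $y_{k+1}-y_k > 2r$. My plan is to observe that any such big gap must have its hole $(y_k+r,\, y_{k+1}-r)$ disjoint from $B$ (otherwise a stretch of $B$ would be uncovered, contradicting feasibility), so it lies either in the right-side regime $y_k \ge L - r$ or the symmetric left-side regime $y_{k+1} \le r$. In the right-side regime the tail $s_{k+1},\ldots,s_n$ contributes nothing to covering $B$, so I can reassign $y_l \leftarrow \max(y_{l-1},\, x_l - \lambda^*)$ for $l = k+1,\ldots,n$ without increasing any displacement or affecting the coverage of $B$; the invariant $y_l \ge L - r$ together with $x_l \le L$ then forces each new gap to be at most $r$. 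A mirror operation handles the left-side regime. Once the (possibly modified) $OPT$ has no big gaps, telescoping gives $y_j - y_i \le 2r(j-i)$, and combining with $y_j \ge x_j - \lambda^*$ and $y_i \le x_i + \lambda^*$ produces $x_j - x_i \le 2r(j-i) + 2\lambda^*$, i.e., $\lambda_3(i,j) \le \lambda^*$. The delicate bookkeeping in this last step is handling potentially simultaneous big gaps on the left and on the right, but each is treated locally by its own one-sided sweep.
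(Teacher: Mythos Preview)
Your argument is correct, but it takes a longer detour than the paper's for the $\Lambda_3$ case. Both proofs handle $\Lambda_1'$ and $\Lambda_2'$ by essentially the same covering-length count near the left (resp.\ right) end of $B$. The real divergence is in $\Lambda_3$. The paper does \emph{not} show $\lambda_3(i,j)\le\lambda^*$ for every pair; it only bounds the single maximum $\lambda'\in\Lambda'$, which it already knows is positive. For $\lambda'=\lambda_3(i,j)>0$ with $0\le x_i<x_j\le L$, the interval $[x_i+r,x_j-r]$ lies inside $B$ and has length exceeding $2r(j-i-1)$; since $s_{i+1},\dots,s_{j-1}$ together cover at most $2r(j-i-1)$ of it, at least $2\lambda_3(i,j)$ of measure must be covered by $\{s_1,\dots,s_i\}\cup\{s_j,\dots,s_n\}$. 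Order preservation then caps the contribution of the first block to $[x_i+r,x_j-r]$ by $\max(0,y_i-x_i)\le\lambda^*$ and of the second block by $\max(0,x_j-y_j)\le\lambda^*$, giving $\lambda_3(i,j)\le\lambda^*$ directly---no need to worry about ``big gaps'' or to massage $OPT$. Your route, by contrast, proves the stronger fact that \emph{every} $\lambda_3(i,j)\le\lambda^*$, and the big-gap surgery is exactly the price you pay for that extra generality: once you allow arbitrary $(i,j)$ the telescoping bound $y_j-y_i\le 2r(j-i)$ can genuinely fail in $OPT$, so you must first repair the tails. Your repair (one-sided sweeps reassigning $y_l\leftarrow\max(y_{l-1},x_l-\lambda^*)$ on the right, and the mirror on the left) is sound, and the observation that the left and right regimes are handled by disjoint index ranges is the right way to avoid interference. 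So: your proof is valid and arguably more self-contained, but the paper's measure argument on the fixed interval $[x_i+r,x_j-r]\subset B$ is shorter and sidesteps the big-gap issue entirely.
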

\begin{proof}
Let $\lambda'$ be the maximum value in
$\Lambda'$. It suffices to show
$\lambda^*\leq \lambda'$ and $\lambda'\leq \lambda^*$.
Since $\lambda^*\in \Lambda'$, $\lambda^*\leq
\lambda'$ trivially holds. Below, we focus on proving
$\lambda'\leq \lambda^*$.

Since $\lambda^*>0$, $\lambda'>0$ holds.
Clearly, either $\lambda'\in\Lambda'_1$, or $\lambda'\in\Lambda'_2$, or
$\lambda'\in\Lambda_3$. Below we analyze these three cases.

If $\lambda'\in\Lambda'_1$, then suppose $\lambda'=\lambda_1'(j)$ for some
$j$. Since $\lambda'>0$, we have
$0<\lambda'=\lambda_1'(j)=x_j+r-2rj$, and thus $x_j-r>2r(j-1)$. Since
all sensors are initially on the barrier $B$, $x_j\leq L$ holds.
Hence, even if all sensors $s_1,s_2,\ldots,s_{j-1}$ are somehow
moved such that they are in attached positions to cover the sub-interval
$[0,2r(j-1)]$ of $B$, the sub-interval $[2r(j-1),x_j-r]$ of $B$ is still not
covered by any of the sensors $s_1,s_2,\ldots,s_{j-1}$. By the
order preserving property, to cover the sub-interval $[2r(j-1),x_j-r]$,
the best way is to move $s_j$ to the left such that the new position of $s_j$
is at $2r(j-1)+r$ (i.e., the sensors $s_1,s_2,\ldots,s_{j}$ are in attached
positions), for which the moving distance of $s_j$ is exactly
$\lambda_1'(j)$.
Therefore, the maximum sensor movement in any optimal solution has to
be at least $\lambda_1'(j)$. Thus, $\lambda'=\lambda_1'(j)\leq \lambda^*$.

If $\lambda'\in\Lambda'_2$, then the analysis is symmetric to the above
case and we omit the details.

When $\lambda'\in\Lambda_3$, the analysis has a similar spirit
and we briefly discuss it. Suppose
$\lambda'=\lambda_3(i,j)=[x_j-x_i-2r(j-i)]/2$ for some $i<j$. Since all
sensors are initially on the barrier $B$, we have $0\leq x_i<x_j\leq
L$. Consider the sub-interval $[x_i+r,x_j-r]$ of $B$. Because
$\lambda'>0$, we have $x_j-x_i-2r(j-i)>0$, and thus
$(x_j-r)-(x_i+r)>2r(j-i-1)$. This implies that even if we somehow move the
sensors $s_{i+1},s_{i+2},\ldots,s_{j-1}$ such that they are in
attached positions inside $[x_i+r,x_j-r]$, there are still points in
$[x_i+r,x_j-r]$ that are not covered by the sensors
$s_{i+1},s_{i+2},\ldots,s_{j-1}$. By
the order preserving property, to cover the interval
$[x_i+r,x_j-r]$, we have to use both $s_i$ and $s_j$ and the best way
is to move $s_i$ to the right and move $s_j$ to the left by an equal
distance so that all sensors $s_{i},s_{i+1},\ldots,s_{j}$ are in
attached positions, for which the moving distances of $s_i$ and $s_j$ are
both $\lambda_3(i,j)$ exactly.  Therefore, the maximum
sensor movement in any optimal solution has to be at least
$\lambda_3(i,j)$. Thus, $\lambda'=\lambda_3(i,j)\leq \lambda^*$.

In summary, in any case, $\lambda'\leq \lambda^*$ holds.
The lemma thus follows.
\end{proof}

%We remark that the proof of Lemma \ref{lem:specialcase} requires the
%property that all sensors are Initially located in $B$ and thus a
%similar observation for the general case would not hold.

Base on Lemma \ref{lem:specialcase}, to compute $\lambda^*$,
we only need to find the maximum value in $\Lambda'$, which can be easily
obtained in $O(n^2)$ time by computing the set $\Lambda'$ explicitly
(note that $|\Lambda'|=\Theta(n^2)$). Yet, we show below that we can find
its maximum value in $O(n)$ time without computing $\Lambda'$ explicitly.
%This can be easily done in $O(n)$ time, as discussed in the following Lemma
%\ref{lem:algospecialcase}.

\begin{lemma}\label{lem:algospecialcase}
The maximum value in $\Lambda'$ can be computed in $O(n)$ time.
\end{lemma}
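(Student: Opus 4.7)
The plan is to compute the maxima of $\Lambda_1'$, $\Lambda_2'$, and $\Lambda_3$ separately, each in $O(n)$ time, and then return the largest of the three. The sets $\Lambda_1'$ and $\Lambda_2'$ each contain only $n$ elements whose values $\lambda_1'(j) = x_j + r - 2rj$ and $\lambda_2'(i) = L - 2r(n-i) - (x_i + r)$ are given by explicit closed-form expressions; scanning $j = 1, 2, \ldots, n$ and $i = 1, 2, \ldots, n$ therefore yields $\max \Lambda_1'$ and $\max \Lambda_2'$ in $O(n)$ time.

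The main (and only nontrivial) step is computing $\max \Lambda_3$ without enumerating the $\Theta(n^2)$ pairs. The key algebraic observation I would exploit is that
\[
\lambda_3(i,j) = \frac{x_j - x_i - 2r(j-i)}{2} = \frac{(x_j - 2rj) - (x_i - 2ri)}{2}.
\]
Define $h(k) = x_k - 2rk$ for $1 \le k \le n$. Then for every pair $i < j$ we have $\lambda_3(i,j) = (h(j) - h(i))/2$, so finding the maximum value in $\Lambda_3$ reduces to the classical problem of finding indices $i < j$ maximizing $h(j) - h(i)$.

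This latter problem is solvable in $O(n)$ time by the standard left-to-right sweep: initialize $m \gets h(1)$ and $\mu \gets -\infty$; for each $j$ from $2$ to $n$, update $\mu \gets \max\{\mu, (h(j)-m)/2\}$ and then $m \gets \min\{m, h(j)\}$. At termination $\mu = \max \Lambda_3$. Combining this with the scans of $\Lambda_1'$ and $\Lambda_2'$ and taking the maximum of the three values yields $\max \Lambda'$ in $O(n)$ time, as required. There is no significant obstacle beyond spotting the algebraic rearrangement that turns $\lambda_3(i,j)$ into a difference of one-variable quantities; once that is done the rest is a routine linear scan.
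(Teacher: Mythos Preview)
Your proposal is correct and essentially matches the paper's proof: both handle $\Lambda_1'$ and $\Lambda_2'$ by direct scan and reduce $\max\Lambda_3$ to a classical linear-time problem via an algebraic rearrangement. The only cosmetic difference is that the paper sets $z_t=x_{t+1}-x_t-2r$ and invokes the maximum (contiguous) subsequence sum problem, whereas you set $h(k)=x_k-2rk$ and invoke the maximum-difference-with-order-constraint problem; since $h(j)-h(i)=\sum_{t=i}^{j-1}z_t$, these are the same reduction, and Kadane's algorithm on the $z_t$'s is equivalent to your running-minimum sweep on the $h(k)$'s.
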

\begin{proof}
Let $\lambda_1$, $\lambda_2$, and $\lambda_3$ be the maximum values in
the three sets $\Lambda_1'$, $\Lambda_2'$, and $\Lambda_3$,
respectively. It is sufficient to show how to compute $\lambda_1$, $\lambda_2$,
and $\lambda_3$ in $O(n)$ time.

Both the sets $\Lambda_1'$ and $\Lambda_2'$ can be computed explicitly in
$O(n)$ time. Thus, we can find $\lambda_1$ and $\lambda_2$ in
$O(n)$ time. Below, we focus on computing $\lambda_3$.

Note that for each value $\lambda_3(i,j)\in \Lambda_3$ with $i<j$,
we have $\lambda_3(i,j)=[x_j-x_i-2r(j-i)]/2$. For each $1\leq t\leq
n-1$, define $z_t=x_{t+1}-x_t-2r$. Hence,
$\lambda_3(i,j)=(\sum_{t=i}^{j-1}z_t)/2$.  This implies that finding
the maximum value in $\Lambda_3$ is equivalent to
finding a consecutive subsequence of $z_1,z_2,\ldots,z_{n-1}$
such that the sum of the subsequence is the maximum among all possible
consecutive subsequences, which is an instance of
the well studied {\it maximum subsequence sum problem}.  This problem can
be solved easily in $O(n)$ time. Specifically, we first compute all
values $z_1,z_2,\ldots,z_{n-1}$, in $O(n)$ time. If all values are
negative, then $\lambda_3$ is the maximum value divided by $2$.
Otherwise, we let $z'_0=0$, and for
each $1\leq t\leq n-1$, let $z'_t=\max\{z'_{t-1},0\}+z_t$. It is not
difficult to see that $\lambda_3=\frac{1}{2}\cdot \max_{1\leq t\leq n-1}\{z'_t\}$.
Hence, $\lambda_3$ can be computed in $O(n)$ time.
The lemma thus follows.
\end{proof}

After $\lambda^*$ is computed, we use the linear time decision algorithm
for Lemma \ref{lem:30} to compute the destinations of all sensors such
that the maximum sensor movement is at most $\lambda^*$.
%Hence, we have the following result.

\begin{theorem}
The special uniform case of the \bcls\ problem is solvable in $O(n)$ time.
\end{theorem}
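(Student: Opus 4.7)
The plan is to combine the three results that have just been established and observe that each step is linear, so the whole procedure takes $O(n)$ time.

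First I would handle the trivial case: check in $O(n)$ time whether $\lambda^* = 0$, using the order-preserving property (Lemma \ref{lem:10}) to simply test whether the sensors at their initial positions, in left-to-right order, already cover $[0,L]$. If yes, we are done; otherwise $\lambda^* > 0$ and Corollary \ref{cor:10} applies.

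Next, I would invoke Lemma \ref{lem:specialcase}, which reduces the computation of $\lambda^*$ to finding the maximum value of $\Lambda' = \Lambda'_1 \cup \Lambda'_2 \cup \Lambda_3$. By Lemma \ref{lem:algospecialcase} this maximum can be computed in $O(n)$ time without ever materializing $\Lambda_3$ explicitly (the key trick being the reduction of the $\Lambda_3$ part to a maximum subsequence sum problem on the $z_t$ values). At this point we know the optimal value $\lambda^*$ in $O(n)$ time.

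Finally, having $\lambda^*$ in hand, I would feed it to the $O(n)$ decision algorithm of Lemma \ref{lem:30} to produce an actual feasible placement $\{y_1,\ldots,y_n\}$ achieving maximum displacement $\lambda^*$. Summing the three phases gives $O(n)$ total time, which proves the theorem.

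There is essentially no obstacle here, because all of the hard work has already been done in Lemmas \ref{lem:specialcase}, \ref{lem:algospecialcase}, and \ref{lem:30}; the only thing to check is that these three pieces chain together correctly and each runs in linear time, which is immediate. The main subtlety worth pointing out in the proof is simply that Lemma \ref{lem:algospecialcase} bypasses the $\Theta(n^2)$ size of $\Lambda_3$ by exploiting the telescoping structure $\lambda_3(i,j) = \tfrac{1}{2}\sum_{t=i}^{j-1} z_t$, and so we never pay more than linear time even though the candidate set is quadratic.
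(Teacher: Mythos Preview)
Your proposal is correct and follows essentially the same approach as the paper: check $\lambda^*=0$ in linear time, then combine Lemma~\ref{lem:specialcase} with Lemma~\ref{lem:algospecialcase} to obtain $\lambda^*$ in $O(n)$ time, and finally apply the decision algorithm of Lemma~\ref{lem:30} to recover the actual destinations.
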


\section{The Simple Cycle Barrier Coverage}
\label{sec:circle}

In this section, we discuss the simple cycle barrier coverage problem
and present an $O(n)$ time algorithm for it.
Mehrandish \cite{ref:MehrandishOn11} gave an $O(n^2)$ time algorithm by somehow
generalizing the $O(n^2)$ time algorithm \cite{ref:CzyzowiczOn09} for the uniform
\bcls.

In this problem, the target region $R$ is on the plane
enclosed by a simple cycle $B$ that is the barrier we aim to cover.
The sensors in $S=\{s_1,s_2,\ldots,s_n\}$ are initially located on $B$ and
each sensor is allowed to
move only on $B$ (e.g., not allowed to move inside or outside $R$). All
sensors in $S$ have the same range $r$. Here, the {\em distance} between any
two points on $B$ is not measured by their Euclidean distance in the plane but by
their shortest distance along $B$. If a sensor is at a point
$p$ on $B$, then it covers all points of $B$ whose distances to $p$ are
at most $r$. Suppose all sensors in $S$ are initially ordered clockwise on
$B$ as specified by their indices.
Our goal is to move the sensors along $B$ to form a coverage of $B$
such that the maximum sensor movement is minimized.

Since $B$ is a cycle here, a sensor is said to
move clockwise or counterclockwise (instead of right or left).
Let $L$ be the length of $B$. Again, we assume $L\leq 2nr$
(otherwise, it would not be possible to form a coverage of $B$).
Since $B$ is a cycle, if $L\leq 2r$, then every sensor by itself forms a
coverage of $B$. Below, we assume $L>2r$. Imagine that we pick a
point $p_0$ on the interval of $B$ from $s_n$
clockwise to $s_1$ as the {\it origin} of $B$, and define
the {\em coordinate} of each point $p \in B$ as the distance traversed as
we move from $p_0$ to $p$ clockwise along $B$. Let the input coordinate
of each sensor $s_i\in S$ be $x_i$. Thus, we have $0<x_1\leq x_2\leq
\cdots\leq x_n<L$. Further, for each $1\leq
i\leq n$, we let $s_{i+n}$ denote a {\it duplication} of the sensor $s_i$
with a {\em coordinate} $x_{i+n}=x_i+L$, which actually refers to the
position on $B$ with the coordinate $x_i$.

Since all sensors have the same range, it is easy to see that there
always exists an {\em order preserving} optimal solution $OPT$ in which
the sensors are ordered clockwise along $B$ in the same order as that of their
input indices. A formal proof for this is given in \cite{ref:MehrandishOn11}.
Again, let $\lambda^*$ be the optimal moving distance.  We
can check whether $\lambda^*=0$ in $O(n)$ time. Below, we assume $\lambda^*>0$.

Actually, our algorithm considers a set of $2n$ sensors,
$S'=\{s_1,s_2,\ldots,s_{2n}\}$. Specifically, the algorithm determines a
consecutive sequence of sensors, $S_{ij}'=\{s_i,s_{i+1},\ldots,s_j\}\subset S'$
with $1\leq i< j< i+n$, and moves the sensors of $S_{ij}'$ to form a barrier
coverage of $B$ such that the maximum sensor movement is minimized. Clearly,
for each sensor $s_k\in S$, at most one of $s_k$ and its duplication
$s_{k+n}$ is in $S_{ij}'$.
In this simple cycle case,
the definition of {\em attached positions} of the sensors is
slightly different from that in the line segment case.
In this case, we call the two endpoints of the covering interval of
a sensor $s_i$ its {\em counterclockwise} and {\em clockwise extensions},
such that when going clockwise from the counterclockwise extension to the
clockwise extension, we move along the covering interval of $s_i$.
One sensor is always in attached position by itself.  Two
sensors are in {\em attached positions} if the clockwise
extension of one sensor is at the same position as the counterclockwise
extension of the other sensor. Note that unlike in the line
segment case, if two sensors
$s_i$ and $s_j$ are in attached positions, say, the clockwise
extension of $s_i$ is at the same position as the counterclockwise extension
of $s_j$, then since the sensors are on the cycle, it is
possible that the clockwise extension of $s_j$ is in the interior of the
covering interval of $s_i$ (e.g., when $L<4r$).
Similarly, a sequence of sensors $s_i,s_{i+1},\ldots,s_j$ (with $1\leq
i<j<i+n$)
are in {\em attached positions} if the
clockwise extension of $s_t$ is at the same position as the
counterclockwise extension of $s_{t+1}$ for each $i\leq t\leq j-1$ (and
the clockwise extension of $s_j$ may be in the interior of the covering
interval of $s_i$). The next result is a corollary of Lemma \ref{lem:20}.
%(similar results
%have also been discovered by Mehrandish \cite{ref:MehrandishOn11}).

\begin{corollary}\label{cor:circular}
If $\lambda^*> 0$, then in $OPT$, there exist
a sequence of sensors $s_i,s_{i+1},\ldots,s_j$ in $S'$
with $1\leq i<j<i+n$ such that they are in attached positions and the sensor
$s_i$ is moved clockwise by the distance $\lambda^*$ and the sensor
$s_j$ is moved counterclockwise by the distance $\lambda^*$.
\end{corollary}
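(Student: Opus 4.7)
The plan is to mimic the proof of Lemma~\ref{lem:20}, exploiting the fact that a cycle has no boundary to eliminate the analogues of cases (a) and (b) of that lemma, leaving case (c) as the only possibility. First, fix any order-preserving optimal solution $OPT$. I claim that since $\lambda^*>0$, $OPT$ must contain both a sensor moved clockwise by exactly $\lambda^*$ and a sensor moved counterclockwise by exactly $\lambda^*$: if every sensor that moves by $\lambda^*$ moves in the same direction, say clockwise, then rigidly rotating the whole configuration counterclockwise by a small $\epsilon>0$ keeps every displacement strictly below $\lambda^*$, contradicting optimality. Fix, once and for all, a sensor $s_j$ moved counterclockwise by exactly $\lambda^*$, indexed in $S'$ so that the argument can refer to sensors ``counterclockwise of $s_j$'' without wrap-around bookkeeping.

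Next, extend from $s_j$ counterclockwise through consecutive sensors that are pairwise in attached positions, producing the maximal such chain $s_I, s_{I+1}, \ldots, s_j$ with $1\le I\le j<I+n$. The key claim is that some sensor $s_k$ with $I\le k<j$ is moved clockwise by exactly $\lambda^*$. Taking the largest such $k$ and setting $i:=k$, the sub-chain $s_i,s_{i+1},\ldots,s_j$ inherits attached positions from the larger chain and directly witnesses the corollary, so it remains to establish the claim.

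For the claim, suppose for contradiction that no sensor in $\{s_I,\ldots,s_j\}$ is moved clockwise by exactly $\lambda^*$. Then, writing $\delta_k$ for the signed clockwise displacement of $s_k$, we have $\delta_k<\lambda^*$ for each $I\le k\le j$, while $\delta_j=-\lambda^*$. Rigidly shift the whole chain clockwise by a sufficiently small $\epsilon>0$: each $\delta_k$ becomes $\delta_k+\epsilon$, and for small enough $\epsilon$ every resulting magnitude is strictly less than $\lambda^*$; in particular $s_j$'s displacement drops from $\lambda^*$ to $\lambda^*-\epsilon$. Coverage is preserved: on the ccw side, maximality of the chain means $s_{I-1}$ and $s_I$ are not in attached positions, and since $OPT$ is a valid coverage they cannot be separated by a gap, so they strictly overlap by some $\alpha>0$; any shift with $\epsilon<\alpha$ preserves continuity of coverage at the ccw boundary of the chain's arc. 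On the cw side, shifting $s_j$ clockwise only enlarges (or creates) its overlap with $s_{j+1}$, again preserving coverage. Thus the shifted configuration is a feasible cover of $B$ with maximum displacement strictly less than $\lambda^*$, contradicting the optimality of $OPT$.

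The main obstacle is the degenerate subcase in which the maximal chain wraps around $B$ and contains all $n$ distinct sensors; then there is no $s_{I-1}$ outside the chain, and the overlap argument above must be rephrased. In that subcase the chain itself already covers the whole cycle (possibly with overlap), so the rigid ``shift'' is really a rotation of the entire coverage, which preserves coverage trivially; the displacement bookkeeping goes through unchanged to yield the same contradiction. Finally, the indexing conditions $i<j$ and $j<i+n$ required by the statement follow from $\lambda^*>0$ (which forces $s_i\ne s_j$) and from the chain containing at most $n$ distinct sensors, so in $S'$ it spans fewer than $n+1$ consecutive indices.
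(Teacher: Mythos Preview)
The paper gives no argument beyond the sentence ``The next result is a corollary of Lemma~\ref{lem:20},'' so your proposal is already doing more than the paper does, and your overall strategy---a cycle has no endpoints, so the analogues of cases (a) and (b) of Lemma~\ref{lem:20} cannot occur and only case (c) survives---is the right one.

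There is, however, a genuine gap in the contradiction step. After shifting the chain $\{s_I,\ldots,s_j\}$ clockwise by $\epsilon$ you conclude that ``the shifted configuration is a feasible cover of $B$ with maximum displacement strictly less than $\lambda^*$.'' This does not follow: the shift changes only the displacements of sensors \emph{inside} the chain, and some sensor \emph{outside} the chain may still have displacement exactly $\lambda^*$. Concretely, consider an order-preserving $OPT$ in which $s_1$ moves clockwise by $\lambda^*$, $s_2$ moves counterclockwise by $\lambda^*$, $s_1$ and $s_2$ are in attached positions, while separately $s_3$ also moves counterclockwise by $\lambda^*$ but $s_2$ and $s_3$ strictly overlap (not attached). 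If you happen to fix $s_j=s_3$, the maximal attached chain extending counterclockwise from $s_3$ does not reach $s_1$; your shift reduces $s_3$'s displacement, yet $s_1$ and $s_2$ remain at $\lambda^*$---no contradiction. In fact your intermediate claim (that the chain through the chosen $s_j$ must contain a clockwise-$\lambda^*$ sensor) is simply false for this choice of $s_j$, even though the corollary itself holds via the chain $s_1,s_2$.

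The standard remedy is to sharpen the choice of $OPT$: among all order-preserving optimal solutions, fix one that additionally minimizes the number of sensors whose displacement equals $\lambda^*$. Your shift then yields another order-preserving optimal solution (its maximum displacement is still $\lambda^*$, coming from sensors outside the chain) with strictly fewer sensors attaining that maximum, contradicting the secondary minimality. With that one added hypothesis on $OPT$, the rest of your argument goes through unchanged.
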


For each pair of $i$ and $j$ with $1\leq i<j< i+n$, we define
$\lambda(i,j)=[x_j-x_i-2r(j-i)]/2$. Let $\Lambda$ be the set of all
such $\lambda(i,j)$ values. By Corollary \ref{cor:circular},
$\lambda^*\in\Lambda$.
The following result is similar to that of Lemma \ref{lem:specialcase}.

\begin{lemma}\label{lem:circularcase}
The optimal value $\lambda^*$ is the maximum value in $\Lambda$.
\end{lemma}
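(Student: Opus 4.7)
The plan is to mimic the proof of Lemma \ref{lem:specialcase}, adapted from the line-segment geometry to the cycle. Let $\lambda'$ denote the maximum value in $\Lambda$. The easy direction $\lambda^*\leq \lambda'$ is immediate from Corollary \ref{cor:circular}, which guarantees $\lambda^*\in \Lambda$. For the reverse direction $\lambda'\leq \lambda^*$, it suffices to show $\lambda(i,j)\leq \lambda^*$ for every pair $(i,j)$ with $1\leq i<j<i+n$. If $\lambda(i,j)\leq 0$ the bound is trivial because $\lambda^*>0$, so the interesting case is $\lambda(i,j)>0$, equivalently $x_j-x_i>2r(j-i)$.

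Fix an order-preserving optimal solution $OPT$ and consider the $j-i+1$ sensors indexed $i,i+1,\ldots,j$ (using the convention that $s_{k+n}$ is the duplicate of $s_k$ at coordinate $x_k+L$). Since $j-i<n$ these correspond to distinct sensors, and by order preservation they occur in this clockwise cyclic order on $B$. The plan is to unroll $B$ to its universal cover and choose lifts $y_i\leq y_{i+1}\leq \cdots\leq y_j$ of their $OPT$ positions such that $|y_k-x_k|\leq \lambda^*$ on the lift for each $k\in\{i,\ldots,j\}$. Existence of such a consistent lift will follow from order preservation together with the harmless assumption $\lambda^*<L/2$ (the opposite case being checked directly, since then nearly the full cycle length $L$ must be bounded by $2nr$, which already forces a large lower bound on $\lambda^*$).

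The crux will then be a gap inequality: $y_{t+1}-y_t\leq 2r$ for every $i\leq t<j$. Indeed, by order preservation no sensor other than $s_t$ and $s_{t+1}$ has its $OPT$ position on the cyclic arc strictly between $y_t$ and $y_{t+1}$; so if $y_{t+1}-y_t>2r$, the sub-arc of $B$ corresponding to $(y_t+r,\,y_{t+1}-r)$ would be covered by no sensor at all, contradicting that $OPT$ is a barrier coverage. Telescoping yields $y_j-y_i\leq 2r(j-i)$. Combining this with the displacement bounds $y_i\leq x_i+\lambda^*$ and $y_j\geq x_j-\lambda^*$ on the lift gives $x_j-x_i-2\lambda^*\leq 2r(j-i)$, which rearranges to $\lambda^*\geq \lambda(i,j)$, as required.

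The main obstacle is the lifting step, which has no counterpart in the line-segment setting of Lemma \ref{lem:specialcase}: one must verify that a single choice of lifts can simultaneously realize monotone ordering of the $y_k$ on the universal cover and the per-sensor cycle-distance displacement bounds $|y_k-x_k|\leq \lambda^*$. Once this cycle-specific technicality is handled, the rest of the argument is a direct translation of the Case (c) analysis inside the proof of Lemma \ref{lem:specialcase}: a gap estimate from the coverage requirement, combined with the triangle-type inequality from the displacement constraints on $s_i$ and $s_j$.
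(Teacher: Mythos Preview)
Your approach is correct in outline but takes a more circuitous route than the paper, and the lifting step needs more care than you indicate. The paper argues directly on the cycle with no lifting at all: since the clockwise arc from $x_i+r$ to $x_j-r$ has length $(x_j-x_i)-2r > 2r(j-i-1)$, the $j-i-1$ sensors $s_{i+1},\ldots,s_{j-1}$ cannot cover it by themselves; order preservation then forces $s_i$ and $s_j$ alone to close the remaining portion, and balancing their contributions gives $\lambda(i,j)$ as a lower bound on $\lambda^*$. This is essentially your gap-and-telescope argument collapsed into a single geometric step, without ever leaving the cycle.

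Regarding the lifting: order preservation together with $\lambda^*<L/2$ is not by itself enough to guarantee monotone lifts satisfying the displacement bound. If you take the unique lift $\tilde{y}_k$ with $|\tilde{y}_k-x_k|\leq\lambda^*$, monotonicity can fail precisely when some clockwise gap $c_t$ (from $y_t^{OPT}$ to $y_{t+1}^{OPT}$ on the cycle) exceeds $L-2\lambda^*$, and cyclic order preservation alone does not forbid one large gap. What actually saves you is the coverage requirement, which gives $c_t\leq 2r$; combined with $\lambda^*<(L-2r)/2$ (immediate from $\lambda^*\in\Lambda$, since $x_j-x_i<L$ and $j-i\geq 1$) one gets $c_t\leq 2r < L-2\lambda^*$, ruling out the bad case. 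So the lifting does work, but its justification already relies on the gap bound you intended to derive afterward. The clean fix is to establish $c_t\leq 2r$ directly on the cycle first (your coverage argument does exactly this) and only then lift; once reorganized this way your proof is complete. The paper's version simply avoids the detour.
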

\begin{proof}
The proof is very similar to that for Lemma \ref{lem:specialcase} and
we briefly discuss it below.

Let $\lambda'$ be the maximum value in $\Lambda$. It is sufficient to show
$\lambda^*\leq \lambda'$ and $\lambda'\leq \lambda^*$.
Due to $\lambda^*\in \Lambda$, $\lambda^*\leq \lambda'$ trivially holds.
Hence, we focus on proving $\lambda'\leq \lambda^*$.
Since $\lambda^*> 0$, we have $\lambda'>0$.

Suppose $\lambda'=\lambda(i,j)=[x_j-x_i-2r(j-i)]/2$ for some $i$ and
$j$ with $1\leq i<j<i+n$.
Consider the clockwise interval $[x_i+r,x_j-r]$ on $B$, i.e., the
union of the points on $B$ from $x_i+r$ to $x_j-r$ clockwise. Since
$\lambda'>0$, we have $x_j-x_i-2r(j-i)>0$, and thus
$(x_j-r)-(x_i+r)>2r(j-i-1)$. This implies that even if we somehow move the
sensors $s_{i+1},s_{i+2},\ldots,s_{j-1}$ such that they are in
attached positions inside $[x_i+r,x_j-r]$, there are still points in
$[x_i+r,x_j-r]$ that are not covered by the sensors
$s_{i+1},s_{i+2},\ldots,s_{j-1}$. By the order
preserving property, to cover the interval
$[x_i+r,x_j-r]$ on $B$, the best way is to move
$s_i$ clockwise and move $s_j$ counterclockwise by the same
distance, for which the moving distances of $s_i$ and $s_j$ are both
$\lambda(i,j)$ exactly.  Therefore, the maximum
sensor movement in any optimal solution has to be at least
$\lambda(i,j)$. Thus, $\lambda'=\lambda(i,j)\leq \lambda^*$.

The lemma thus follows.
\end{proof}

By using the same algorithm for Lemma \ref{lem:algospecialcase}, we can find
$\lambda^*$ in $\Lambda$ in $O(n)$ time. With the value $\lambda^*$,
we can then easily compute an optimal solution (i.e., compute the
destinations of all sensors) in $O(n)$ time, as follows.

Suppose $\lambda^*=\lambda(i,j)\in\Lambda$ for some $i$ and $j$ with
$1\leq i<j<i+n$. In the
case of $i>n$, we have $j>n$ and let $i=i-n$ and $j=j-n$. Thus, we
still have $\lambda^*=\lambda(i,j)$ since
$\lambda(i,j)=\lambda(i-n,j-n)$ when $i>n$ and $j>n$. Below, we assume
$1\leq i\leq n$. Note that $j>n$ is possible.

First, we move $s_i$ clockwise by the distance $\lambda^*$ and
move $s_j$ counterclockwise by the same distance $\lambda^*$.
Next, move all sensors $s_{i+1},s_{i+2},\ldots,s_{j-1}$
such that the sensors $s_i,s_{i+1},\ldots,s_{j}$ are in attached
positions. Since $\lambda^*$ is the maximum value in $\Lambda$
by Lemma \ref{lem:circularcase}, the
above movements of the sensors $s_{i+1},s_{i+2},\ldots,s_{j-1}$ are at
most $\lambda^*$.
Then, starting at the sensor $s_{j+1}$, we consider the other
sensors $s_{j+1},s_{j+2},\ldots,s_{i-1}$ of $S$ clockwise along $B$,
and move them to cover the portion
of $B$ that is not covered by the sensors
$s_i,s_{i+1},\ldots,s_j$. For this, we can view the remaining uncovered
portion of $B$ as a line segment and apply the linear time greedy
algorithm for Lemma \ref{lem:30} with the value $\lambda^*$. The overall running
time is $O(n)$.

\begin{theorem}
The simple cycle barrier coverage problem is solvable in $O(n)$ time.
\end{theorem}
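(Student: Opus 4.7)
The plan is to assemble the theorem from the pieces already in place: Lemma \ref{lem:circularcase} identifies $\lambda^*$ as the maximum element of the candidate set $\Lambda$, and the remaining work is to (i) compute this maximum in $O(n)$ time and (ii) construct an explicit optimal configuration in $O(n)$ additional time. For step (i), I would reduce the computation to a maximum subsequence sum problem in exactly the style of Lemma \ref{lem:algospecialcase}. Setting $z_t = x_{t+1}-x_t-2r$, each candidate $\lambda(i,j) = (x_j - x_i - 2r(j-i))/2$ with $1\leq i<j<i+n$ is half the sum of a consecutive run $z_i,z_{i+1},\ldots,z_{j-1}$ of length less than $n$; since $s_{k+n}$ is a duplicate of $s_k$ with coordinate shifted by $L$, one gets $z_{k+n}=z_k$ for the wrap-around indices. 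The maximum subsequence sum among all consecutive runs of length $\le n-1$ in the (cyclic) sequence $z_1,\ldots,z_{n-1},z_n,\ldots,z_{2n-1}$ can be found by the standard scan in $O(n)$ time, yielding $\lambda^*$ together with a witnessing pair $(i,j)$.

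For step (ii), I would then construct a feasible solution realizing $\lambda^*$. Having the witness $(i,j)$, I first move $s_i$ clockwise by $\lambda^*$ and $s_j$ counterclockwise by $\lambda^*$, and place $s_{i+1},\ldots,s_{j-1}$ into attached positions between them; the fact that $\lambda^*$ is the \emph{maximum} element of $\Lambda$ forces every intermediate sensor's displacement to be at most $\lambda^*$, because otherwise some sub-pair $(i',j')$ with $i\leq i'<j'\leq j$ would witness a larger candidate value, contradicting maximality. After this placement, the sensors $s_i,\ldots,s_j$ cover a clockwise arc of $B$ of length $2r(j-i+1)$, and the remaining uncovered arc is a proper subarc since $L>2r$. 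I would then ``cut'' $B$ open along a point inside the placed sensors' coverage, turn the remaining uncovered arc into a line segment, and apply the decision algorithm of Lemma \ref{lem:30} on the remaining sensors $s_{j+1},\ldots,s_{i-1+n}$ with parameter $\lambda^*$; the existence of $OPT$ guarantees this returns a valid placement in $O(n)$ time.

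The main obstacle I anticipate is justifying that the max-subsequence-sum trick correctly handles the cyclic index range $1\le i<j<i+n$ (as opposed to the purely linear range in Lemma \ref{lem:algospecialcase}). The right argument is that one only needs to consider runs of length at most $n-1$ inside the doubled sequence $z_1,\ldots,z_{2n-1}$, which the scan achieves by restricting the running prefix window to at most $n-1$ terms (or, equivalently, by resetting the Kadane accumulator whenever its window length reaches $n-1$). A secondary, smaller subtlety is verifying that the displacements of $s_{i+1},\ldots,s_{j-1}$ in the attached-positions placement are all bounded by $\lambda^*$; this follows from the maximality argument sketched above, exactly mirroring the reasoning used in the proof of Lemma \ref{lem:circularcase} and Lemma \ref{lem:specialcase}. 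Once these two points are settled, summing the $O(n)$ costs of computing $\lambda^*$, placing the witness block, and running the linear-time decision algorithm on the remaining arc yields the claimed $O(n)$ total running time.
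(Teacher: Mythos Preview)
Your proposal is correct and follows essentially the same route as the paper: identify $\lambda^*=\max\Lambda$ via the maximum-subsequence-sum reduction of Lemma~\ref{lem:algospecialcase}, then place the witness block $s_i,\ldots,s_j$ in attached positions and apply the decision algorithm of Lemma~\ref{lem:30} on the remaining arc. The cyclic obstacle you flag is in fact harmless: since $\sum_{t=a}^{a+n-1}z_t = x_{a+n}-x_a-2nr = L-2nr\le 0$ for every $a$, any positive-sum run in the doubled sequence of length $\ge n$ can be shortened by deleting its first $n$ terms without decreasing the sum, so ordinary Kadane on $z_1,\ldots,z_{2n-1}$ already returns the correct value together with a valid witness of length at most $n-1$; no window restriction is needed (and your parenthetical ``reset the accumulator when the window reaches $n-1$'' would not actually solve the length-constrained variant correctly, so it is fortunate that it is unnecessary).
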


Note that the case $L<2nr$ is also discussed in \cite{ref:MehrandishOn11},
where an issue of {\em balance points} appears. In the case
$L\geq 2nr$ that we consider, the issue does not exist because the
entire cycle $B$ must be covered by the sensors in the optimal solution.

\section{Conclusions}
\label{sec:conclusions}

We present several algorithms on minimizing the maximum
sensor movement for barrier coverage on linear domains. We
present the first-known polygonal time algorithm for the problem where
the barrier is a line segment and the sensors have different sensing
ranges, and the algorithm runs in $O(n^2\log n)$ time. If the sensing
ranges are the same, we give an $O(n\log n)$ time solution, and
further if the sensors are initially located on the barrier segment, then the
algorithm runs in $O(n)$ time. In addition, if the barrier is a simple
cycle and the sensing ranges are the same, our approach can solve the
problem in $O(n)$ time. Each of these results either is first-known or
improves the previous best-known algorithms.

An interesting question is whether our $O(n^2\log n)$ time algorithm
can be improved. One possible direction, as we did for the uniform
case in Section \ref{sec:uniform}, is to try to determine a set
$\Lambda$ of candidate values such that $\lambda^*\in \Lambda$, and
then use our decision algorithms given in Section \ref{sec:decision}
to find $\lambda^*$ in $\Lambda$. It would also be interesting to see
whether our techniques can be extended to solve the non-uniform simple
cycle case where the sensing ranges are different. In addition,
since when all sensors are initially located on the barrier
segment the uniform case can be solved faster, it is natural to ask
whether our $O(n^2\log n)$ time algorithm for the non-uniform case can be
made faster if all sensors are initially located on the barrier
segment.

%\section{Acknowledgments}
%
%The authors would like to thank an anonymous reviewer for helping
%improve the presentation of the paper.

\footnotesize
\baselineskip=11.0pt
\bibliographystyle{plain}
\bibliography{reference}

%add appendix below
%\newpage
%\normalsize
%\appendix
%\section*{Appendix}

\end{document}